\newtheorem{theorem}{Theorem}[section]
\numberwithin{theorem}{section}
\newtheorem{lemma}[theorem]{Lemma}
\newtheorem{definition}[theorem]{Definition}
\newtheorem{corollary}[theorem]{Corollary}
\newtheorem{prop}[theorem]{Proposition}
\newtheorem{remark}[theorem]{Remark}
\newtheorem{example}[theorem]{Example}
\DeclareMathOperator{\tr}{Tr}
\DeclareMathOperator{\nr}{Nr}
\newcommand{\Sum}{\displaystyle \sum}
\newcommand{\F}{\mathbb F}
\newcommand{\bit}{\epsilon}
\newcommand{\Fbn}{\mathbb{F}_{2^n}}
\newcommand{\Fbnn}{\mathbb{F}_{2^n} \longrightarrow \mathbb{F}_{2^n}}
\newcommand{\gm}{\gamma}
\newcommand{\cl}{\mathcal{L}}
\begin{document}

    \title{On non-monomial APcN permutations over \\
    finite fields of even characteristic}
    \author{Jaeseong Jeong$^1$, Namhun Koo$^2$, Soonhak Kwon$^1$\\
        \small{\texttt{ Email: wotjd012321@naver.com, nhkoo@ewha.ac.kr, shkwon@skku.edu}}\\
        \small{$^1$Applied Algebra and Optimization Research Center, Sungkyunkwan University, Suwon, Korea}\\
        \small{$^2$Institute of Mathematical Sciences, Ewha Womans University, Seoul, Korea}\\
    }

      \date{}
    \maketitle
    \begin{abstract}
    \vspace{0.5em}
Recently, a new concept called the $c$-differential uniformity was
proposed by Ellingsen et al. (2020), which allows to simplify some
types of differential cryptanalysis. Since then, finding functions
having low $c$-differential uniformity has attracted the attention
of many researchers. However it seems that, at this moment, there
are not many non-monomial permutations having low $c$-differential
uniformity. In this paper, we present new classes of (almost)
perfect
 $c$-nonlinear non-monomial permutations over a binary field.
\end{abstract}
\bigskip
\noindent \textbf{Keywords.} $c$-Differential Uniformity, Differential Uniformity, Permutation

\bigskip
\noindent \textbf{Mathematics Subject Classification(2020)} 94A60, 06E30
\section{Introduction}

In block ciphers, S-boxes with good cryptographic properties play
important roles to resist several attacks. Recently, Ellingsen et.
al. \cite{EFR+20} proposed a new cryptographic tool,
$c$-Difference Distribution Table (and the corresponding
$c$-differential uniformity), which allows to simplify a new type
differential cryptanalysis proposed in  \cite{BCJW02}. In
\cite{EFR+20}, the $c$-differential uniformities of known PN
functions and the multiplicative inverse function were
investigated. Independently about the same time,  Bartoli and
Timpanella \cite{BT20} introduced the notion of $\beta$-planar
functions which are in fact PcN functions.

After the notion of $c$-differential uniformity was proposed,
finding functions with low $c$-differential uniformity has
attracted the attention of many researchers. In \cite{HPR+21}, the
authors proposed PcN power functions for $c=-1$ in odd
characteristic.
%Many power functions with low $c$-differential uniformity were
%also investigated in \cite{MRS+21,WZ21,ZH21}.
%In \cite{Yan22} Yan proposed some APcN functions with $c=-1$ when
%$p=3$.
% Bartoli et. al.\cite{BCRS22} studied
%the $c$-differential uniformity of piecewise defined functions.
Several PcN power functions and functions with low
$c$-differential uniformity were studied in
\cite{MRS+21,WZ21,ZH21}.    In \cite{Yan22}, $c$-differential
uniformity of ternary APN power functions over a cubic field is
studied. Many classes of PcN and APcN functions were proposed in
\cite{WLZ21} by using the cyclotomic technique, switching method
and AGW criterion. Very recently,  several classes of PcN and APcN
functions were proposed in \cite{LRS22} by using generalized
Dillon's switching method.  In  \cite{BCRS22},  the authors
studied  $c$-differential uniformity of  some piecewise defined
functions.

 For a binary field, a new class of  APcN power
functions was proposed in  \cite{TZJT21}.  Also, in \cite{HPS+22},
the authors studied $c$-differential and boomerang uniformities of
two classes of permutation polynomials over a binary field.
 In Table 1, we summarize the
known PcN and APcN permutations over a binary field. In Table 2,
we give a list (though may not be complete) of PcN and APcN
functions over general finite fields, which are constructed from
other PcN or APcN functions.

Since permutations are very useful in constructing invertible
S-boxes, it is desirable to find permutations having good
cryptographic properties such as low $c$-differential uniformity.
However it seems that there are not many non-monomial APcN
permutations over a binary field, except for the ones in
\cite{HPS+22, WLZ21}. In this paper,  we present three classes of
non-monomial APcN permutations and a class of non-monomial PcN
permutations over $\Fbn$.

%By investigating the solution set of the corresponding equations,
%we can calculate the $c$-differential uniformity of new
%permutations.

The rest of this paper is organized as follows. In Section $2$, we
give some notations and preliminaries. In Section $3$, we present
a new class of cubic APcN permutations. In Section $4$,  we show
that the compositional inverse of the APcN permutation proposed in
Section $3$ is also APcN. We also study the $c$-differential
uniformity of the involution proposed in Section $3$ and it's
generalization. Finally, in Section $5$, we give the concluding
remarks.

\begin{table}[h]
   % \centering
    \caption{The known PcN and APcN functions over  $\Fbn$}
    \renewcommand{\arraystretch}{1.3}
      \footnotesize
    \begin{tabular}{|c|c|c|c|c|}
        \hline
        $F(x)$ &  $_c\Delta_F$ & Permutation & Conditions & References \\
        \hline\hline
        $x^{2^n-2}$ &    $2$    &  $\bigcirc$  & $\tr(c) = \tr(1/c) = 1$, $c \notin \F_{2}$ &\cite{EFR+20}\\
           \hline
      $x+\tr(\alpha x+x^{2^i+1})$ &    $\leq 2$    &  $\bigcirc$  & $\tr(\alpha)=1$, $n>1$ is odd, $\gcd(n,i)=1$, $c \notin \F_{2}$ & \cite{HPS+22}\\
        \hline
       $x^{2^i+1} $ &    $1$    &  $\bigcirc$   & $\frac{n}{d}>1$ is odd, $d=\gcd(n,i)$, $c \in \F_{2^d}\setminus \F_2$   & \cite{MRS+21}\\
        \hline
        $x^{q^3+q^2+q-1}$ & 2 & $\bigcirc$ & $q = 2^i$, $n = q^4$, $c \in \mu_{q^2+1} \setminus \{1\}$ & \cite{TZJT21}\\
\hline
        \multirow{2}*{$(x+\tr(x^{2^i+1}))^{2^i+1}$} &  \multirow{2}*{2} &  \multirow{2}*{$\bigcirc$} & $d=\gcd(n,i)$ is even,
         &  \multirow{2}*{Theorem \ref{maintheorem}}\\
         &       &  &  $\frac{n}{d}>1$ is odd, $c \in \F_{2^d}\setminus \F_2$  & \\        \hline
            \multirow{2}*{$x^\frac{2^{in/d}+1}{2^i+1}+\tr(x)$} &  \multirow{2}*{2}
                                            &  \multirow{2}*{$\bigcirc$} &$d=\gcd(n,i)$ is even,
                                            &  \multirow{2}*{Theorem \ref{inversemaintheorem}}\\
                                           &       &  &  $\frac{n}{d}>1$ is odd,  $c \in \F_{2^d}\setminus \F_2$ & \\        \hline
         \multirow{4}*{$x+\tr(\alpha x+x^{2^i+1})$} & \multirow{2}*{1} & \multirow{2}*{$\bigcirc$}
         &  $\tr(\alpha+1)=0$, $d=\gcd(n,2i),$
          & \multirow{4}*{Theorem \ref{thmfinal}}\\
                                           &                  &                           &   $c \in \F_{2^d}\setminus \F_2$
                                            & \\    \cline{2-4}
                                           & \multirow{2}*{2} & \multirow{2}*{$\bigcirc$} & $\tr(\alpha+1)=0$, $d=\gcd(n,2i)$,  &        \\
                                           &                  &                           &  $c \in \F_{2^n}\setminus \F_{2^d}$    &  \\  \hline

        \multirow{2}*{$(x+\tr(\alpha x+ x^{2^i+1}))^{2^i+1}$} &  \multirow{2}*{2} &  \multirow{2}*{$\bigcirc$} &  $\tr(\alpha+1)=0$, $d=\gcd(n,i),$
         &  \multirow{2}*{Theorem \ref{maintheorem22}}\\
         &       &  &  $\frac{n}{d}>1$ is odd, $c \in \F_{2^d}\setminus \F_2$  & \\        \hline
    \end{tabular}

    \medskip
 $-$ $\tr(\alpha+1)=0$ means either that $\tr(\alpha)=1$ and $n$ is odd or that $\tr(\alpha)=0$ and $n$ is
 even.
\end{table}

\normalsize

\begin{table}[h]
    \centering
    \caption{A list of  (A)PcN functions over general finite fields, \\
    which are constructed from other (A)PcN fuctions}
    \renewcommand{\arraystretch}{1.3}
      \footnotesize
    \begin{tabular}{|c|c|c|}
        \hline
        $F(x)$ &  $_c\Delta_F$  & References \\
        \hline\hline
        $b\phi(x) + \tr^{q^n}_{q}(g(x^q+x)) $ &\multirow{2}*{2}&\multirow{2}*{\cite{BC21}}\\
        $b\phi(x) + \nr^{q^n}_{q}(g(x^q+x)) $ & & \\
        \hline
        $L(x)(\sum^{l-1}_{i=1}L(x)^{(q-1)i/l} +u)$ & $\leq 2$ & \cite{WLZ21} \\
        \hline
        $f(x)(\tr^{q}_1(x)+1) + f(x+\gm)\tr^q_1(x)$ & $\leq 2$  & \cite{WLZ21} \\
        \hline
        $L(x) + L(\gm)\tr^{q^n}_q(x)^{q-1}$ & 1  & \cite{WLZ21}\\
        \hline
        $u\phi(x) + g(\tr^{q^n}_q(x))^q +  g(\tr^{q^n}_q(x))$ & 1  & \cite{WLZ21}\\
        \hline
        $u(x^q+x) + g(\tr^{q^n}_q(x))$ & 1  & \cite{WLZ21}\\
        \hline
        $u\phi(x) +  g(\tr^{q^n}_q(x))$ & 1  & \cite{WLZ21}\\
        \hline
        $u\phi(x) + \sum^{t}_{i=1}g(x^q+x)^{(q^n-1)/d_i}$ & 2  & \cite{WLZ21}\\
        \hline
        $f(x) + u\tr(vf(x))$ & 1  & \cite{LRS22}\\
        \hline
        $L_1(x) + L_1(\gm)\tr(L_2(x))$ & 1  & \cite{LRS22}\\
        \hline
        $L(x) + \prod^{s}_{i=1}\left(\tr(x^{2^{k_i}+1}+\delta_i)\right)^{g_i} $ & $\leq 2$  & \cite{LRS22}\\
        \hline
        $L(x) + \prod^{s}_{i=1}\left(\alpha_i\tr^{q^n}_{q^m}(x^{2^{k_i}+1}+\delta_i)\right)^{g_i} $ & $\leq 2$  & \cite{LRS22}\\
        \hline
        $L(x) + u\sum^{t}_{i=1}\left(\tr^{q^n}_{q^m}(x)^{k_i}+\delta_i)\right)^{s_i} $ & 1  & \cite{LRS22}\\
        \hline
    \end{tabular}

\end{table}
\normalsize

\section{Preliminaries}
Let $\Fbn$ denote the finite field with $2^n$ elements  and let
$\Fbn^\times$ denote the multiplicative subgroup of $\Fbn$. For
positive integers $d,m,n$ such that $n =dm$, we let $\tr^{n}_{d} :
\Fbn\longrightarrow\F_{2^d}$ and  $\nr^{n}_{d} :
\Fbn\longrightarrow\F_{2^d}$ denote the relative field trace and
norm,
\begin{equation*}
\begin{split}
\tr^{n}_{d}(x) &= \sum_{k=0}^{m-1}x^{2^{dk}} = x + x^{2^d} + x^{2^{2d}} + \cdots + x^{2^{(m-1)d}},  \\
\nr^{n}_{d}(x) &= \prod_{k=0}^{m-1}x^{2^{dk}} = x^{1 + 2^d +2^{2d}
+ \cdots + {2^{(m-1)d}}}.
\end{split}
\end{equation*} When $d=1$ and $n$ is clear from the context, we shall simply write $\tr$ (resp. $\nr$) instead of $\tr^n_d$ (resp. $\nr^n_d$).
Also, when $i=d\ell$ such that $\gcd(n,i)=d$ (i.e., when
$\gcd(m,\ell)=1$), then it is easy to verify
\begin{align}
\tr^{n}_{d}(x) &= \sum_{k=0}^{m-1}x^{2^{ik}} = x + x^{2^i} + x^{2^{2i}} + \cdots + x^{2^{(m-1)i}}, \label{newtr}   \\
\nr^{n}_{d}(x) &= \prod_{k=0}^{m-1}x^{2^{ik}} = x^{1 + 2^i +2^{2i}
+ \cdots + {2^{(m-1)i}}}, \label{newnr}
\end{align}
because there is unique $0\leq j < m$ with $j\equiv \ell \pmod{m}$
such that $2^i\equiv 2^{dj} \pmod{2^n-1}$.

Two functions $F, F': \Fbnn$ are said to be \textit{extended
affine equivalent}  if there exist affine permutations $A_1,A_2 :
\Fbnn$ and an affine function $A : \Fbnn$ satisfying  $F' = A_1
\circ F \circ A_2  + A$. Also, $F$ and $F'$ are said to be
\textit{CCZ-equivalent} (introduced in \cite{CCZ98}) if there
exists an affine permutation $\mathscr{L}$ on $\Fbn\times \Fbn$
such that $ \mathscr{L}(G_F)= G_{F'}$ where $G_F$ is the
\textit{graph of the function }$F$, i.e., $G_F = \{(x,F(x)) :  x
\in \Fbn\} \subset \Fbn\times \Fbn$. Let $\#S$ denote the
cardinality of the set $S$.

\begin{definition}(\cite{EFR+20}) Let $F : \Fbnn$ be a function and $c\in\Fbn$. For $a,b \in \Fbn$,
we define $ _c D_a F(x) := F(x+a) + cF(x) $ and $ _c\Delta_F(a,b)
 := \#\{x \in \Fbn:\  _c D_a F(x)= b \}.$
    We call the quantity $$_c\Delta_F := \max \# \{_c\Delta_F(a,b)  : a,b \in \Fbn,\text{ and } a \neq 0 \text{ if } c= 1\}$$ the
     \textit{$c$-differential uniformity} of $F$. A function $F$ is called a perfect $c$-nonlinear(PcN) function and
     an almost perfect $c$-nonlinear(APcN) function if $_c \Delta_F =1$ and  $_c \Delta_F =2$, respectively.
\end{definition}

\begin{remark}
Recently, in an independent work, Bartoli and Timpanella
\cite{BT20} gave a generalization of planar functions as follows:
For $\beta \in \Fbn\setminus \F_2$,  a function $F : \Fbnn$ is
called a $\beta$-planar function in $\Fbn$ if $F(x+\gamma)+\beta
F(x)$ is a permutation on $\Fbn$ for all $\gamma \in \Fbn$.
Therefore, $\beta$-planar functions are P$_\beta$N functions.
\end{remark}

We introduce a useful lemma.
\begin{prop}(\cite{Williams75})\label{qubicequationprop}
    Let $f(x) = x^3 + ax + b$ where $a,b \in \Fbn$, $b \neq0$ with even $n$. Then $f(x)$ is irreducible over $\Fbn$ if and only if
     $\tr(a^3/b^2) = 0$ and none of the roots of $x^2+bx+a^3 = 0$ is a cube in $\Fbn$.
\end{prop}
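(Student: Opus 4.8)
The plan is to determine the factorization type of $f$ over $\Fbn$ by reading off the cycle type of the Frobenius automorphism $\phi:x\mapsto x^{2^n}$ on the root set, and to encode that cycle type through a Lagrange resolvent. Since $b\neq 0$, the polynomial $f$ is separable (the repeated root $\sqrt a$ of $f'=x^2+a$ is not a root of $f$, as $f(\sqrt a)=b\neq0$), so $f$ has three distinct roots $r_1,r_2,r_3\in\overline{\Fbn}$ with $r_1+r_2+r_3=0$, $\ r_1r_2+r_1r_3+r_2r_3=a$ and $r_1r_2r_3=b$. Over a finite field a cubic is irreducible precisely when it has no root in $\Fbn$, i.e. precisely when $\phi$ permutes $\{r_1,r_2,r_3\}$ as a $3$-cycle; the remaining possibilities are that $\phi$ is the identity (three roots in $\Fbn$) or a transposition (exactly one root in $\Fbn$). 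Because $n$ is even, $3\mid 2^n-1$, so a primitive cube root of unity $\omega\in\mathbb{F}_4\subseteq\Fbn$ is available, which is exactly what makes the resolvent argument run. I will treat the generic case $a\neq0$ (the degenerate case $a=0$ is immediate, since $x^3+b$ is irreducible over $\Fbn$ iff $b$ is a non-cube); under $a\neq 0$ the resolvents below are nonzero.

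Introduce the Lagrange resolvents $\lambda=r_1+\omega r_2+\omega^2r_3$ and $\mu=r_1+\omega^2 r_2+\omega r_3$. A short symmetric-function computation using $r_1+r_2+r_3=0$ gives $\lambda\mu=a$, and since each root satisfies $r_i^3+ar_i=b$ one gets $\lambda^3+\mu^3=(\lambda+\mu)\big((\lambda+\mu)^2+\lambda\mu\big)=r_1(r_1^2+a)=b$, using $\lambda+\mu=r_1$. Hence $\lambda^3$ and $\mu^3$ are the two roots of the resolvent quadratic $x^2+bx+a^3=0$, which is exactly the quadratic in the statement. Moreover $\phi$ acts on $\{r_1,r_2,r_3\}$ by an even permutation iff it fixes $\lambda^3$: cyclic permutations fix $\lambda^3$, while transpositions interchange $\lambda^3$ and $\mu^3$, and these are distinct since $b\neq 0$. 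As $\Fbn$ is the fixed field of $\phi$, this says $\phi$ is even iff $\lambda^3\in\Fbn$. By the Artin--Schreier criterion, the substitution $x=bw$ turns $x^2+bx+a^3=0$ into $w^2+w=a^3/b^2$, so its roots lie in $\Fbn$ exactly when $\tr(a^3/b^2)=0$. Combining, $f$ is either split or irreducible (never of transposition type) iff $\tr(a^3/b^2)=0$; in particular $\tr(a^3/b^2)=0$ is necessary for irreducibility.

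It remains to separate the split case from the irreducible case under $\tr(a^3/b^2)=0$, i.e. $\lambda^3,\mu^3\in\Fbn$. The key point is a Kummer-type equivalence: since $\omega\in\Fbn$, an element $\lambda^3\in\Fbn$ is a cube in $\Fbn$ iff $\lambda\in\Fbn$ (if $\lambda^3=t^3$ with $t\in\Fbn$ then $\lambda\in\{t,\omega t,\omega^2 t\}\subseteq\Fbn$, and the converse is trivial). Inverting the resolvent transform gives $r_1=\lambda+\mu$, so $\lambda,\mu\in\Fbn$ forces $r_1\in\Fbn$; an even permutation fixing a point is the identity, whence $f$ splits and is not irreducible. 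Note also that $\lambda^3$ is a cube iff $\mu^3$ is, because $\lambda^3\mu^3=a^3$ is already a cube. Conversely, if the roots of $x^2+bx+a^3$ are non-cubes then $\lambda\notin\Fbn$, so $\phi$ fixes $\lambda^3$ but not $\lambda$, forcing $\phi(\lambda)\in\{\omega\lambda,\omega^2\lambda\}$; thus $\phi$ is a nontrivial even permutation, hence a $3$-cycle, and $f$ is irreducible. This yields the stated equivalence.

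The only real computational work I expect lies in the two symmetric-function identities $\lambda\mu=a$ and $\lambda^3+\mu^3=b$, the latter being the main obstacle since it must be evaluated cleanly via $r_i^3+ar_i=b$ rather than by brute expansion, together with the careful bookkeeping in the Kummer step that makes ``a root of the resolvent is a cube'' exactly equivalent to ``$f$ splits.'' Everything else is the standard dictionary between factorization type over a finite field and the cycle type of Frobenius.
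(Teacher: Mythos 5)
The paper offers no proof of this proposition: it is quoted from Williams \cite{Williams75} and used as a black box in the proof of Lemma \ref{trnrlemma}, so there is nothing internal to compare your argument against. Your Lagrange--resolvent proof is correct and self-contained. The separability check ($f(\sqrt a)=b\neq 0$), the identities $\lambda+\mu=r_1$, $\lambda\mu=a$ and $\lambda^3+\mu^3=r_1(r_1^2+a)=b$ (using $\omega+\omega^2=1$ in characteristic $2$), the dictionary ``Frobenius is even $\Leftrightarrow$ it fixes $\lambda^3$'' (valid because $\lambda^3\neq\mu^3$ follows from $b\neq 0$), the Artin--Schreier reduction of $x^2+bx+a^3$ to $\tr(a^3/b^2)$, and the Kummer step $\lambda^3\in(\Fbn^{\times})^3\Leftrightarrow\lambda\in\Fbn$ all check out; the last two steps are exactly where the evenness of $n$ (i.e.\ $\omega\in\F_4\subseteq\Fbn$) is consumed, matching the hypothesis. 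One small caveat you should make explicit: in the degenerate case $a=0$ the resolvent quadratic is $x(x+b)$, and since $0=0^3$ the literal condition ``none of the roots is a cube'' would then never hold, contradicting your (correct) remark that $x^3+b$ is irreducible iff $b$ is a non-cube; the statement is only consistent at $a=0$ if ``cube'' is read as ``cube of a nonzero element.'' This does not affect the paper, which applies the proposition only to $x^3+\widetilde{c}x+c'$ with $\widetilde{c}^3=\zeta^3h_1h_2\neq 0$.
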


\section{A class of non-monomial APcN permutations on $\Fbn$}
\label{sec3}
 Throughout this section, we are mainly interested in
the case where $n$ is even and $\gcd(2i,n)=\gcd(i,n)<n$.
Equivalently, we let $n,d,m,i,\ell$ be positive integers such that
  \begin{equation}\label{field setting}
 n = dm, \,\,\,  i = d\ell, \,\,\, \gcd(m,2\ell) = 1, \,\,\,
   n>d, \,\,\, n=\textrm{even}.
  \end{equation}
We will prove Theorem \ref{maintheorem} using the above
assumption. However, it should be mentioned that some of the
lemmas and propositions that we will discuss later may not need
such strong assumption, and we will explicitly state the necessary
conditions in each case.
%   Thus we are interested in the finite field of
%even dimension $\F_{2^n}=\F_{2^{dm}}$ with odd $m$.
%By the
%assumption $\gcd(m,2\ell) = 1$, $m$ is odd.
Let $G$ be a function defined as
\begin{equation}\label{main function2}
G(x) = \begin{cases}
x^{2^i+1} &\text{ if }  x \in T\\
(x+1)^{2^i+1} &\text{ if }  x \notin T,
\end{cases}
\end{equation}
where $T = \{x \in \Fbn : \tr(x^{2^i+1}) = 0\}$.  The function $G$
can also be written as
\begin{align}
    G(x) = (x+\tr(x^{2^i+1}))^{2^i+1} = x^{2^i+1} + (x^{2^i} + x +
    1)\tr(x^{2^i+1}). \label{Gexpress}
\end{align}
 From the above form, we can verify that $G$ is a cubic function. Indeed it holds that
    $G(x) = x^{2^i+1} + (x^{2^i} + x + 1)\Sum_{j=0}^{n-1}x^{2^{i+j}+2^j},$
    which has the term $x^{2^{i+1}+2^i+2}$.
We will prove that, under the conditions in \eqref{field setting},
$G$ is an APcN permutation on $\Fbn$ for $c\in \F_{2^d}\setminus
\F_2$.

\begin{remark}
    $G(x)$ was already introduced by Budaghyan et al. \cite{BCP06} to
    derive the first example of
EA-inequivalent but CCZ-equivalent function to $x^{2^i+1}$.
However, $c$-differential uniformity is not preserved under
EA-equivalence.
    While we are assuming that $d$ is even in Theorem \ref{maintheorem},
    when $d=1$ $($i.e., $ \gcd(n,i)= 1$$)$,  the result in \cite{BCP06} implies that
    $G$ is an APN function.
\end{remark}

\begin{remark}
    When $n=d=\textrm{even}$ $($i.e., $ m= 1$$)$, then since $n | i=d\ell$, it holds that $2^i \equiv 1 \pmod{2^n-1}$ and
     $x^{2^i+1} = x^1 \cdot x^1 = x^2 $ over $\Fbn$. Then one has
     $G(x) = (x+\tr(x^{2^i+1})) ^{2^i+1} = (x+\tr(x^2))^2 = x^2 + \tr(x)$ and the result of Wu et al. \cite[Theorem     4]{WLZ21} implies that
       $G$ is PcN for all $c\in \Fbn \setminus \F_2$. Indeed if we choose $f(x) = x^2$ and
       $\gm = 1$ in \cite[Theorem 4]{WLZ21}, then $F(x) = f(x)(\tr(x) + 1) + f(x+1)\tr(x) = x^2(\tr(x) +1) + (x+1)^2\tr(x)
        = x^2+\tr(x) = G(x)$ is PcN  where the fact $f(x) = x^2$ is PcN for all $c\in \Fbn \setminus \F_2$ and $\tr(\gm) = \tr(1) = 0$ is used.
\end{remark}
 Now we introduce the following theorem.

\begin{theorem}\label{maintheorem}
    Suppose that $n$ is even with $d=\gcd(2i,n)=\gcd(i,n)<n$.  Let $G$ be the function defined in $\eqref{Gexpress}$
    and let $c\in \F_{2^{d}}\setminus\F_2$.  Then $G$ is an \text{APcN} permutation on $\Fbn$.
\end{theorem}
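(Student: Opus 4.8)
The plan is to establish the two assertions separately: that $G$ is a permutation, and that ${}_c\Delta_G=2$. Throughout write $t(x)=\tr(x^{2^i+1})\in\F_2$, so that $G(x)=(x+t(x))^{2^i+1}$. Since $\gcd(m,2\ell)=1$ forces $m=n/d$ to be odd, we have $\gcd(2^i+1,2^n-1)=1$, so $P(x)=x^{2^i+1}$ is a bijection of $\Fbn$. To see $G$ is injective, suppose $G(x)=G(y)$. If $t(x)=t(y)$ then $P(x+t(x))=P(y+t(y))$ gives $x=y$. Otherwise $t(x)\neq t(y)$, say $t(x)=0$ and $t(y)=1$, and injectivity of $P$ forces $x=y+1$; but expanding $(y+1)^{2^i+1}=y^{2^i+1}+y^{2^i}+y+1$ and using $\tr(y^{2^i})=\tr(y)$ together with $\tr(1)=0$ (this is where $n$ even enters) yields $t(y+1)=t(y)$, so $t(x)=t(y+1)=t(y)$, a contradiction. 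Hence $G$ is a permutation.

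For the bound ${}_c\Delta_G\leq 2$, fix $a,b$ and set $H_a(x)=G(x+a)+cG(x)$. I classify a solution $x$ of $H_a(x)=b$ by $(u,v)=(t(x),t(x+a))\in\F_2^2$; on this class the equation reads $(x+a+v)^{2^i+1}+c(x+u)^{2^i+1}=b$. The decisive simplification is that $c\in\F_{2^d}\subseteq\F_{2^i}$, so $c^{2^i}=c$: after dividing by $1+c\neq 0$ and substituting $x=z+\tfrac{a+v+cu}{1+c}$ to cancel the $x^{2^i}$ term, the coefficient of the surviving linear term $z$ also vanishes identically (precisely because $c^{2^i}=c$), leaving $z^{2^i+1}=\gamma'_{u,v}$, which has a unique root since $m$ is odd. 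Thus each class contributes at most one candidate $x_{u,v}$. A direct computation shows that increasing either $u$ or $v$ changes $\gamma'_{u,v}$ by the same amount $\delta=\tfrac{c(a+a^{2^i}+1)}{(1+c)^2}$, so in characteristic $2$ it collapses to two values $\gamma'_{0,0}=\gamma'_{1,1}$ and $\gamma'_{0,1}=\gamma'_{1,0}$; tracking the affine shifts then gives the pairing $x_{1,1}=x_{0,0}+1$ and $x_{1,0}=x_{0,1}+1$. Since $t(w+1)=t(w)$ for every $w$ (again using $n$ even), the defining conditions $(t(x),t(x+a))=(u,v)$ for the two members of a pair are mutually exclusive, so each pair yields at most one genuine solution and ${}_c\Delta_G(a,b)\leq 2$.

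To upgrade $\leq 2$ to $=2$, that is, to rule out PcN, it remains to exhibit $a,b$ with two solutions, and this non-degeneracy step is where I expect the real difficulty. The previous paragraph shows that for fixed $a$ the number of solutions equals the number of valid statements among $S(x_{0,0})=0$ and $S(x_{0,1})=1$, where $S(x)=t(x)+t(x+a)=\tr(ax^{2^i}+a^{2^i}x+a^{2^i+1})$ is affine and non-constant exactly when $a\notin\F_{2^d}$. As $b$ varies, both branch points $x_{0,0}(b)$ and $x_{0,1}(b)$ sweep out all of $\Fbn$, and a two-solution value of $b$ fails to exist only if $S(x_{0,0})=S(x_{0,1})$ for every $b$; since $x_{0,0}+x_{0,1}=z_0+\sigma(z_0)+\tfrac{1}{1+c}$, where $\sigma$ is the fixed-point-free involution $z\mapsto (z^{2^i+1}+\delta)^{1/(2^i+1)}$ with $\delta\neq 0$ (here $\delta\neq 0$ because $x^{2^i}+x=1$ is unsolvable, as $\tr^n_d(1)=m$ is odd), this amounts to the single Walsh-type sum $\Sigma=\sum_{z}(-1)^{\tr(\mu(z+\sigma(z)))}$, with $\mu=a^{2^{-i}}+a^{2^i}\neq 0$, being equal to $\pm 2^n$. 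The crux is therefore to prove that $\tr(\mu(z+\sigma(z)))$ is non-constant, i.e.\ $\Sigma\neq\pm 2^n$. I would obtain this either from a Weil-type bound $|\Sigma|\leq C\cdot 2^{n/2}<2^n$ for all but finitely many admissible $n$, checking the remaining small cases directly, or, more in keeping with the elementary tone of the other steps, by constructing an explicit collision $H_a(x_1)=H_a(x_2)$ with $x_1,x_2$ drawn from the two different pairs.
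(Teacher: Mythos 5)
Your first two steps are sound and essentially coincide with the paper's: the permutation property is the factorization $G=g\circ h$ with $g(x)=x^{2^i+1}$ bijective and $h(x)=x+\tr(x^{2^i+1})$ an involution (your injectivity argument is a direct rephrasing), and your bound ${}_c\Delta_G(a,b)\le 2$ is exactly the paper's mechanism: classify solutions by $(\tr(x^{2^i+1}),\tr((x+a)^{2^i+1}))$, get one candidate per class from completing the $(2^i+1)$-th power using $c^{2^i}=c$ (Proposition \ref{onesolutionprop}), and pair the four classes via $x\mapsto x+1$ together with $\tr((w+1)^{2^i+1})=\tr(w^{2^i+1})$ for $n$ even (Lemma \ref{threeimpossiblelemma}). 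Your bookkeeping of $\gamma'_{u,v}$, the pairing $x_{1,1}=x_{0,0}+1$, $x_{1,0}=x_{0,1}+1$, and the observation that $\delta\neq 0$ because $1\notin\mathrm{Im}(x\mapsto x^{2^i}+x)$ (as $\tr^n_d(1)=m$ is odd) are all correct.

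The gap is in the third step, which is the actual content of the theorem. Your reduction of "not PcN" to the non-constancy of $z\mapsto\tr\left(\cl(a)\bigl(z+\sigma_a(z)\bigr)\right)$ for some $a\notin\F_{2^d}$ is correct and is in fact equivalent to the paper's formulation (Proposition \ref{geq2prop} asks for $x,y$ with $f_a(x)=f_{a+1}(y)$ lying in the two different halves $V_a, W_a$, and $\tr(\cl(a)(x+y))=1$ is precisely that condition), but you then stop: neither of your two proposed routes is carried out, and the first one does not work as stated. The map $\sigma_a(z)=(z^{2^i+1}+\delta)^{1/(2^i+1)}$ involves the exponent $u\equiv(2^i+1)^{-1}\pmod{2^n-1}$, which is of magnitude comparable to $2^n$; a Weil-type estimate $|\Sigma|\le C\cdot 2^{n/2}$ with $C$ controlled by the degree is therefore vacuous here, and no low-degree model of $z+\sigma_a(z)$ is offered. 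The second route, an explicit collision, is exactly what the paper spends most of Section \ref{sec32} constructing, and it is not a routine verification: one must simultaneously satisfy three trace conditions $\tr^n_d(z/c)=\tr^n_d(z/w^{2^i+1})=1$ and $\tr\bigl(z(w+w^{2^i})/(c(c+1))\bigr)=1$, which requires the linear independence of $\{1/c,\,1/w^{2^i+1},\,w+w^{2^i}\}$ over $\F_{2^d}$ (Lemma \ref{independetlemma}), a dual-basis argument, Hilbert 90 to produce $a$ and $X$, and a separate treatment of the degenerate case $m=3$ where that set is linearly dependent and one must instead realize a prescribed pair $(\tr^n_d(w),\nr^n_d(w))=(0,c')$ via an irreducible cubic (Lemma \ref{trnrlemma} and Proposition \ref{qubicequationprop}). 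Without some substitute for this construction, the claim ${}_c\Delta_G=2$ (as opposed to ${}_c\Delta_G\le 2$) is not established.
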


\noindent It will turn out that, without much difficulty,  the
$c$-differential uniformity of $G$ is bounded above by two and we
show $ {}_c \Delta_G \leq 2$ in Section \ref{sec31}. Proving that
$G$ is {\em not} PcN takes more effort and we show $ {}_c \Delta_G
= 2$ in Section \ref{sec32}.

\subsection{The proof of $ {}_c \Delta_G \leq 2$}\label{sec31}

We present some useful lemmas which will be used in both Section
\ref{sec3} and Section \ref{sec4}.

\begin{lemma}\label{2i+1lemma} The followings hold:
    \begin{enumerate}[(i)]
        \item  If $\gcd(2i,n)=\gcd(i,n)$, then $ g(x)= x^{2^i+1}$ is a permutation on $\Fbn$.
        \item   If $n$ is even, then $ h(x) = x+\tr(x^{2^i+1})$ is an involution on
        $\Fbn$. (i.e., $h\circ h(x)=x$ for all $x\in \Fbn$.)
        \item  Let $ \cl(x) = x^{2^i} + x^{2^{-i}}$ be a linear transformation on
        $\Fbn$. If $\gcd(2i,n)=d$,
         then $\ker \cl = \F_{2^{d}}$.
        \item  Let $\gcd(2i,n)=\gcd(i,n)=d$ with $n=dm$ and $i=d\ell$. Then, for all $x \in \Fbn$, one has
        $$x^{\frac{2}{2^i+1}} = x^{\frac{2^{im}+1}{2^i+1}}
        = x^{2^{(m-1)i}-2^{(m-2)i}+ \cdots - 2^i + 1}.$$
    \end{enumerate}
\end{lemma}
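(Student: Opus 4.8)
The plan is to handle the four parts independently, since each reduces either to a calculation about exponents modulo $2^n-1$ or to a short trace manipulation. Throughout write $d:=\gcd(i,n)=\gcd(2i,n)$; note that this equality of gcds is equivalent to $m=n/d$ being odd, a fact I will use in (iv).

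For (i), I would invoke the standard fact that a power map $x\mapsto x^e$ permutes $\Fbn$ iff $\gcd(e,2^n-1)=1$, and then show $\gcd(2^i+1,2^n-1)=1$ under the hypothesis. Setting $\delta:=\gcd(2^i+1,2^n-1)$, from $\delta\mid 2^{2i}-1$ and $\delta\mid 2^n-1$ I get $\delta\mid\gcd(2^{2i}-1,2^n-1)=2^{\gcd(2i,n)}-1=2^d-1$. Since $d\mid i$ gives $2^i\equiv 1\pmod{2^d-1}$, hence $2^i+1\equiv 2\pmod{2^d-1}$; combining $\delta\mid 2^i+1$ with $\delta\mid 2^d-1$ forces $\delta\mid 2$, and as $\delta$ is odd I conclude $\delta=1$. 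For (ii), write $t:=\tr(x^{2^i+1})\in\F_2$, so $h(x)=x+t$, and compute $\tr\bigl((x+t)^{2^i+1}\bigr)$ directly. Expanding $(x+t)^{2^i+1}=x^{2^i+1}+t(x^{2^i}+x)+t$ (using $t^2=t$) and applying Frobenius-invariance of the trace gives $\tr(x^{2^i}+x)=0$, while $\tr(1)=0$ because $n$ is even; hence $\tr\bigl((x+t)^{2^i+1}\bigr)=t$ and therefore $h(h(x))=(x+t)+t=x$. For (iii), $\cl(x)=0$ means $x^{2^i}=x^{2^{-i}}$; applying the Frobenius $y\mapsto y^{2^i}$ to both sides yields $x^{2^{2i}}=x$, so $\ker\cl$ is exactly the fixed field of $x\mapsto x^{2^{2i}}$, namely $\F_{2^{\gcd(2i,n)}}=\F_{2^d}$.

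For (iv), the core is the integer identity $(2^i+1)\,e=2^{im}+1$, where $e:=\sum_{k=0}^{m-1}(-1)^k 2^{ki}$ is the displayed alternating sum. I would verify this by a telescoping computation: multiplying out, the interior terms cancel in pairs and the boundary contributes $1-(-1)^m 2^{im}$, which equals $1+2^{im}$ precisely because $m$ is odd. This identity immediately gives the middle equality $x^{(2^{im}+1)/(2^i+1)}=x^e$. For the leftmost equality I reduce modulo $2^n-1$: since $im=\ell n$ we have $2^{im}\equiv 1$, so $(2^i+1)e\equiv 2\pmod{2^n-1}$; as $2^i+1$ is invertible modulo $2^n-1$ by part (i), this is exactly the statement $e\equiv 2(2^i+1)^{-1}$, i.e. $x^e=x^{2/(2^i+1)}$ for every $x\in\Fbn$.

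The genuinely routine parts are (ii) and (iii). The steps requiring the most care are the gcd reduction in (i) and the telescoping bookkeeping in (iv): the oddness of $m$ is precisely what makes the alternating sum close up to $2^{im}+1$ rather than $2^{im}-1$, and it is also what underlies the hypothesis $\gcd(2i,n)=\gcd(i,n)$ invoked in (i).
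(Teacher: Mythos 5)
Your proposal is correct and follows essentially the same route as the paper in all four parts: the gcd computation for (i), the direct trace expansion for (ii), the reduction of $\cl(x)=0$ to $x^{2^{2i}}=x$ for (iii), and the identity $(2^i+1)\bigl(2^{(m-1)i}-\cdots-2^i+1\bigr)=2^{im}+1$ together with $2^{im}\equiv 1\pmod{2^n-1}$ for (iv). The only cosmetic difference is in (i), where you derive $\gcd(2^i+1,2^n-1)=1$ from first principles via $\delta\mid 2^d-1$ and $\delta\mid 2$, while the paper simply quotes the closed formula $\gcd(2^i+1,2^n-1)=\bigl(2^{\gcd(2i,n)}-1\bigr)/\bigl(2^{\gcd(i,n)}-1\bigr)$.
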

\begin{proof}
    For $(i)$, one has
     $$ \gcd(2^i+1,2^n-1) = \frac{2^{\gcd(2i,n)} - 1}{2^{\gcd(i,n)} - 1}
     = \frac{2^d-1}{2^d-1} = 1, $$
     where  $d=\gcd(2i,n)=\gcd(i,n)$.
       For $(ii)$, by expanding $h(h(x))$, we have
    \begin{equation}\label{lemmaeqn1}
    \begin{split}
    h(h(x)) &=  x+\tr(x^{2^i+1}) + \tr\left ( (x+\tr(x^{2^i+1}))^{2^i+1}\right ) \\
    &= x+\tr(x^{2^i+1}) + \tr\left ( x^{2^i+1} + x^{2^i}\tr(x^{2^i+1}) + x\tr(x^{2^i+1})^{2^i} +\tr(x^{2^i+1})^{2^i+1}  \right )\\
    &=x
    \end{split}
    \end{equation}
 because  $\tr\left ( x^{2^i}\tr(x^{2^i+1}) + x\tr(x^{2^i+1})^{2^i}\right
) = \tr\left ( x^{2^i}\tr(x^{2^i+1})\right ) +  \tr\left
(x^{2^i}\tr(x^{2^i+1})\right ) =  0$ and (since $n$ is even)
$\tr\left (\tr(x^{2^i+1})^{2^i+1}  \right ) =
\tr(x^{2^i+1})^{2^i+1} \tr\left ( 1 \right ) = 0 $.
     For $(iii)$, the equation $x^{2^i} = x^{2^{-i}}$ is equivalent to  $x^{2^{2i}} = x$
     so that $\ker \cl = \{x \in \Fbn : \cl(x) = 0 \} = \{x \in \Fbn : x^{2^{2i}} = x \} = \F_{2^{\gcd(2i,n)}} = \F_{2^d}$.
% Viewing $\Fbn$ as a vector space over $\F_2$, the dimension of
% $\im \cl$ is $n-d$ so that $\im \cl  = \{x \in \Fbn : \tr^{n}_{d}(x) = 0\}.$
     Finally, for $(iv)$, it holds that $m$ is odd because $1=\gcd(2\ell, m)$ with $d=\gcd(2i,n)=\gcd(i,n)$.
     Therefore one has
    $$(2^i+1)(2^{(m-1)i}-2^{(m-2)i}+ \cdots - 2^i + 1) =2^{mi} + 1 = 2^{dm\ell} + 1 = 2^{n\ell} + 1, $$
   which implies that $x^{(2^i+1)(2^{(m-1)i}-2^{(m-2)i}+ \cdots - 2^i + 1)} = x^{2^{n\ell} + 1} = x^2$.
    Since the $\frac{1}{2^i+1}$-th power is well-defined by the claim
    $(i)$, we are done.
\end{proof}

\begin{corollary}
  If $n$ is even with $\gcd(2i,n)=\gcd(i,n)$, then $G$ is a permutation on $\Fbn$.
\end{corollary}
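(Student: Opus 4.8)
The plan is to read off from \eqref{Gexpress} that $G$ factors as a composition of two maps already analysed in Lemma \ref{2i+1lemma}. Writing $g(x) = x^{2^i+1}$ and $h(x) = x + \tr(x^{2^i+1})$, the displayed form $G(x) = (x+\tr(x^{2^i+1}))^{2^i+1}$ says exactly that $G = g \circ h$. So the entire statement reduces to checking that each factor is a bijection of $\Fbn$ under the stated hypotheses.

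First I would invoke part $(i)$ of Lemma \ref{2i+1lemma}: since we assume $\gcd(2i,n) = \gcd(i,n)$, the monomial $g(x) = x^{2^i+1}$ is a permutation of $\Fbn$. Next I would invoke part $(ii)$: since $n$ is even, the map $h(x) = x + \tr(x^{2^i+1})$ is an involution, and in particular a permutation (it is its own inverse). Note that the two hypotheses needed here, namely $\gcd(2i,n)=\gcd(i,n)$ and $n$ even, are precisely the two hypotheses in the corollary, so nothing extra has to be verified.

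Finally, the composition of two permutations of a finite set is again a permutation, whence $G = g \circ h$ is a permutation of $\Fbn$, completing the proof. There is essentially no obstacle to overcome at this point: all the real work was done in establishing parts $(i)$ and $(ii)$ of Lemma \ref{2i+1lemma} (the coprimality computation $\gcd(2^i+1,2^n-1)=1$ for the former, and the trace cancellation in \eqref{lemmaeqn1} for the latter), and the corollary is simply the observation that $G$ is built from those two bijections by composition.
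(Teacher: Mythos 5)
Your proposal is correct and is exactly the paper's proof: both decompose $G = g \circ h$ via \eqref{Gexpress} and invoke Lemma \ref{2i+1lemma} parts $(i)$ and $(ii)$ to see that each factor is a permutation. Nothing further to add.
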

\begin{proof}
    $G(x) = (x+\tr(x^{2^i+1}))^{2^i+1}=g(h(x))$, where both $g(x)=
    x^{2^i+1}$ and $h(x)=x+\tr(x^{2^i+1})$ are permutations on
    $\Fbn$ by Lemma \ref{2i+1lemma}.
\end{proof}
\begin{lemma} \label{Tclosinglemma} Let $a, x \in \Fbn$.
Then one has $\tr(a^{2^i}x+ax^{2^i}) = \tr(\cl(a)x) =
\tr(\cl(x)a)$. Moreover,
    \begin{enumerate}[(i)]
        \item Let $a \in T$. Then  $x+a \in T$ if and only if $\tr(\cl(a)x)=\tr(x^{2^i+1})$.
        \item Let $a \notin T$. Then $x+a \in T$ if and only if $\tr(\cl(a)x)=\tr(x^{2^i+1})+1$.
    \end{enumerate}
\end{lemma}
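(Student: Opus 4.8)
The plan is to reduce the whole statement to two elementary facts about the absolute trace: its invariance under the Frobenius map, namely $\tr(y) = \tr(y^{2^j})$ for every $j$ and every $y \in \Fbn$, together with the characteristic-two expansion
\[
(x+a)^{2^i+1} = (x^{2^i}+a^{2^i})(x+a) = x^{2^i+1} + a^{2^i}x + ax^{2^i} + a^{2^i+1}.
\]
Neither ingredient requires the full strength of the field setting \eqref{field setting}; only that $x\mapsto x^{2^i}$ is an automorphism (so that $x^{2^{-i}}$ is well defined, as already used in Lemma \ref{2i+1lemma}$(iii)$).

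First I would prove the displayed identity. Applying Frobenius invariance to the second summand gives $\tr(ax^{2^i}) = \tr\bigl((ax^{2^i})^{2^{-i}}\bigr) = \tr(a^{2^{-i}}x)$, so that
\[
\tr(a^{2^i}x + ax^{2^i}) = \tr(a^{2^i}x) + \tr(a^{2^{-i}}x) = \tr\bigl((a^{2^i}+a^{2^{-i}})x\bigr) = \tr(\cl(a)x).
\]
Since the left-hand side is manifestly symmetric under interchanging $a$ and $x$, running the same computation with the roles of $a$ and $x$ reversed yields $\tr(\cl(x)a)$, establishing both equalities simultaneously.

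For the \emph{moreover} part I would take the trace of the binomial expansion above and substitute the identity just proved for the middle two terms, obtaining
\[
\tr\bigl((x+a)^{2^i+1}\bigr) = \tr(x^{2^i+1}) + \tr(\cl(a)x) + \tr(a^{2^i+1}).
\]
By definition of $T$, the membership $x+a \in T$ is equivalent to the vanishing of the left-hand side, so it only remains to split according to the value of $\tr(a^{2^i+1})$. In case $(i)$ the hypothesis $a \in T$ forces $\tr(a^{2^i+1})=0$, and vanishing of the left-hand side becomes $\tr(\cl(a)x) = \tr(x^{2^i+1})$ after moving one term across (recall that addition is its own inverse in characteristic two). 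In case $(ii)$ the hypothesis $a \notin T$ gives $\tr(a^{2^i+1})=1$, which shifts the condition to $\tr(\cl(a)x) = \tr(x^{2^i+1})+1$.

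I do not expect a genuine obstacle here: the argument is a single expansion followed by a two-line case distinction. The only point demanding a little care is the bookkeeping of the Frobenius exponents $2^{i}$ and $2^{-i}$ in deriving the symmetric identity; once that is handled correctly, both equivalences $(i)$ and $(ii)$ drop out of the same computation.
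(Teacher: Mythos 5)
Your proposal is correct and follows essentially the same route as the paper: both establish the trace identity via Frobenius invariance $\tr(ax^{2^i})=\tr(a^{2^{-i}}x)$ and the symmetry of $a^{2^i}x+ax^{2^i}$, then expand $(x+a)^{2^i+1}$, take the absolute trace, and split into cases according to $\tr(a^{2^i+1})$. No gaps; the bookkeeping of the exponents $2^{\pm i}$ is handled exactly as in the paper.
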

\begin{proof} One has
\begin{equation*}\begin{split}
        \tr(a^{2^i}x+ax^{2^i}) = \tr(a^{2^i}x)+\tr(ax^{2^i}) &= \tr(a^{2^i}x)+\tr(a^{2^{-i}}x) =\tr(\cl(a)x)\\
        &= \tr(ax^{2^{-i}})+\tr(ax^{2^i}) =\tr(\cl(x)a).
    \end{split}
\end{equation*}
    By expanding $(x+a)^{2^i+1}$, we have
\begin{equation*}
\begin{split}
 \tr\left ((x+a)^{2^i+1}\right ) &=\tr(x^{2^i+1})+\tr(a^{2^i+1})+\tr(a^{2^i}x) + \tr(ax^{2^i}) \\
 &=\tr(x^{2^i+1})+\tr(a^{2^i+1})+\tr(\cl(a)x).
\end{split}
\end{equation*}
Assuming that $a \in T$, we can obtain $\tr((x+a)^{2^i+1})
=\tr(x^{2^i+1})+\tr(\cl(a)x)$. Therefore $x+a \in T$ if and only
if $0 = \tr((x+a)^{2^i+1}) =\tr(x^{2^i+1})+\tr(\cl(a)x)$, so we
prove the first claim.  The proof of the second claim follows
similarly.
\end{proof}

\begin{prop}\cite{MRS+21}\label{onesolutionprop}
    Let $c \in \F_{2^d} \setminus \F_2$ with $d=\gcd(2i,n)=\gcd(i,n)<n$, and let $a, b \in \Fbn$. For $\epsilon_1, \epsilon_2 \in \F_2$,
    the equation $$ (x+a+\bit_1)^{2^i+1} + c(x+\bit_2)^{2^i+1}   = b$$ has a unique solution
    $$x = \frac{a + \bit_1+c\bit_2}{c+1} +
    \left( \frac{c}{(c+1)^2}\left(  a^{2^i+1} +(a^{2^i} + a+ 1)(\bit_1+\bit_2) \right) + \frac{b}{c+1}\right)^{\frac{1}{2^i+1}}. $$
\end{prop}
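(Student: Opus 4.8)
The plan is to reduce the two-term equation to a single $(2^i{+}1)$-th power equation by a linear change of variable, and then invert that power using the bijectivity established in Lemma \ref{2i+1lemma}(i). First I would substitute $y = x + \bit_2$, so that $x + a + \bit_1 = y + s$ with $s := a + \bit_1 + \bit_2$. Expanding in characteristic two gives $(y+s)^{2^i+1} = y^{2^i+1} + s\,y^{2^i} + s^{2^i} y + s^{2^i+1}$, so the equation becomes
\begin{equation*}
(1+c)\,y^{2^i+1} + s\,y^{2^i} + s^{2^i}y + s^{2^i+1} = b.
\end{equation*}
Since $c \neq 1$, I would divide through by $1+c$ and try to absorb the two middle terms into a shifted power $(y+\delta)^{2^i+1}$ by choosing $\delta$ appropriately.

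The key observation — and the only place where the hypothesis $c \in \F_{2^d}$ is genuinely used — is that $d = \gcd(i,n)$ divides $i$, so $c^{2^i} = c$ and hence $(1+c)^{2^i} = 1+c$. Choosing $\delta = s/(1+c)$, the linear coefficient $\delta^{2^i} = s^{2^i}/(1+c)^{2^i} = s^{2^i}/(1+c)$ appearing in $(y+\delta)^{2^i+1}$ matches \emph{exactly} the coefficient $s^{2^i}/(1+c)$ coming from the equation; without $c^{2^i}=c$ these two would differ and the completion would fail. Tracking the leftover constant term $\tfrac{s^{2^i+1}}{1+c} - \tfrac{s^{2^i+1}}{(1+c)^2} = \tfrac{c\,s^{2^i+1}}{(1+c)^2}$ then collapses the equation to
\begin{equation*}
(y+\delta)^{2^i+1} = \frac{b}{1+c} + \frac{c\,s^{2^i+1}}{(1+c)^2}.
\end{equation*}

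By Lemma \ref{2i+1lemma}(i) the map $z \mapsto z^{2^i+1}$ is a permutation of $\Fbn$, so the right-hand side has a unique $(2^i{+}1)$-th root; this simultaneously delivers existence and uniqueness of $y$, hence of $x = y + \bit_2$. The remaining step is purely cosmetic rewriting: using $\bit_j^2 = \bit_j$ I would simplify $\delta + \bit_2 = (a + \bit_1 + c\bit_2)/(c+1)$ and expand $s^{2^i+1} = (a+\bit_1+\bit_2)^{2^i+1} = a^{2^i+1} + (a^{2^i}+a+1)(\bit_1+\bit_2)$ to recover the stated closed form. The whole argument is elementary characteristic-two algebra; the single real point, and where I expect any difficulty to hide, is recognizing that $c \in \F_{2^d}$ with $d \mid i$ is precisely the condition guaranteeing $c^{2^i}=c$, which is what makes the "complete the power" trick go through.
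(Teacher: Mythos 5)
Your proof is correct and follows essentially the same route as the paper's: both expand the left-hand side, use $c^{2^i}=c$ (from $c\in\F_{2^d}\subset\F_{2^i}$) to complete the expression into a single $(2^i+1)$-th power, and then invoke the bijectivity of $z\mapsto z^{2^i+1}$ guaranteed by Lemma \ref{2i+1lemma}(i). Your preliminary substitution $y=x+\bit_2$ is only a cosmetic reorganization of the paper's direct computation in $x$, where the same completion is done with $a_1=a+\bit_1+c\bit_2$ and $a_1^{2^i}=a_2$.
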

\begin{proof}
    This is in fact Theorem 4 of \cite{MRS+21}. However since we actively use the equation \eqref{pcnmono} later,
    we provide a proof for better understanding. By expanding
    left-hand side of the given equation, one has
    \begin{equation}\label{a1a2a3eqn}
    (x+a+\bit_1)^{2^i+1} + c(x+\bit_2)^{2^i+1} = (c+1)x^{2^i+1} + a_1x^{2^i} +a_2x + a_3
    \end{equation}
    where $a_1 = a + \bit_1 + c\bit_2, a_2 = a^{2^i} + \bit_1 + c\bit_2 $ and $a_3= a^{2^i+1} +\bit_1a^{2^i} + \bit_1a + \bit_1 + c\bit_2$.
    Since $c \in \F_{2^d} \subset \F_{2^i} $, one has  $c^{2^i} = c, a_1^{2^i} = a_2$ and the right-hand side of \eqref{a1a2a3eqn} becomes
    \begin{equation*}
    \begin{split}
    (c+1)x^{2^i+1} + a_1x^{2^i} +a_2x + a_3  &= (c+1)x^{2^i+1} + a_1x^{2^i} +a_1^{2^i}x + a_3  \\
    &= (c+1)\left (x + \dfrac{a_1}{c+1}\right)^{2^i+1}  + \dfrac{a_1^{2^i+1}}{c+1} + a_3.
    \end{split}
    \end{equation*} Therefore the original equation $(x+a+\bit_1)^{2^i+1} + c(x+\bit_2)^{2^i+1}   = b$ can be written as
    %$$  \left(x+\frac{a_1}{c+1}\right)^{2^i+1} =   \frac{a_1^{2^i+1} + (c+1)a_3+ (c+1) b}{(c+1)^2}, $$
    %or equivalently
    \begin{equation}\label{pcnmono}
      \left(x+\frac{a + \bit_1 + c\bit_2}{c+1}\right)^{2^i+1}
    =  \frac{c}{(c+1)^2}\left( a^{2^i+1} + (a^{2^i} + a+ 1)(\bit_1+\bit_2) \right) + \frac{b}{c+1}.
    \end{equation}
\end{proof}

To compute $_c \Delta_G(a,b)$, we divide the set  $\{x \in \Fbn :
\ _cD_aG(x) = b\}$ into four disjoint subsets as follow:
\begin{equation}\label{Diff-partition-eqn}
    \begin{split}
        \{x \in \Fbn : \ _cD_aG(x) = b\}=&\{x \in T :\ _cD_aG(x) = b\} \cup \{x \in \Fbn \setminus T :\ _cD_aG(x) = b\} \\
        =  &\bigcup_{ \bit \in \F_2}\{x \in T :\ _cD_aG(x) = b,\ \tr(\cl(a)x) = \bit \} \\
        &\cup \bigcup_{ \bit \in \F_2} \{x \in \Fbn \setminus T :\ _cD_aG(x) = b,\ \tr(\cl(a)x) = \bit \}.
    \end{split}
\end{equation}

Therefore it holds that
\begin{equation} \label{cDeltasumeq}
\begin{split}
_c \Delta_G(a,b) =  &\Sum_{ \bit \in \F_2} \# \{x \in T : G(x+a) + cG(x)= b,\ \tr(\cl(a)x) = \bit \}  \\
&+ \Sum_{ \bit \in \F_2} \# \{x \in \Fbn\setminus T :G(x+a) +
cG(x) = b,\ \tr(\cl(a)x) = \bit \}.
\end{split}
\end{equation}

From the construction of $G$ and by Lemma \ref{Tclosinglemma}, the Equation \eqref{cDeltasumeq} implies the following:\\

$\bullet$ For $a \in T$, it holds that
     \begin{equation}\label{ainTeqn}
     \begin{split}
     _c \Delta_G(a,b) =   &\# \{x \in T :  (x+a)^{2^i+1} + c x^{2^i+1}   = b,\ \tr(\cl(a)x) = 0 \}  \\
     &+   \# \{x \in T : (x+a+1)^{2^i+1} + cx^{2^i+1}    = b,\ \tr(\cl(a)x) = 1 \}  \\
     &+   \# \{ x \in \Fbn\setminus T :    (x+a+1)^{2^i+1}  + c (x+1)^{2^i+1}= b,\ \tr(\cl(a)x) = 0 \} \\
     &+   \# \{x \in \Fbn\setminus T :  (x+a)^{2^i+1} + c (x+1)^{2^i+1} = b,\ \tr(\cl(a)x) = 1 \}. \\
     \end{split}
     \end{equation}

     $\bullet$ For $a \notin T$, it holds that
     \begin{equation}\label{anotinTeqn}
     \begin{split}
     _c \Delta_G(a,b) =  &\# \{x \in T :  (x+a+1)^{2^i+1} + cx^{2^i+1}   = b,\ \tr(\cl(a)x) = 0 \}   \\
     &+   \# \{x \in T : (x+a)^{2^i+1} + cx^{2^i+1}    = b,\ \tr(\cl(a)x) = 1 \}  \\
     &+   \# \{ x \in \Fbn\setminus T :  (x+a)^{2^i+1} + c(x+1)^{2^i+1}   = b,\ \tr(\cl(a)x) = 0 \}  \\
     &+   \# \{x \in \Fbn\setminus T :  (x+a+1)^{2^i+1} + c(x+1)^{2^i+1}  = b,\ \tr(\cl(a)x) = 1 \}.
     \end{split}
     \end{equation}
Let the sets $S_{\bit_1\bit_2}(a,b)$ and $S'_{\bit_1\bit_2}(a,b)$ be defined as
\begin{equation}\label{Sabdefine}
\begin{split}
S_{\bit_1\bit_2}(a,b) := \{x \in \Fbn : (x+a+\bit_1)^{2^i+1} &+ c(x+\bit_2)^{2^i+1} = b, \\
&\tr(\cl(a)x) =\bit_1 +\bit_2 , \ \tr(x^{2^i+1}) = \bit_2\},\\
S'_{\bit_1\bit_2}(a,b) := \{x \in \Fbn : (x+a+\bit_1)^{2^i+1} &+ c(x+\bit_2)^{2^i+1}  = b, \\
&\tr(\cl(a)x) =\bit_1 +\bit_2+1 , \ \tr(x^{2^i+1}) = \bit_2\},
\end{split}
\end{equation}
where $a , b \in \Fbn,  c \in \F_{2^d}\setminus \F_2$ and
$\epsilon_1, \epsilon_2 \in \F_2$.
 Then  \eqref{ainTeqn} and  \eqref{anotinTeqn} can be simply written as
\begin{equation}\label{4partpropre} %\begin{split}
 _c \Delta_G(a,b) =
 \begin{cases}
  \,\, \Sum_{\bit_1, \bit_2 \in \F_2} \#S_{\bit_1\bit_2}(a,b) &\text{for } a \in T  \\
   \,\, \Sum_{\bit_1, \bit_2 \in \F_2} \#S'_{\bit_1\bit_2}(a,b) &\text{for } a \notin
   T .
\end{cases}  \\
%_c \Delta_G(a,b) &= \Sum_{\bit_0, \bit_1 \in \F_2} \#S_{\bit_0\bit_1}(a,b) \text{ for } a \in T;\\
%_c \Delta_G(a,b) &= \Sum_{\bit_0, \bit_1 \in \F_2} \#S'_{\bit_0\bit_1}(a,b) \text{ for } a \notin T, \text{ respectively.}
%\end{split}
\end{equation}

\noindent By Proposition \ref{onesolutionprop},  it holds that
$\#S_{\bit_1\bit_2}(a,b) \leq 1 \text{ for } a \in T$ and
$\#S'_{\bit_1\bit_2}(a,b) \leq 1 \text{ for } a \notin T$ so that
one has $_c \Delta_G(a,b) \leq 4$ for all $ a, b\in \Fbn$. By
further assuming that $n$ is even, the following lemma implies
that $\Sum_{\bit_1, \bit_2 \in \F_2} \#S_{\bit_1\bit_2}(a,b) \leq
2$ (resp., $\Sum_{\bit_1, \bit_2 \in \F_2}
\#S'_{\bit_1\bit_2}(a,b) \leq 2$) when $a\in T$ (resp., when $a
\notin T$) so that one has $_c \Delta_G \leq 2$.

%More strongly, we will show that $\Sum_{\bit_1, \bit_2 \in \F_2}
%\#S_{\bit_1\bit_2}(a,b) \leq 2$ (resp., $\Sum_{\bit_1, \bit_2 \in
%\F_2} \#S'_{\bit_1\bit_2}(a,b) \leq 2$) by showing that three of
%$\#S_{\bit_1\bit_2}(a,b)$'s (resp., $\#S'_{\bit_1\bit_2}(a,b)$'s )
%cannot be $1$ at the same time, which implies $_c \Delta_G(a,b)
%\leq 2$ for all $a \in \Fbn ^{\times}$ and $b \in \Fbn$. The next
%lemma concretes this argument.

\begin{lemma}\label{threeimpossiblelemma}
    Let $c \in \F_{2^d} \setminus \F_2$ with $d=\gcd(2i,n)=\gcd(i,n)<n$ and let $n$ be even.
     Let $a, b \in \Fbn$ and
     let $S_{\bit_1\bit_2}(a,b)$, $S'_{\bit_1\bit_2}(a,b)$ be
    defined as in \eqref{Sabdefine}. Then the followings are satisfied:
    \begin{enumerate}[(i)]
        \item At least one of $S_{00}(a,b)$ and $S_{11}(a,b)$ is the empty set.
        \item At least one of  $S_{01}(a,b)$ and $S_{10}(a,b)$ is the empty set.
        \item At least one of $S_{00}'(a,b)$ and $S_{11}'(a,b)$ is the empty set.
        \item At least one of  $S_{01}'(a,b)$ and $S_{10}'(a,b)$ is the empty set.
    \end{enumerate}
\end{lemma}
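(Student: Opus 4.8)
The plan is to treat all four statements by a single uniform mechanism built on Proposition~\ref{onesolutionprop} and equation~\eqref{pcnmono}. Fix a pair, say $S_{00}(a,b)$ and $S_{11}(a,b)$ from~(i). Each set is cut out by one equation $(x+a+\epsilon_1)^{2^i+1} + c(x+\epsilon_2)^{2^i+1} = b$ together with two trace constraints, and the first thing I would record is that, by Proposition~\ref{onesolutionprop}, the polynomial equation alone has exactly one solution in $\Fbn$. Thus each of $S_{00}, S_{11}$ has at most one element, and is nonempty precisely when that unique solution also meets the trace constraints. The crucial observation is that the right-hand side of~\eqref{pcnmono} depends on $(\epsilon_1,\epsilon_2)$ only through the sum $\epsilon_1+\epsilon_2$; since $(0,0)$ and $(1,1)$ share this sum, the two equations defining $S_{00}$ and $S_{11}$ have identical right-hand sides in~\eqref{pcnmono}, hence the same $(2^i+1)$-th root $\zeta$ (well-defined by Lemma~\ref{2i+1lemma}(i)).

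Comparing the additive shifts $\tfrac{a+\epsilon_1+c\epsilon_2}{c+1}$ on the left of~\eqref{pcnmono} for $(\epsilon_1,\epsilon_2)=(0,0)$ and $(1,1)$, I would compute their difference to be $\tfrac{1+c}{c+1}=1$, so the two unique solutions satisfy $x_{11}=x_{00}+1$. The membership conditions demand $\tr(x_{00}^{2^i+1})=0$ for $S_{00}$ and $\tr(x_{11}^{2^i+1})=1$ for $S_{11}$. The key identity to establish is $\tr\bigl((x+1)^{2^i+1}\bigr)=\tr(x^{2^i+1})$ when $n$ is even: expanding $(x+1)^{2^i+1}=x^{2^i+1}+x^{2^i}+x+1$ and using Frobenius-invariance of the absolute trace, $\tr(x^{2^i})=\tr(x)$, together with $\tr(1)=n\bmod 2=0$, annihilates the extra terms. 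Hence $\tr(x_{11}^{2^i+1})=\tr(x_{00}^{2^i+1})$.

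With these two facts the conclusion is immediate: if both $S_{00}$ and $S_{11}$ were nonempty we would have simultaneously $\tr(x_{00}^{2^i+1})=0$ and $\tr(x_{11}^{2^i+1})=1$, contradicting the identity just proved; so at least one set is empty, giving~(i). I would dispatch (ii), (iii), (iv) by the same two-step recipe: each pair $\{S_{01},S_{10}\}$, $\{S'_{00},S'_{11}\}$, $\{S'_{01},S'_{10}\}$ consists of two sets whose defining equations share the value of $\epsilon_1+\epsilon_2$ (namely $1$, $0$, $0$), so again their unique solutions differ by exactly $1$, while the two imposed values of $\tr(x^{2^i+1})$ are $0$ and $1$; the same trace identity forces one set to be empty. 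I expect no genuine obstacle, since the argument reduces to a one-line incompatibility once the solutions are matched up. The only place demanding care is the bookkeeping: correctly reading off from~\eqref{Sabdefine} which $\tr(x^{2^i+1})$-value each set imposes, checking in every pair that the shift difference collapses to $1$, and noting that the $\tr(\cl(a)x)$ constraint — which by $\tr(\cl(a))=\tr(a)+\tr(a)=0$ is preserved under $x\mapsto x+1$ — is consistent across each pair and therefore plays no role in the contradiction.
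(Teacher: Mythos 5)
Your proof is correct and follows essentially the same route as the paper: the paper likewise pairs the unique solutions of the two equations via the observation that $x_0$ solves the $(\epsilon_1,\epsilon_2)$-equation iff $x_0+1$ solves the $(\epsilon_1+1,\epsilon_2+1)$-equation (you recover the same relation $x_{11}=x_{00}+1$ by comparing shifts in \eqref{pcnmono} rather than by direct substitution), and then invokes the fact that for $n$ even, $x\in T$ iff $x+1\in T$, so the two opposite conditions on $\tr(x^{2^i+1})$ cannot both hold.
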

\begin{proof}   We only prove the first claim, and the rest of the claims can be proved in a similar manner.
It is obvious that $x_0$ is a solution of $ (x+a)^{2^i+1}  +
cx^{2^i+1}  = b$ if and only if $x_0+1$ is
 a solution of $(x+a+1)^{2^i+1} + c (x+1)^{2^i+1}    = b$.  By Proposition \ref{onesolutionprop},
 the both equations have unique solutions, and
   $x_0 \in S_{00}(a,b)$ implies that  $x_0+1$ is a unique
  solution of
  $ (x+a+1)^{2^i+1} + c(x+1)^{2^i+1}   = b$.  Since $n$ is even,
 it holds that $x_0 \in T$ if and only if $x_0+1 \in T$, which means that $S_{11}(a,b) = \phi$  and vice versa.
\end{proof}

\subsection{The proof of $ {}_c \Delta_G =2$}\label{sec32}

\noindent Lemma \ref{threeimpossiblelemma} combined with the
equation \eqref{4partpropre} shows that
 $_c \Delta_G \leq 2$ for any $c\in \F_{2^d}\setminus \F_2$. To
 show that $G$ is APcN (i.e., $_c \Delta_G = 2$), we need to prove that,
 for a given $c$, there exist $a,b \in \Fbn$ such that $_c \Delta_G(a,b) =
 2$.

 %Before proving this argument, we need some lemmas.
   \begin{lemma}\label{independetlemma}
        Suppose that $n=dm$ is even with $d=\gcd(2i,n)=\gcd(i,n)<n$.  Let $c \in \F_{2^d} \setminus
        \F_2$ and $w \in \Fbn^\times$.
   %     Let $c \in \F_{2^d} \setminus \F_2$ with $d=\gcd(2i,n)=\gcd(i,n)<n$ and let $n=dm$ be even.
   % Let $w \in \Fbn^\times$.
   Considering $\Fbn$ as an $m$-dimensional vector space over $\F_{2^d}$, one has the
    followings.
    \begin{enumerate}[(i)]
        \item If $m >3$ with $3 \nmid m$, then $\left\{ \frac{1}{c} , \frac{1}{w^{2^i+1}}, w+w^{2^i}\right\}$ is linearly independent
        over $\F_{2^{d}}$ for $w \in \Fbn \setminus \F_{2^d}$.
        \item If $m >3$ with $3 \mid m$, then $\left\{ \frac{1}{c} , \frac{1}{w^{2^i+1}}, w+w^{2^i}\right\}$ is linearly independent
        over $\F_{2^{d}}$ for $w \in \Fbn\setminus \F_{2^{3d}}$.
        \item If $m =3$,  then  $\left\{ \frac{1}{c} , \frac{1}{w^{2^i+1}}, w+w^{2^i}\right\}$ is linearly dependent over $\F_{2^{d}}$.
         More precisely,
          $$w+w^{2^i} = c\tr^n_d(w)\cdot \frac{1}{c} + \nr^n_d(w)\cdot \frac{1}{w^{2^i+1}}. $$
    \end{enumerate}
   \end{lemma}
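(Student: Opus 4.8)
The plan is to reduce all three statements to a single computation about the Galois conjugates of $w$ over $\F_{2^d}$. Since $c \in \F_{2^d}^\times$, the element $\tfrac1c$ spans the same $\F_{2^d}$-line as $1$, so scaling the first vector does not affect (in)dependence and I may replace $\tfrac1c$ by $1$ throughout; thus it suffices to analyze the list $\{1,\ w^{-(2^i+1)},\ w+w^{2^i}\}$. Write $w_k := w^{2^{ik}}$ for $k \ge 0$; since $i=d\ell$ with $\gcd(m,\ell)=1$, the elements $w_0,\dots,w_{m-1}$ are exactly the conjugates of $w$ under $\mathrm{Gal}(\Fbn/\F_{2^d})$, all nonzero because $w \in \Fbn^\times$, and (using $\gcd(2i,n)=d$) one has $w_k = w_{k+2} \iff w \in \F_{2^d}$. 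I will also repeatedly use Lemma \ref{2i+1lemma}(i): since $x\mapsto x^{2^i+1}$ permutes $\Fbn$ and restricts to the bijection $x\mapsto x^2$ of $\F_{2^d}$, its only preimage of $\F_{2^d}$ is $\F_{2^d}$, i.e. $w^{2^i+1}\in\F_{2^d}\iff w\in\F_{2^d}$.

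For (iii) ($m=3$) I would simply verify the displayed identity. By \eqref{newtr} and \eqref{newnr}, $\tr^n_d(w) = w + w^{2^i} + w^{2^{2i}}$ and $\nr^n_d(w) = w^{1+2^i+2^{2i}}$, whence $c\,\tr^n_d(w)\cdot \tfrac1c + \nr^n_d(w)\cdot w^{-(2^i+1)} = (w + w^{2^i} + w^{2^{2i}}) + w^{2^{2i}} = w + w^{2^i}$ in characteristic $2$; this is the asserted dependence.

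For (i) and (ii) I argue by contradiction from a nontrivial relation $\lambda_1\tfrac1c + \lambda_2\, w^{-(2^i+1)} + \lambda_3(w+w^{2^i}) = 0$ with $\lambda_j \in \F_{2^d}$. If $\lambda_3 = 0$, then a nonzero $\lambda_2$ forces $w^{-(2^i+1)}=\lambda_1/(c\lambda_2)\in\F_{2^d}$, hence $w\in\F_{2^d}$, contradicting the hypothesis (note $\F_{2^d}\subseteq\F_{2^{3d}}$), and then $\lambda_1=0$ too, so the relation is trivial. If $\lambda_3\neq0$, dividing by $\lambda_3$ and absorbing $c$ gives $w+w^{2^i}=\mu+\nu\,w^{-(2^i+1)}$ with $\mu,\nu\in\F_{2^d}$, i.e. $w_0+w_1 = \mu + \nu/(w_0 w_1)$. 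The key step is to apply the automorphism $x\mapsto x^{2^i}$, which fixes $\mu,\nu\in\F_{2^d}\subseteq\F_{2^i}$ and sends $w_k\mapsto w_{k+1}$, yielding $w_k+w_{k+1}=\mu+\nu/(w_k w_{k+1})$ for every $k$. Adding the instances for $k$ and $k+1$ gives $w_k+w_{k+2}=\nu(w_k+w_{k+2})/(w_k w_{k+1}w_{k+2})$; since $w\notin\F_{2^d}$ forces $w_k\neq w_{k+2}$, I conclude $\nu=w_k w_{k+1}w_{k+2}$ for all $k$. Comparing $k=0$ with $k=1$ and cancelling $w_1 w_2\neq0$ yields $w_0=w_3$, i.e. $w^{2^{3i}}=w$, so $w\in\F_{2^{\gcd(3i,n)}}=\F_{2^{d\gcd(3,m)}}$. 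When $3\nmid m$ this subfield is $\F_{2^d}$ and when $3\mid m$ it is $\F_{2^{3d}}$, contradicting the hypotheses of (i) and (ii) respectively.

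The main obstacle is the case $\lambda_3\neq0$: the crux is recognizing that the single scalar relation propagates to all conjugates precisely because $\mu,\nu$ lie in $\F_{2^d}$, and that differencing consecutive conjugate relations collapses them to $\nu=w_k w_{k+1}w_{k+2}$, which forces the period condition $w_0=w_3$. The remaining bookkeeping is routine: computing $\gcd(3i,n)=d\gcd(3,m)$ from $\gcd(\ell,m)=1$, and checking the elementary implications $w^{2^i+1}\in\F_{2^d}\Rightarrow w\in\F_{2^d}$ and $w_k\neq w_{k+2}$ for $w\notin\F_{2^d}$.
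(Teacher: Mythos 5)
Your argument is correct and is essentially the paper's own proof in different notation: both eliminate the $\F_{2^d}$-constant by applying the Frobenius $x\mapsto x^{2^i}$ and differencing, reduce to the condition $w^{1+2^i+2^{2i}}\in\F_{2^d}$ (your $\nu=w_kw_{k+1}w_{k+2}$ is the paper's $a_2+a_3w^{2^{2i}+2^i+1}=0$), and conclude via $w^{2^{3i}}=w\Rightarrow w\in\F_{2^{d\gcd(3,m)}}$, with part (iii) checked by the same direct computation.
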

   \begin{proof}
    For $(i)$ and $(ii)$, suppose that
    \begin{equation}\label{eqn6}
    \dfrac{a_1}{c} +\dfrac{a_2}{w^{2^i+1}} + a_3 (w+w^{2^i}) = 0
    \end{equation}
      for some $a_1,a_2,a_3 \in \F_{2^d}$.
    Letting $A = \dfrac{a_2}{w^{2^i+1}} + a_3 (w+w^{2^i})  = \dfrac{a_1}{c} \in \F_{2^d} \subset \F_{2^i}$, one has
    \begin{equation}\label{eqn7}
    \begin{split}
    0 = A^{2^i} + A  &= \dfrac{a_2}{w^{2^{2i}+2^{i}}} + a_3 (w^{2^{i}}+w^{2^{2i}})  + \dfrac{a_2}{w^{2^i+1}} + a_3 (w+w^{2^i}) \\
    &= \dfrac{a_2+a_2w^{2^{2i}-1}}{w^{2^{2i}+2^{i}}} + a_3 (w+w^{2^{2i}})   \\
    &=\dfrac{w^{2^{2i}-1} + 1}{w^{2^{2i}+2^{i}}}\left(a_2 + a_3w^{2^{2i} + 2^i + 1}\right).  \\
    \end{split}
    \end{equation}
    Since $w \in \Fbn \setminus \F_{2^d}$ and $\gcd(n,2i) = d$, it holds that  $w^{2^{2i}-1} + 1 \neq 0$,
     which implies that $a_2 + a_3w^{2^{2i} + 2^i + 1} = 0$.
    When $m > 3$ and $3 \nmid m$, one gets $(w^{2^{2i} + 2^i + 1})^{2^i-1}= w^{2^{3i}-1} \neq 1$
     and so that $w^{2^{2i} + 2^i + 1} \notin \F_{2^d}^{\times}$.
     % If $a_3 \neq 0$ then  \eqref{eqn7} implies that
     %$w^{2^{2i} + 2^i + 1} =a_2/a_3 \in \F_{2^d}^{\times}$, which is a contradiction.
     Therefore $a_2 + a_3w^{2^{2i} + 2^i + 1} = 0$ implies that
       $a_3 = a_2 = 0$ and consequently $a_1 = 0$, which proves
       $(i)$. For the assertion $(ii)$, using
         $3 \mid m >3 $ and $ w \in \Fbn\setminus \F_{2^{3d}}$,  one gets $w^{2^{3i}-1} \neq 1$ and $w^{2^{2i} + 2^i + 1} \notin \F_{2^d}^{\times}$
          in a similar manner, and therefore  $a_3 = a_2 =a_1 = 0$.

    For $(iii)$, suppose that $m=3$. By the equations \eqref{newtr} and \eqref{newnr}, one has
     $\tr^n_d(x) = x+x^{2^i}+x^{2^{2i}}$ and  $\nr^n_d(x) = x^{1+2^i+2^{2i}}$.
     Therefore it follows that
    \begin{equation*}
    \begin{split}
    w+w^{2^i} = w+w^{2^i}+w^{2^{2i}} +w^{2^{2i}} &= w+w^{2^i}+w^{2^{2i}} + \frac{w^{1+2^i+2^{2i}}}{ w^{1+2^i}}\\
    &= \tr^n_d(w) + \frac{\nr^n_d(w)}{ w^{1+2^i}} \\
    &= c\tr^n_d(w)\cdot \frac{1}{c} + \nr^n_d(w)\cdot
    \frac{1}{w^{2^i+1}},
    \end{split}
    \end{equation*}
 where the coefficients $c\tr^n_d(w), \nr^n_d(w)$ are  in
 $\F_{2^{d}}$.
   \end{proof}
\begin{lemma}\label{trnrlemma}
    Let $m=3$ and $d$ be even such that $n=3d$. Let $H = \{x^3  : x \in  \F_{2^{d}}^{\times}\}.$ Then the followings hold:
    \begin{enumerate}[$(i)$]
        \item For any $\zeta \in \F_{2^{d}}^\times \setminus H,$
         one has  $\F_{2^{d}}^\times = \zeta H + \zeta^2 H \overset{\rm{def}}{=} \{\zeta h_1+\zeta^2 h_2 : h_1, h_2 \in H \}$.
        \item For any $c' \in \F_{2^{d}} \setminus \F_2$, there exists $w \in \Fbn\setminus\F_{2^{d}}$
        such that $\tr^n_d(w) = 0$ and $\nr^n_d(w) = c'$.
    \end{enumerate}
\end{lemma}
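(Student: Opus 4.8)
The plan is to establish (i) first and then deduce (ii) from it, using Proposition \ref{qubicequationprop} applied over $\F_{2^d}$ (legitimate, since $d$ is even and the constant terms we encounter are nonzero; the ``$\tr$'' there becomes the absolute trace $\tr^d_1$ of $\F_{2^d}$). Throughout write $q=2^d$; since $d$ is even we have $3\mid q-1$, so $H$ is the unique index-$3$ subgroup of $\F_{2^d}^\times$ and $\F_{2^d}^\times = H\sqcup \zeta H\sqcup \zeta^2 H$ for any non-cube $\zeta$.

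For (i), I would first note that the claim is equivalent to $H+\zeta H=\F_{2^d}^\times$ (multiply the asserted identity by $\zeta^{-1}$), and that $0$ is automatically excluded: $\zeta h_1=\zeta^2 h_2$ would give $\zeta=h_1 h_2^{-1}\in H$, a contradiction, so $\zeta H+\zeta^2 H\subseteq\F_{2^d}^\times$ and only the covering needs proof. I would count representations of a fixed $\beta\in\F_{2^d}^\times$ with a cubic multiplicative character $\chi$: expanding the coset indicator $\mathbf 1[x\in\zeta H]=\tfrac13\sum_{j=0}^{2}\chi^{j}(x)\overline{\chi^{j}(\zeta)}$ turns the representation number into a main term $\tfrac{q-2}{9}$ plus a combination of Jacobi sums $\chi^{j+k}(\beta)J(\chi^j,\chi^k)$. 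Using $|J(\chi^j,\chi^k)|=\sqrt q$ when $\chi^j,\chi^k,\chi^{j+k}$ are all nontrivial and $|J|=1$ otherwise, the count is at least $\tfrac19\bigl(q-2\sqrt q-8\bigr)$, which is positive once $q$ exceeds a small threshold; the finitely many remaining small fields are checked by hand. This covering estimate is the main obstacle, as the Weil/Jacobi bound is only comfortably positive for large $q$.

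For (ii), fix $c'\in\F_{2^d}\setminus\F_2$, so $c'\neq0$. Because $[\F_{2^n}:\F_{2^d}]=3$ admits no intermediate field, producing the desired $w$ is the same as producing an \emph{irreducible} cubic $X^3+s_2X+c'$ over $\F_{2^d}$: its vanishing $X^2$-coefficient is the relative trace and its constant term is the relative norm (signs being irrelevant in characteristic $2$), so any root lies in $\F_{2^n}\setminus\F_{2^d}$ with $\tr^n_d(w)=0$ and $\nr^n_d(w)=c'$. I would build $s_2$ by choosing the roots of the auxiliary quadratic of Proposition \ref{qubicequationprop} first: by (i) write $c'=\zeta h_1+\zeta^2 h_2$ and set $r_1=\zeta h_1\in\zeta H$, $r_2=\zeta^2 h_2\in\zeta^2 H$. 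Then $r_1+r_2=c'$, both $r_1,r_2$ are non-cubes, and $r_1r_2=\zeta^3 h_1h_2\in H$, so $r_1r_2=s_2^3$ for some $s_2\in\F_{2^d}^\times$. With $a=s_2,\,b=c'$ the quadratic $x^2+c'x+s_2^3$ factors as $(x+r_1)(x+r_2)$, so neither of its roots is a cube — exactly the second irreducibility condition.

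It remains to verify the trace condition $\tr^d_1(s_2^3/c'^2)=0$, and the pleasant point is that it is automatic: writing $\rho=r_1/c'$ (so $r_2/c'=1+\rho$) gives
\begin{equation*}
\frac{s_2^3}{c'^2}=\frac{r_1r_2}{(r_1+r_2)^2}=\rho(1+\rho)=\rho+\rho^2,
\end{equation*}
hence $\tr^d_1(s_2^3/c'^2)=\tr^d_1(\rho)+\tr^d_1(\rho^2)=0$ by Frobenius-invariance of the absolute trace. Both hypotheses of Proposition \ref{qubicequationprop} hold, so $X^3+s_2X+c'$ is irreducible over $\F_{2^d}$ and any root is the required $w$. (When $c'$ is itself a non-cube one may bypass (i) and take $s_2=0$, since $X^3+c'$ then has no root in $\F_{2^d}$ and is therefore irreducible; this directly settles (ii) in the smallest fields, where the covering in (i) is delicate.)
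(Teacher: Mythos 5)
Your part (ii) is, modulo presentation, the paper's own argument: write $c'=\zeta h_1+\zeta^2 h_2$ via (i), observe that the two summands are non-cube roots of the quadratic $x^2+c'x+\widetilde{c}^{\,3}$ where $\widetilde{c}^{\,3}=\zeta^3h_1h_2$ is a cube, and invoke Proposition \ref{qubicequationprop} to conclude that $x^3+\widetilde{c}x+c'$ is irreducible over $\F_{2^d}$, so that any root supplies the required $w$. Your explicit verification $\tr^d_1(s_2^3/c'^2)=\tr^d_1(\rho+\rho^2)=0$ is a welcome addition; the paper leaves this implicit (it follows because the auxiliary quadratic visibly splits over $\F_{2^d}$, which is equivalent to that trace condition). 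For part (i) you take a genuinely different route: the paper's Appendix \ref{appendix} argues by induction on the $2$-adic valuation of $d$, decomposing $\F_{2^{2t}}^\times$ as $\F_{2^t}^\times\cdot U$ with $U$ the norm-one subgroup, whereas you count representations with cubic characters and Jacobi sums. Your method is shorter and cleaner for large fields, at the price of leaving $q=4$ and $q=16$ (the only even-$d$ values with $q-2\sqrt{q}-8\le 0$) to direct computation.

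The genuine problem is that the hand check at $q=4$ does not succeed, because claim (i) is in fact false for $d=2$: there $H=\{1\}$, and for $\zeta=\omega$ a primitive cube root of unity one gets $\zeta H+\zeta^2 H=\{\omega+\omega^2\}=\{1\}\neq\F_4^\times$. (The paper's appendix shares this blind spot: its base case needs $u\in U$ with $u^3\notin\F_{2^t}^\times$, which forces $u^3\neq 1$ and is impossible when $t=1$, since then $\#U=3$.) So ``the remaining small fields are checked by hand'' cannot stand as written; you must either restrict (i) to $d\ge 4$ or record the failure at $d=2$. Fortunately your closing parenthesis already contains the repair for what the paper actually uses: when $d=2$ every $c'\in\F_4\setminus\F_2$ is a non-cube, so $x^3+c'$ has no root in $\F_4$ and is irreducible, settling (ii) with $s_2=0$ independently of (i). I would promote that aside to the official treatment of $d=2$, verify $q=16$ directly (where (i) does hold), and let the Jacobi-sum bound carry all $q\ge 64$.
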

\begin{proof}
     Claim $(i)$ seems to be a folklore result and we provide a proof in Appendix \ref{appendix}.
     Now, by the claim $(i)$, for a given $c' \in \F_{2^{d}} \setminus \F_2$, it holds that $c' \in \zeta H + \zeta^2 H$
      where $\zeta \in \F_{2^{d}} \setminus H$. There exist $h_1,h_2 \in H$ such that $c' = \zeta h_1 + \zeta^2 h_2$
      where  $\zeta h_1$ and $ \zeta^2 h_2$ are not cubes in $\F_{2^d}$. Since $\zeta^3h_1h_2$ is a cube,
       there exists $\widetilde{c} \in \F_{2^{d}}$ such that $ \widetilde{c}^3 = \zeta^3h_1h_2$.
    Then we have the following factorization in $\F_{2^d}[x]$
     $$x^2 + c'x + \widetilde{c}^3 = (x+\zeta h_1) (x+ \zeta^2 h_2),$$
     which implies that the cubic $x^3+\widetilde{c}x +c'$ is irreducible over $\F_{2^d}$ by Proposition \ref{qubicequationprop}.
     Since $m=3$, there exists $w \in \Fbn \setminus \F_{2^{d}}$ such that $x^3+\widetilde{c}x +c' = (x+w)(x+w^{2^i})(x+w^{2^{2i}})
      = x^3 + \tr^n_d(w)x^2 + \widetilde{c}x + \nr^n_d(w)$, which completes the proof of the claim $(ii)$.

\end{proof}

\begin{lemma}\label{3trlemma} Let $c \in \F_{2^d} \setminus \F_2$ with $d=\gcd(2i,n)=\gcd(i,n)<n$ and let $n=dm$ be even.
Then, one can choose $w,z \in \Fbn$
  satisfying all of the following trace conditions:\\
$(i)$ $\tr^n_d\left(\dfrac{z}{c}\right) = 1$,  \qquad $(ii)$
$\tr^n_d \left(\dfrac{z}{w^{2^i+1}}\right)
 = 1$, \qquad $(iii)$  $\tr\left(\dfrac{z(w+w^{2^i})}{c(c+1)}\right) = 1$.
\end{lemma}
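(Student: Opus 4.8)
The plan is to treat the two regimes $m>3$ and $m=3$ separately, since by Lemma~\ref{independetlemma} the behaviour of the triple $\{\tfrac1c,\ \tfrac{1}{w^{2^i+1}},\ w+w^{2^i}\}$ over $\F_{2^d}$ is genuinely different in these two cases. Throughout I write $u_1=\tfrac1c$, $u_2=\tfrac{1}{w^{2^i+1}}$, $u_3=w+w^{2^i}$, and lean on two standard facts: the relative trace form $(x,y)\mapsto \tr^n_d(xy)$ is a nondegenerate $\F_{2^d}$-bilinear form on $\Fbn$, so that any $\F_{2^d}$-linearly independent tuple $(u_j)$ yields a \emph{surjective} evaluation map $z\mapsto(\tr^n_d(zu_j))_j$ onto the corresponding power of $\F_{2^d}$; and the transitivity $\tr=\tr^d_1\circ\tr^n_d$. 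The latter lets me rewrite condition $(iii)$: since $\tfrac{1}{c(c+1)}\in\F_{2^d}$ and $\tr^n_d$ is $\F_{2^d}$-linear,
\[
\tr\!\left(\frac{z(w+w^{2^i})}{c(c+1)}\right)=\tr^d_1\!\left(\frac{1}{c(c+1)}\,\tr^n_d(zu_3)\right),
\]
so $(iii)$ depends on $z$ only through the value $\theta:=\tr^n_d(zu_3)\in\F_{2^d}$.

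\emph{Case $m>3$.} By Lemma~\ref{independetlemma}(i)--(ii) I may pick $w\in\Fbn\setminus\F_{2^d}$ (resp. $w\in\Fbn\setminus\F_{2^{3d}}$ when $3\mid m$), for which $\{u_1,u_2,u_3\}$ is $\F_{2^d}$-linearly independent; such $w$ exists because $m>3$ forces the relevant subfield to be proper. First I choose the target value $\theta\in\F_{2^d}$: as $\tfrac{1}{c(c+1)}\neq0$ (here $c\neq0,1$), the map $\theta\mapsto\tr^d_1(\theta/(c(c+1)))$ is a nonzero $\F_2$-linear functional on $\F_{2^d}$, hence onto $\F_2$, so I can fix $\theta$ with $\tr^d_1(\theta/(c(c+1)))=1$. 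Then, by nondegeneracy together with the independence of $u_1,u_2,u_3$, the map $z\mapsto(\tr^n_d(zu_1),\tr^n_d(zu_2),\tr^n_d(zu_3))$ is onto $\F_{2^d}^3$, so I solve for $z$ with value $(1,1,\theta)$. This $z$ satisfies $(i)$ and $(ii)$ directly, and $(iii)$ by the reduction above.

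\emph{Case $m=3$.} Here $n=3d$ is even with $3$ odd, so $d$ is even and Lemma~\ref{trnrlemma} applies. By Lemma~\ref{independetlemma}(iii) the triple is dependent, with $u_3=c\,\tr^n_d(w)\,u_1+\nr^n_d(w)\,u_2$, and the idea is to annihilate the $u_1$-term. I first pick $c'\in\F_{2^d}\setminus\F_2$ with $\tr^d_1\!\big(c'/(c(c+1))\big)=1$; this is possible because the preimage of $1$ under the nonzero functional $c'\mapsto\tr^d_1(c'/(c(c+1)))$ has $2^{d-1}\ge2$ elements (as $d\ge2$), none equal to $0$, so at least one lies outside $\{0,1\}$. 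By Lemma~\ref{trnrlemma}(ii) I then choose $w\in\Fbn\setminus\F_{2^d}$ with $\tr^n_d(w)=0$ and $\nr^n_d(w)=c'$, whence $u_3=c'u_2$. One checks $\{u_1,u_2\}$ is $\F_{2^d}$-linearly independent, i.e. $w^{2^i+1}\notin\F_{2^d}$, which holds because $w^{2^i+1}$ is a product of two distinct $\F_{2^d}$-conjugates of $w$ while $\F_{2^{2d}}\cap\Fbn=\F_{2^d}$. Hence $z\mapsto(\tr^n_d(zu_1),\tr^n_d(zu_2))$ is onto $\F_{2^d}^2$, and I solve for $z$ with value $(1,1)$, giving $(i)$ and $(ii)$; then $(iii)$ is automatic, since $\tr(zu_3/(c(c+1)))=\tr^d_1\!\big(c'\,\tr^n_d(zu_2)/(c(c+1))\big)=\tr^d_1(c'/(c(c+1)))=1$.

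The routine part is the linear algebra, which is immediate from nondegeneracy of the trace form. The main obstacle is the case $m=3$, where the three quantities are forced to be dependent: one cannot simply appeal to surjectivity onto $\F_{2^d}^3$, and the argument must instead exploit the explicit dependence relation of Lemma~\ref{independetlemma}(iii) together with the norm/trace realizability of Lemma~\ref{trnrlemma}(ii). The success of the whole scheme hinges on the counting step producing an admissible $c'\in\F_{2^d}\setminus\F_2$, and this is precisely where the hypothesis that $d$ is even (so $d\ge2$, giving $2^{d-1}\ge2$) is used.
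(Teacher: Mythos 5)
Your proof is correct and follows essentially the same route as the paper: the same split into $m>3$ and $m=3$, the same reliance on Lemma \ref{independetlemma} and Lemma \ref{trnrlemma}(ii), and the same choice of $c'$ with $\tr^d_1\bigl(c'/(c(c+1))\bigr)=1$; your appeal to nondegeneracy of the trace form is just the paper's dual-basis construction in different words. If anything you are slightly more careful than the paper in checking that $c'$ can be taken outside $\F_2$ (as Lemma \ref{trnrlemma}(ii) requires) and that $\{1/c,\,1/w^{2^i+1}\}$ is independent when $m=3$.
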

\begin{proof}
Let $\beta_1 = \frac{1}{c}$, $\beta_2 = \frac{1}{w^{2^i+1}}$ and
$\beta_3 = w+w^{2^i}$. We shall consider two cases: $m>3$ and
$m=3$.

\medskip

\textbf{Case 1 :} $m > 3$.   By Lemma \ref{independetlemma}, for
any $w \in \Fbn\setminus \F_{2^{3d}}$,
$\{\beta_1,\beta_2,\beta_3\}$ is a linearly independent subset
over $\F_{2^{d}}$. Then one can find $\beta_4, \cdots, \beta_m \in
\Fbn$ such that $\{\beta_1, \beta_2, \cdots, \beta_m\}$ is a basis
of $\Fbn$ over $\F_{2^{d}}$.  Also, there is a unique dual basis
$\{\gm_1, \cdots, \gm_m\}$ $($ See \cite{LN07}$)$ for $\Fbn$ over
$\F_{2^d}$ such that
$$\tr^n_d(\beta_i\gm_j) = \begin{cases}
1 &\text{ if } i = j\\
0 &\text{ if } i \neq j .
\end{cases}$$
Choosing any $c' \in \F_{2^d}$ satisfying
$\tr^d_1\left(\frac{c'}{c(c+1)}\right) = 1$ and letting $z = \gm_1
+ \gm_2 + c'\gm_3$, one gets
\begin{equation*}
\begin{split}
\tr^n_d\left(\dfrac{z}{c}\right)  &= \tr^n_d\left (\beta_1(\gm_1 +
\gm_2 + c'\gm_3)\right ) = 1, \\
\tr^n_d\left(\dfrac{z}{w^{2^i+1}}\right)  &= \tr^n_d\left (\beta_2(\gm_1 + \gm_2 + c'\gm_3)\right ) = 1, \\
\tr\left(\dfrac{z(w+w^{2^i})}{c(c+1)}\right)  &= \tr\left
(\frac{\beta_3(\gm_1 + \gm_2 + c'\gm_3)}{c(c+1)}\right ) =
\tr^d_1\left( \tr^n_d\left (\frac{\beta_3(\gm_1 + \gm_2 +
c'\gm_3)}{c(c+1)}\right )\right) =
\tr^d_1\left(\frac{c'}{c(c+1)}\right) = 1.
\end{split}
\end{equation*}

\textbf{Case 2 :} $m = 3$. Choose any $c' \in \F_{2^d}$ such that
$\tr^d_1\left(\frac{c'}{c(c+1)}\right) = 1$. Using Lemma
\ref{trnrlemma},  one can also find $w \in \Fbn \setminus
\F_{2^{d}}$ satisfying $\tr^n_d(w) = 0$ and $\nr^n_d(w) = c'$. By
Lemma \ref{independetlemma}, it holds that
$$\beta_3 =
c\tr^n_d(w)\beta_1 + \nr^n_d(w)\beta_2 = c'\beta_2.$$
 Since $w
\notin \F_{2^{d}}$, $\{\beta_1,\beta_2\}$ is linearly independent
over $\F_{2^{d}}$. Thus one can find $\beta_3'\in \Fbn$ such that
$\{\beta_1, \beta_2, \beta_3'\}$ is a basis of $\Fbn$ over
$\F_{2^{d}}$. Letting $z = \gm_1 + \gm_2$ where $\{\gm_1, \gm_2,
\gm_3\}$ is a unique dual basis of $\{\beta_1, \beta_2,
\beta_3'\}$,   one gets
\begin{equation*}
\begin{split}
\tr^n_d\left(\dfrac{z}{c}\right)  &= \tr^n_d\left (\beta_1(\gm_1 +
\gm_2 )\right ) = 1, \\
\tr^n_d\left(\dfrac{z}{w^{2^i+1}}\right)  &= \tr^n_d\left (\beta_2(\gm_1 + \gm_2)\right ) = 1, \\
\tr\left(\dfrac{z(w+w^{2^i})}{c(c+1)}\right)  &= \tr\left
(\frac{\beta_3(\gm_1 + \gm_2 )}{c(c+1)}\right )  = \tr^d_1\left(
\tr^n_d\left (\frac{c'\beta_2(\gm_1 + \gm_2)}{c(c+1)}\right
)\right) = \tr^d_1\left(\frac{c'}{c(c+1)}\right) = 1.
\end{split}
\end{equation*}
\end{proof}

Let \begin{equation}\label{faeqn}
  f_a(x) = (x+a)^{2^i+1} + cx^{2^i+1}
  \end{equation} where $c \in \F_{2^d} \setminus \F_2$ and $a \in \Fbn$.
  Recall that Proposition  \ref{onesolutionprop} implies
  that $f_a$ is bijective on $\Fbn$. Therefore, for a given $x \in \Fbn$,
  there exist unique $b$ and  $y$ in $\Fbn$ satisfying
   \begin{equation}\label{f_axbf_a+1yeqn}
   f_a(x) = b = f_{a+1}(y).
   \end{equation}
   We  show that one can choose $x,y \in \Fbn$ satisfying \eqref{f_axbf_a+1yeqn} and  the following trace condition,
    which play an important role in the proof of Theorem
    \ref{maintheorem}.

  \begin{prop}\label{Jprop}
  Let $c \in \F_{2^d} \setminus \F_2$ with $d=\gcd(2i,n)=\gcd(i,n)<n$ and let $n=dm$ be even.
  Then, there exist $x, y \in \Fbn$
   and $a \in \Fbn$ such that
     $$ f_a(x) = f_{a+1}(y) \quad \textrm{ and } \quad \tr\left((c+1)(x^{2^i+1} +y^{2^i+1})\right) =1.$$
  \end{prop}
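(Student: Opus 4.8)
The plan is to use the explicit solution formula of Proposition \ref{onesolutionprop} to trade the equation $f_a(x)=f_{a+1}(y)$ for a single relation between the ``root parts'' of $x$ and $y$, and then to reduce the trace condition to a purely additive one that the trace lemmas of this section are tailored to satisfy.

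First I would parametrize the pair $(x,y)$. Since $f_a,f_{a+1}$ are bijective, fix $a$ and let $b$ range over $\Fbn$; writing $x=\frac{a}{c+1}+u$ and $y=\frac{a+1}{c+1}+v$, equation \eqref{pcnmono} gives $u^{2^i+1}=\frac{c}{(c+1)^2}a^{2^i+1}+\frac{b}{c+1}$ and $v^{2^i+1}=\frac{c}{(c+1)^2}(a+1)^{2^i+1}+\frac{b}{c+1}$, so that $f_a(x)=f_{a+1}(y)=b$ holds automatically and, subtracting, $u^{2^i+1}+v^{2^i+1}=\frac{c}{(c+1)^2}(a^{2^i}+a+1)$. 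Thus $a$ and $u$ (equivalently $a$ and $b$) are the free data, and $v$ is determined.

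Second, I would simplify the target trace. Expanding $(c+1)(x^{2^i+1}+y^{2^i+1})$ (using $c\in\F_{2^d}\subset\F_{2^i}$) and applying $\tr$, the purely-in-$a$ contribution collapses: $\tr(a^{2^i}+a+1)=0$ since $n$ is even, while the cross terms give $\tr(\cl(a)u)+\tr(\cl(a)v)$ by Lemma \ref{Tclosinglemma}, using $\cl(a+1)=\cl(a)$ (as $\cl(1)=0$). Hence, writing $s=u+v$ and noting $\tr(\cl(a)\tfrac{1}{c+1})=0$ because $\tr^n_d(\cl(a))=0$, one gets $\tr\bigl((c+1)(x^{2^i+1}+y^{2^i+1})\bigr)=\tr(\cl(a)s)$. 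So it suffices to produce $a\notin\F_{2^d}$ (so $\cl(a)\neq0$ by Lemma \ref{2i+1lemma}(iii)) and an \emph{admissible} $s$ with $\tr(\cl(a)s)=1$, where $s$ is admissible precisely when $s u^{2^i}+s^{2^i}u=\frac{c}{(c+1)^2}(a^{2^i}+a+1)+s^{2^i+1}$ is solvable in $u$. The map $u\mapsto su^{2^i}+s^{2^i}u$ has kernel $s\F_{2^d}$ and image $\{s^{2^i+1}r:\tr^n_d(r)=0\}$, so (since $m$ is odd, $\tr^n_d(1)=1$) admissibility is equivalent to the single condition $\tr^n_d\!\left(\frac{c}{(c+1)^2}\cdot\frac{a^{2^i}+a+1}{s^{2^i+1}}\right)=1$.

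Finally, I would realize these two conditions via the $w,z$ furnished by Lemma \ref{3trlemma}: the admissibility condition is an $\tr^n_d$-condition of the shape (i)/(ii) and the condition $\tr^n_1(\cl(a)s)=1$ is of the shape (iii), once $a^{2^i}+a+1$ and $\cl(a)$ are expressed through $w$ (solvability of $a^{2^i}+a=\cdots$ holding because the relevant $\tr^n_d$ vanishes) and $s$ through $z$; the case $m>3$ uses the linear independence of $\{1/c,1/w^{2^i+1},w+w^{2^i}\}$ and a dual basis (Lemma \ref{independetlemma}(i)--(ii)), and the case $m=3$ uses the dependence relation of Lemma \ref{independetlemma}(iii) together with the norm/trace surjectivity of Lemma \ref{trnrlemma}. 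The hard part is precisely this matching: $a$ and $s$ are coupled through both $\cl(a)$ and $a^{2^i}+a+1$, and the non-invertibility of $u\mapsto su^{2^i}+s^{2^i}u$ forces the admissibility condition, so $s$ cannot be chosen freely after fixing $a$; reconciling the two resulting conditions with the prepared traces of $w,z$ --- and the clean split between the generic $m>3$ (independence) and the degenerate $m=3$ (dependence, rescued by norm surjectivity) --- is where Lemmas \ref{independetlemma}--\ref{trnrlemma} do the real work.
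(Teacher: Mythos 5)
Your proposal is correct and is essentially the paper's own proof read in the opposite direction: the paper fixes $w,z$ from Lemma \ref{3trlemma}, produces $a$ and $X$ by Hilbert 90 from $z/c=a^{2^i}+a+1$ and $z/w^{2^i+1}=X^{2^i}+X+1$, and sets $\theta=wX$, $\tau=\theta+w$, which in your parametrization is exactly $u=\theta/(c+1)$, $v=\tau/(c+1)$, $s=w/(c+1)$, so that your admissibility condition $\tr^n_d\bigl(\tfrac{c}{(c+1)^2}\cdot\tfrac{a^{2^i}+a+1}{s^{2^i+1}}\bigr)=\tr^n_d\bigl(z/w^{2^i+1}\bigr)=1$ is condition (ii) and $\tr(\cl(a)s)=1$ is condition (iii). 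The only slip is in your last paragraph, where the roles of $w$ and $z$ are interchanged: the shape of the admissibility condition forces $a^{2^i}+a+1=z/c$ (so $a$ is produced from $z$, with condition (i) guaranteeing solvability) and $s=w/(c+1)$, not the other way around; with that correction the matching works exactly as you describe.
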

\begin{proof}
    By Lemma \ref{3trlemma}, one can find $w,z \in \Fbn$
    satisfying $$\tr^n_d\left(\dfrac{z}{c}\right) = 1,\quad \tr^n_d \left(\dfrac{z}{w^{2^i+1}}\right)  = 1,
    \quad  \tr\left(\dfrac{z(w+w^{2^i})}{c(c+1)}\right) = 1.$$ Since $m$ is odd, it holds that
     $\tr^n_d\left(\dfrac{z}{c}+1\right) = \tr^n_d \left(\dfrac{z}{w^{2^i+1}}+1\right)  = 0$ and, by Hilbert Theorem $90$,
      there exist
      $a, X \in \Fbn \setminus \F_{2^{d}}$ satisfying
       $$\frac{z}{c} = a^{2^i} + a+ 1 \quad \textrm{  and  } \quad
      \frac{z}{w^{2^i+1}} = X^{2^i} + X + 1. $$
      Letting $\theta = wX$ and $\tau = \theta + w$, one has
    \begin{equation}\label{wzeqn}
   z= \theta^{2^i}w +\theta w^{2^i}   +w^{2^i+1} = \theta^{2^i+1} +\tau^{2^i+1}  \quad \textrm{  and  }\quad   w= \theta+ \tau.
    %\theta+ \tau = w , \quad \theta^{2^i+1} +\tau^{2^i+1} = \theta^{2^i}w +\theta w^{2^i}   +w^{2^i+1} =  z  .
    \end{equation}
    Letting $b = \frac{\theta^{2^i+1} + ca^{2^i+1}}{c+1}$, we get
      \begin{equation}\label{thetataueqn}\begin{split}
     \theta^{2^i+1} &= (c+1)b + ca^{2^i+1},\\
\tau^{2^i+1} &= \theta^{2^i+1}+z =  (c+1)b + ca^{2^i+1} +
c(a^{2^i} + a+ 1) =(c+1)b + c(a+1)^{2^i+1}.
     \end{split}
    \end{equation}

    % $$x = \frac{a+\theta}{c+1}, \quad y=\frac{a+1+\tau}{c+1}, \quad b = \frac{\theta^{2^i+1} + ca^{2^i+1}}{c+1},$$

\noindent Therefore, letting $x = \frac{a+\theta}{c+1}$ and
$y=\frac{a+1+\tau}{c+1}$, we have the following equations
\begin{equation*}
\begin{split}
  \left(x+\frac{a}{c+1}\right)^{2^i+1} &=   \frac{\theta^{2^i+1}}{(c+1)^{2^i+1}} =   \frac{ca^{2^i+1} + (c+1)b}{(c+1)^{2}}, \\
  \left(y+\frac{a+1}{c+1}\right)^{2^i+1} &=   \frac{\tau^{2^i+1}}{(c+1)^{2^i+1}} =  \frac{ c(a+1)^{2^i+1} +
  (c+1)b}{(c+1)^{2}},
\end{split}
\end{equation*}
 which implies   that $f_a(x) = b = f_{a+1}(y)$ by Proposition \ref{onesolutionprop}.
 Finally, using the property  $\tr^n_d(u^{2^i}+u)=0$ for all $u\in
 \Fbn$ repeatedly,
\begin{equation*}
\begin{split}
  \tr&\left((c+1)(x^{2^i+1} +y^{2^i+1})\right) =   \tr^d_1\left(\tr^n_d\left(\dfrac{(a+\theta)^{2^i+1}+(a+1+\tau)^{2^i+1}}{c+1}\right)\right) \\
  &=\tr^d_1\left(\dfrac{1}{c+1}\tr^n_d\left((a+\theta)^{2^i+1}+(a+1+\tau)^{2^i+1}\right)\right)\\
    &=\tr^d_1\left(\dfrac{1}{c+1}\tr^n_d\left(a^{2^i}(\theta +\tau)+ a(\theta+\tau)^{2^i} + a^{2^i} + a + \theta^{2^i+1}
     + \tau^{2^i+1} +\tau^{2^i} +\tau +1 \right)\right)\\
   % &=\tr^d_1\left(\dfrac{1}{c+1}\tr^n_d\left(a^{2^i}(\theta +\tau)+ a(\theta+\tau)^{2^i} + \theta^{2^i+1} + \tau^{2^i+1}  +1 \right)\right)\\
     &=\tr^d_1\left(\dfrac{1}{c+1}\tr^n_d\left(a^{2^i}w+ aw^{2^i} + z  +1 \right)\right) \qquad (\because \textrm{ by }\eqref{wzeqn})\\
   %  &=\tr^d_1\left(\dfrac{1}{c+1}\tr^n_d\left(a^{2^i}w+ aw^{2^i} +aw+ a^{2^i}w^{2^i} + z  +1 \right)\right)\\
     &=\tr^d_1\left(\dfrac{1}{c+1}\tr^n_d\left((a^{2^i}+a)(w +w^{2^i})  + z  +1 \right)\right)\\
     &=\tr^d_1\left(\dfrac{1}{c+1}\tr^n_d\left(\dfrac{z}{c}(w +w^{2^i})  + z  +1 \right)\right)\\
     &=\tr^d_1\left(\dfrac{1}{c+1}\tr^n_d\left(\dfrac{z}{c}(w +w^{2^i})\right)  +1 \right)
       \qquad (\because  m \text{ is odd} \textrm{ and } \tr^n_d(z+1) = c+1.)\\
     &=\tr\left(\dfrac{z}{c(c+1)}(w +w^{2^i})\right) =1. \qquad (\because d \text{ is even and by Lemma } \ref{3trlemma}.)
\end{split}
\end{equation*}
\end{proof}

\begin{example} \hfill\break

    \noindent 1. Let $n = 10$, $d= 2$, $m = 5$ and $i = 2$. We choose the field   $\F_{2^n} = \F_2[x] /
     \langle  x^{10} + x^6 + x^5 + x^3 + x^2 + x + 1  \rangle   \text{ and a primitive root } \xi \in \F_{2^{n}}$,
       $c = \xi^{341} \in \F_4\setminus\F_2$ and $c' = \xi^{682}$. Note that $\tr^d_1\left(\frac{c'}{c(c+1)}\right) = \tr^d_1(\xi^{682})=1$.
       Let $w = \xi \in \Fbn \setminus \F_{2^{d}}$ and $\{\beta_1,\beta_2,\beta_3\}  = \{ 1/c ,1/w^{2^i+1}, w+w^{2^i}\}
       =\{\xi^{682}, \xi^{1018}, \xi^{746}\}$. Then $\{\beta_1,\beta_2,\beta_3\}$ can
       be extended to a basis $B$ for $\Fbn$ with its unique dual basis $B^\star$:
    \begin{equation*}
   % \begin{split}
     B = \{\xi^{682}, \xi^{1018}, \xi^{746},\xi^{1},\xi^{2}\},
     \quad
     B^\star = \{\xi^{341} ,\xi^{684}, \xi^{41} ,\xi^{405} ,\xi^{349}\}.
    % \end{split}
    \end{equation*}
    By Lemma \ref{3trlemma}, for $w = \xi$ and $z = \gm_1+\gm_2+c'\gm_3 = \xi^{341} + \xi^{684}+ \xi^{682}\cdot \xi^{41} = \xi^{916}$,
    it holds that
    $$ (i) \tr^n_d\left(\dfrac{z}{c}\right) = \tr^n_d\left(\xi^{575}\right) = 1; \ (ii) \tr^n_d \left(\dfrac{z}{w^{2^i+1}}\right)
     = \tr^n_d(\xi^{911}) = 1;\ (iii)  \tr\left(\dfrac{z(w+w^{2^i})}{c(c+1)}\right) = \tr(\xi^{639})= 1. $$
    We can choose any $a \in \{\xi^{334},\xi^{357},\xi^{375},\xi^{981}\}$ and $X \in \{\xi^{345},\xi^{501},\xi^{595},\xi^{861}\}$
     satisfying $z/c = a^{2^i} + a+ 1$ and $z/w^{2^i+1} = X^{2^i} + X + 1$. We now pick a pair $(a,X) = (\xi^{334},\xi^{345})$.
      Following  the proof of Proposition \ref{Jprop}, we get $\theta = wX = \xi^{346}$ and $\tau = \theta + w = \xi^{502}$.
    Therefore we have  $$x = \frac{a+\theta}{c+1} = \xi^{586},\quad y =
    \frac{a+1+\tau}{c+1}
    = \xi^{718},\quad b = \frac{\theta^{2^i+1}+ca^{2^i+1}}{c+1} = \xi^{303},$$
    which satisfy $f_a(x) = b = f_{a+1}(y)$ and $\tr\left((c+1)(x^{2^i+1}+y^{2^i+1})\right)= \tr(\xi^{807}) = 1$. \\

     \smallskip
    \noindent  2.   Let $n = 6$, $d= 2$, $m = 3$ and $i = 2$.   We choose the field     $\F_{2^6} = \F_2[x] /
    \langle x^6 + x^4 + x^3 + x + 1 \rangle    \text{ and a primitive root } \xi \in \F_{2^6}$,  $c = \xi^{21}
    \in \F_4\setminus\F_2$ and $c' = \xi^{42}$. Then $\tr^n_d(w) = 0$ and $\nr^n_d(w) = c'$  if and only if
    $w \in \{\xi^{14},\xi^{35}, \xi^{56}\}$.  We now choose $w = \xi^{14}$ and $\{\beta_1,\beta_2\}
     = \{ 1/c ,1/w^{2^i+1}\} =\{\xi^{42}, \xi^{56}\}$. Then $\{\beta_1,\beta_2\}$ can be extended to
      a basis $B$ for $\Fbn$ with its unique dual basis $B^\star$:
    \begin{equation*}
    %\begin{split}
    B = \{\xi^{42}, \xi^{56}, \xi^{2}\}, \quad
    B^\star = \{\xi^{37} ,\xi^{7}, \xi^{14} \}.
    %\end{split}
    \end{equation*}
    By Lemma \ref{3trlemma}, for  $w = \xi^{14}$ and $z = \gm_1+\gm_2 = \xi^{37} + \xi^7 = \xi^{12}$, it holds that
    $$ (i) \tr^n_d\left(\dfrac{z}{c}\right) = \tr^n_d\left(\xi^{54}\right) = 1; \ (ii) \tr^n_d \left(\dfrac{z}{w^{2^i+1}}\right)
      = \tr^n_d(\xi^{5}) = 1;\ (iii)  \tr\left(\dfrac{z(w+w^{2^i})}{c(c+1)}\right)= \tr(\xi^{47}) = 1. $$
      Similarly to the previous example, we can choose any $a \in \{\xi^{9},\xi^{27},\xi^{47},\xi^{61}\}$
      and $X \in \{\xi^{3},\xi^{13},\xi^{20},\xi^{57}\}$. We now pick a pair $(a,X) = (\xi^{9},\xi^{3})$.
       We also get $\theta = wX = \xi^{17}$ and $\tau = \theta + w = \xi^{27}$.
       Therefore we get
    then $$x = \frac{a+\theta}{c+1} = \xi^{37},\quad y = \frac{a+1+\tau}{c+1} = 0,\quad
    b = \frac{\theta^{2^i+1}+ca^{2^i+1}}{c+1} = \xi^{9},$$
    which satisfy $f_a(x) = b = f_{a+1}(y)$ and $\tr\left((c+1)(x^{2^i+1}+y^{2^i+1})\right)= \tr(\xi^{38})= 1$.\\
\end{example}
The following proposition shows that there exist $a,b \in \Fbn$
such that $_c\Delta_G(a,b) \geq 2$.
\begin{prop} \label{geq2prop}
Let $G$ be the function defined in  $\eqref{Gexpress}$.  Let $c
\in \F_{2^d} \setminus \F_2$ with $d=\gcd(2i,n)=\gcd(i,n)<n$ and
let $n$ be even.
 Then the
followings hold:
    \begin{enumerate}[$(i)$]
        \item   If $a \in \F_{2^{d}}$, then     $_c \Delta_G(a,b) = 1$ for all $b \in \Fbn$.
        \item   There exists $a\in \Fbn \setminus \F_{2^{d}} ,\ b \in \Fbn$ such that   $_c \Delta_G(a,b) \geq 2.$
    \end{enumerate}
\end{prop}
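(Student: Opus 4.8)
The plan is to handle the two parts with the partition \eqref{4partpropre} of ${}_c\Delta_G(a,b)$ into the sets $S_{\bit_1\bit_2}(a,b)$ and $S'_{\bit_1\bit_2}(a,b)$, together with Lemma \ref{threeimpossiblelemma}, and in both parts to use the global counting identity $\Sum_{b\in\Fbn}{}_c\Delta_G(a,b)=2^n$, which holds for every fixed $a$ because ${}_cD_aG$ is a self-map of $\Fbn$. Throughout, write $[P]$ for $1$ if the statement $P$ holds and $0$ otherwise.

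For part $(i)$, suppose $a\in\F_{2^d}$. By Lemma \ref{2i+1lemma}$(iii)$ we have $\ker\cl=\F_{2^d}$, so $\cl(a)=0$ and hence $\tr(\cl(a)x)=0$ for every $x$. In \eqref{4partpropre} the defining constraint $\tr(\cl(a)x)=\bit_1+\bit_2$ (case $a\in T$) then forces $\bit_1=\bit_2$, so only $S_{00}(a,b)$ and $S_{11}(a,b)$ can be nonempty; Lemma \ref{threeimpossiblelemma}$(i)$ leaves at most one of them nonempty, whence ${}_c\Delta_G(a,b)\le1$. The case $a\notin T$ is identical, the constraint becoming $\bit_1+\bit_2+1=0$ and only $S'_{01},S'_{10}$ surviving (Lemma \ref{threeimpossiblelemma}$(iv)$). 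Thus ${}_c\Delta_G(a,b)\le1$ for all $b$, and since the $2^n$ nonnegative integers $\{{}_c\Delta_G(a,b)\}_{b\in\Fbn}$ sum to $2^n$, each equals $1$.

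For part $(ii)$, I feed the pair from Proposition \ref{Jprop} into the same partition. Take $a,x,y$ with $f_a(x)=f_{a+1}(y)=:b$ and $\tr\big((c+1)(x^{2^i+1}+y^{2^i+1})\big)=1$. First I identify $x$ as the candidate (unique by Proposition \ref{onesolutionprop}) for $S_{00}(a,b)$ and $y$ as the candidate for $S_{10}(a,b)$, while the candidates for $S_{11},S_{01}$ are $x+1,y+1$, which carry the same relevant trace values since $n$ is even; reading off the membership conditions gives, for $a\in T$,
\[
{}_c\Delta_G(a,b)=[\tr(\cl(a)x)=0]+[\tr(\cl(a)y)=1],
\]
and the analogous $[\tr(\cl(a)x)=1]+[\tr(\cl(a)y)=0]$ for $a\notin T$. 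The key computation is next: substituting $(x+a)^{2^i+1}=b+cx^{2^i+1}$ and $(y+a+1)^{2^i+1}=b+cy^{2^i+1}$ into the expansions of $\tr((x+a)^{2^i+1})$ and $\tr((y+a)^{2^i+1})$ from Lemma \ref{Tclosinglemma} (again using $n$ even) collapses the trace identity of Proposition \ref{Jprop} to $\tr(\cl(a)x)+\tr(\cl(a)y)=1$. In particular $\cl(a)\neq0$, so $a\in\Fbn\setminus\F_{2^d}$ as required, and $\tr(\cl(a)x)\neq\tr(\cl(a)y)$.

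The main obstacle is that this identity fixes only the difference of the two trace values, not each separately: it gives ${}_c\Delta_G(a,b)\in\{0,2\}$ (the value $2$ exactly when $(\tr(\cl(a)x),\tr(\cl(a)y))$ equals $(0,1)$ for $a\in T$, or $(1,0)$ for $a\notin T$), but does not decide which case occurs for the constructed $b$. Rather than evaluating the traces directly, I would close the argument with the counting identity: setting $N_k=\#\{b'\in\Fbn:{}_c\Delta_G(a,b')=k\}$ and using ${}_c\Delta_G\le2$ from Section \ref{sec31}, the relations $N_0+N_1+N_2=2^n$ and $N_1+2N_2=2^n$ give $N_0=N_2$. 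Our particular $b$ has ${}_c\Delta_G(a,b)\in\{0,2\}$, so it contributes to $N_0$ or to $N_2$; either way $N_2=N_0\ge1$, and some $b'\in\Fbn$ satisfies ${}_c\Delta_G(a,b')=2$, proving $(ii)$.
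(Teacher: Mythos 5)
Your proof is correct, but it closes both parts by a different mechanism than the paper. For part $(i)$ the paper composes ${}_cD_aG$ with the involution $h$ and observes that, since $\cl(a)=0$ forces one of the sets $V_a,W_a$ to be all of $\Fbn$, the map ${}_cD_aG\circ h$ coincides globally with one of the bijections $f_a,f_{a+1}$ and is therefore itself bijective; you instead stay inside the $S_{\bit_1\bit_2}$ partition of Section \ref{sec31} and upgrade the resulting bound ${}_c\Delta_G(a,b)\le 1$ to equality by summing over $b$. For part $(ii)$ both arguments hinge on the same two inputs — Proposition \ref{Jprop} and the trace computation yielding $\tr(\cl(a)x)+\tr(\cl(a)y)=1$, hence $\cl(a)\neq 0$ and $a\notin\F_{2^d}$ — but they resolve the residual ambiguity differently. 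The paper notes that if $x,y$ fall the ``wrong way'' into $V_a,W_a$ then they fall the right way for $a+1$, so that one of ${}_cD_aG$, ${}_cD_{a+1}G$ fails to be injective; you instead observe that your explicit formula forces ${}_c\Delta_G(a,b)\in\{0,2\}$ for the constructed $b$ and then invoke the balance $N_0=N_2$ obtained from $\sum_{b'}{}_c\Delta_G(a,b')=2^n$ together with the bound ${}_c\Delta_G\le 2$ already established in Section \ref{sec31} (so there is no circularity). Both closings are sound: the paper's pins down a specific $a'\in\{a,a+1\}$ at which a collision occurs, while yours avoids any case analysis on which of $x,y$ lands where. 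One cosmetic slip: in your key computation the second expansion should be of $\tr\left((y+a+1)^{2^i+1}\right)$ rather than $\tr\left((y+a)^{2^i+1}\right)$; the substitution you actually perform, $(y+a+1)^{2^i+1}=b+cy^{2^i+1}$ combined with $\cl(a+1)=\cl(a)$ and $\tr\left((a+1)^{2^i+1}\right)=\tr\left(a^{2^i+1}\right)$ for $n$ even, is the correct one.
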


\begin{proof}
    Since $h(x) = x + \tr(x^{2^i+1})$  defined in  Lemma \ref{2i+1lemma}
     is an involution, the image set of $_c D_aG$ is exactly equal to that of $(_c D_aG) \circ h$.
     Therefore, to determine whether $G$ is PcN or APcN, it suffices to consider the image of the map $(_c D_aG) \circ h $.
     Since $G(x) = h(x)^{2^i+1}$, one has
    \begin{equation}\label{eqn2}
    \begin{split}
     (_c D_aG) \circ h (x) &= G(h(x) + a) + cG(h(x))   \\
        &= \{h(h(x) + a)\}^{2^i+1} + c\{h(h(x))\}^{2^i+1}
      = \{h(h(x) + a)\}^{2^i+1} + cx^{2^i+1}.
    \end{split}
    \end{equation}
    %To examine the image set of $_c D_aG $, we define $f_a(x) = (x+a)^{2^i+1} + cx^{2^i+1}$ for $a  \in \Fbn$.
    % Recall that $f_a$ is bijective by Proposition \ref{onesolutionprop} for $c \in \F_{2^{d}}\setminus \F_2$
    % and $a  \in \Fbn$.
     Now, the set $\Fbn$ can be written as a disjoint union
\begin{equation}\label{eqn3}
\Fbn = V_a \cup W_a,
\end{equation}
where
\begin{equation}\label{VaWadef}
 V_a = \{x \in \Fbn : \tr(\cl(a)x) = \tr(a^{2^i+1})\}
    \, \textrm{ and } \,
 W_a = \{x \in \Fbn : \tr(\cl(a)x) = \tr(a^{2^i+1})+1\}.
\end{equation}
   Recall that $f_a(x) = (x+a)^{2^i+1} + cx^{2^i+1}$  is bijective
   for  any $a  \in \Fbn$ and  $c \in \F_{2^{d}}\setminus \F_2$.
  For $\epsilon \in \F_2$, one has
\begin{equation*}
\begin{split}
(_c D_{a+\epsilon}G \circ h) (x) = f_a(x) &\Longleftrightarrow
   \{h(h(x)+a+\epsilon)\}^{2^i+1} = (x+a)^{2^i+1}  \qquad (\text{by } \eqref{eqn2}  ) \\
&\Longleftrightarrow  h(h(x)+a+\epsilon) = x+a  \qquad (\text{by Lemma } \ref{2i+1lemma}(i)  ) \\
&\Longleftrightarrow  h(x+a) + h(x)=a+\epsilon    \qquad ( h \text{ is an involution}   ) \\
&\Longleftrightarrow \tr(\cl(a)x) = \tr(a^{2^i+1})+\epsilon.
\end{split}
\end{equation*}
In a similar way (or by replacing $a$ with $a+1$ and $\epsilon$
with $\epsilon+1$), one has
$$
(_c D_{a+\epsilon}G \circ h) (x) = f_{a+1}(x) \Longleftrightarrow
h(x+a+1) + h(x)=a+\epsilon \Longleftrightarrow \tr(\cl(a)x) =
\tr(a^{2^i+1}) + 1+\epsilon.
$$
\noindent
 Consequently, one has
\begin{equation*}
\begin{split}
V_a &= \{x \in \Fbn : (_c D_aG \circ h) (x) = f_a(x) \} = \{x \in \Fbn : (_c D_{a+1}G \circ h) (x) = f_{a+1}(x) \}, \\
W_a &= \{ x\in \Fbn : (_c D_aG \circ h) (x) = f_{a+1}(x) \} = \{ x\in \Fbn : (_c D_{a+1}G \circ h) (x) = f_{a}(x) \}.
\end{split}
\end{equation*}

 Now we give a proof of the claim $(i)$. For any $a \in
\F_{2^{d}}$, it holds that $\cl(a) = 0$ and $\tr(a^{2^i+1}) =
\tr(a^2) = \tr(a)$ by Lemma \ref{2i+1lemma}. Therefore $V_a = \{x
\in \Fbn : \tr(a) = 0\}$ and $W_a = \{x \in \Fbn : \tr(a) = 1\}$,
which implies that one of $V_a$ and $W_a$ is $\Fbn$, and the other
is an empty set. Therefore one has either $ _c D_aG \circ h = f_a$
or $ _c D_aG \circ h = f_{a+1}$. Since both  $f_a$ and $f_{a+1}$
are bijective, $_c D_aG \circ h$  is also bijective, which
concludes the proof of the first claim.

Next we show the validity of the claim $(ii)$.
%we show that one can choose one of $x,y$ is in $V_a$ and the other is in $W_a$ such that $f_a(x) = f_{a+1}(y)$.
 By Proposition \ref{Jprop}, we can choose $x,y \in \Fbn$ and $a \in \Fbn$
 such that
 \begin{equation}\label{eqn8}
f_a(x) = f_{a+1}(y) \quad \textrm{ and }\quad
\tr\left((c+1)(x^{2^i+1}+y^{2^i+1})\right) = 1.
\end{equation}
Since $f_a(x) = f_{a+1}(y)$, one has
 $$(c+1)x^{2^i+1}
+ax^{2^i}+a^{2^i}x + a^{2^i+1} = (c+1)y^{2^i+1}
+(a+1)y^{2^i}+(a+1)^{2^i}y + (a+1)^{2^i+1}.$$
 By taking the absolute trace, we  obtain
\begin{equation*}
\begin{split}
\tr(&(c+1)x^{2^i+1} +ax^{2^i}+a^{2^i}x + a^{2^i+1} )  = \tr((c+1)y^{2^i+1} +(a+1)y^{2^i}+(a+1)^{2^i}y + (a+1)^{2^i+1}) \\
&\Leftrightarrow \tr((c+1)(x^{2^i+1} + y^{2^i+1}) )  = \tr( \cl(a)x + a^{2^i+1} + \cl(a)y + a^{2^i+1} + y^{2^i}+y + a^{2^i}  + a+1)\\
&\Leftrightarrow 1 = \tr((c+1)(x^{2^i+1} + y^{2^i+1}) )  =
\tr(\cl(a)x + a^{2^i+1}) + \tr(\cl(a)y + a^{2^i+1}).
\end{split}
\end{equation*}
Therefore one of $\tr( \cl(a)x + a^{2^i+1})$ and $\tr(\cl(a)y +
a^{2^i+1})$ is zero and the other is one, or equivalently,   one
of $x$ and $y$ is in $V_a$ and the other is in $W_a$ (Hence $a
\notin \F_{2^{d}}$.) If $x \in V_a$ and $y \in W_a$, then it holds
that
$$ (_c D_aG \circ h) (x) = f_a(x) = f_{a+1}(y) =  (_c D_aG \circ h) (y),  $$
 which implies that  $_c D_aG$ is not a permutation. Consequently, there exists $b\in \Fbn$ such that
    $_c \Delta_G(a,b) \geq 2$. If $x \in W_a$ and $y \in V_a$, then
    one has
 $$ (_c D_{a+1}G \circ h) (x) = f_a(x) = f_{a+1}(y) =  (_c D_{a+1}G \circ h) (y),  $$
 which means that $_c D_{a+1}G$ is not a permutation so that one concludes  $_c \Delta_G(a+1,b) \geq 2$ for some $b \in \Fbn$.
\end{proof}

Now we are ready to give a proof of Theorem \ref{maintheorem}.

\begin{proof}[Proof of Theorem \ref{maintheorem}:]
    \

    \smallskip
  Using Lemma \ref{threeimpossiblelemma} and the equation
  \eqref{4partpropre}, for all $a, b\in \Fbn$, one has
\begin{equation}%\label{4partpropre}
 _c \Delta_G(a,b) =
 \begin{cases}
  \,\, \Sum_{\bit_1, \bit_2 \in \F_2} \#S_{\bit_1\bit_2}(a,b) \leq 2 &\text{for } a \in T  \\
   \,\, \Sum_{\bit_1, \bit_2 \in \F_2} \#S'_{\bit_1\bit_2}(a,b) \leq 2 &\text{for } a \notin
   T,
\end{cases}  \\
\end{equation}
 which implies $_c \Delta_G\leq 2$. On the other hand, by Proposition \ref{geq2prop}, there exist $a \in
 \Fbn \setminus \F_{2^d}$ and $b \in \Fbn$ such
  that $_c \Delta_G(a,b) \geq 2$. Therefore one concludes that $_c \Delta_G= 2$ for any $c\in \F_{2^d}\setminus \F_2$.
\end{proof}

\section{More constructions of non-monomial PcN and APcN permutations on $\Fbn$}\label{sec4}
In this section, we explain some applications of the result of the
previous section and present more classes of non-monomial PcN and
APcN permutations related with $G$ and $h$.
\subsection{$c$-differential uniformity of the compositional inverse of $G$}

Though the differential uniformity is preserved under
compositional inversion due to the symmetry of the difference
equation, the
    situation for the $c$-differential uniformity is different, and
    there are plenty of permutations $F$ such that $_c\Delta_F\neq
    {}_c\Delta_{F^{-1}}$ where $F^{-1}$ is the compositional inverse of
    $F$.  However, for any monomial permutation $F(x)=x^r$ and its
    inverse $F^{-1}(x)=x^s$ on $\Fbn$ with $rs\equiv 1 \pmod{2^n-1}$,
    one has

\begin{lemma}\cite{WZ21}\label{power inverse lemma}\,\,
    $_c \Delta_F = \  _{c^{s}} \Delta_{F^{-1}}$ for $c \neq 0$.
\end{lemma}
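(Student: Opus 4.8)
The plan is to reduce everything to a single pointwise identity relating the $c$-differential data of $F(x)=x^r$ to the $c^s$-differential data of $F^{-1}(x)=x^s$, and then compare maxima. First I would fix $a,b\in\Fbn$ and set up a bijection on the solution set of $_cD_aF(x)=b$, i.e. of $(x+a)^r+cx^r=b$. Writing $u=x^r=F(x)$ and $v=(x+a)^r=F(x+a)$, the defining equation becomes $v+cu=b$, while $rs\equiv 1\pmod{2^n-1}$ gives $x=u^s$ and $x+a=v^s$, hence $v^s+u^s=a$ in characteristic $2$. Conversely, any pair $(u,v)$ with $v+cu=b$ and $v^s+u^s=a$ produces a solution $x=u^s$, so solutions $x$ correspond bijectively to such pairs. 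Since $c\neq 0$, I eliminate $u=(b+v)/c$, substitute into $v^s+u^s=a$, and multiply through by $c^s$ to obtain $(v+b)^s+c^sv^s=c^sa$. As $x\mapsto v=(x+a)^r$ is a bijection of $\Fbn$, this yields the exact count identity
$$_c\Delta_F(a,b)={}_{c^s}\Delta_{F^{-1}}(b,c^sa),$$
valid for all $a,b\in\Fbn$.

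With this identity established, the second step passes to maxima. Since $rs\equiv 1\pmod{2^n-1}$ forces $\gcd(s,2^n-1)=1$, the map $t\mapsto t^s$ is a bijection of $\Fbn^\times$, so $c^s=1$ if and only if $c=1$, and $c^s\neq 0$ whenever $c\neq 0$. When $c\neq 1$ (equivalently $c^s\neq 1$), both $_c\Delta_F$ and $_{c^s}\Delta_{F^{-1}}$ are maxima over the full set $\Fbn\times\Fbn$, and $(a,b)\mapsto(b,c^sa)$ is a bijection of $\Fbn\times\Fbn$; the count identity then immediately gives $_c\Delta_F={}_{c^s}\Delta_{F^{-1}}$.

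The only case needing care, and the main (mild) obstacle, is $c=1$, where the definition excludes the shift $a=0$ and this exclusion is not symmetric under $(a,b)\mapsto(b,a)$. Here the claim reduces to the classical fact that ordinary differential uniformity is preserved under compositional inversion, which I would either cite or argue directly: for $a\neq 0$ one has $_1\Delta_F(a,0)=\#\{x:F(x+a)=F(x)\}=0$ by injectivity of $F$ (and similarly for $F^{-1}$), so restricting each maximum to $a\neq 0,\ b\neq 0$ changes nothing; on this symmetric index set the identity $_1\Delta_F(a,b)={}_1\Delta_{F^{-1}}(b,a)$ gives equality. I expect the solution-set bijection of the first step to be the only substantive point, with the remainder being bookkeeping about the admissible ranges of $(a,b)$ dictated by the $c=1$ convention.
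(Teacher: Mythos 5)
The paper never proves this lemma; it is imported directly from \cite{WZ21} as a quoted result, so there is no in-paper argument to compare yours against. Judged on its own merits, your proof is correct and complete. The substitution $u=x^r=F(x)$, $v=(x+a)^r=F(x+a)$, combined with $x=u^s$, $x+a=v^s$, does give a bijection between solutions of $(x+a)^r+cx^r=b$ and pairs $(u,v)$ with $v+cu=b$, $u^s+v^s=a$; eliminating $u=(b+v)/c$ and using multiplicativity of $t\mapsto t^s$ yields the pointwise identity ${}_c\Delta_F(a,b)={}_{c^s}\Delta_{F^{-1}}(b,c^sa)$, which is in fact \emph{stronger} than the stated lemma (a full DDT-entry correspondence rather than a mere equality of maxima), and would directly serve the paper's use of the lemma in Corollary 4.2. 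Your bookkeeping on the two remaining points is also right: since $rs\equiv 1\pmod{2^n-1}$ forces $\gcd(s,2^n-1)=1$, the map $(a,b)\mapsto(b,c^sa)$ is a bijection of $\Fbn\times\Fbn$ and $c^s=1$ exactly when $c=1$, which settles every $c\neq 1$; and the $c=1$ asymmetry in the definition (only $a\neq 0$ is excluded) is genuinely harmless, because injectivity of the permutations $F$ and $F^{-1}$ makes every entry ${}_1\Delta_F(a,0)$ and ${}_1\Delta_{F^{-1}}(a,0)$ with $a\neq 0$ vanish, so both maxima may be taken over the symmetric index set $a\neq 0$, $b\neq 0$, where the identity applies verbatim. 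This is the natural substitution argument one would expect behind the citation, and I see no gap in it.
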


\noindent It is well-known \cite{MRS+21} that $g(x)=x^{2^i+1}$ is
a PcN permutation for every $c\in \F_{2^d}\setminus \F_2$ in our
parameter setting ($n=dm,  i=d\ell, \gcd(m,2\ell)=1$ and $n>d$).
Let $g^{-1}(x)=x^u$ be the inverse permutation of $g(x)$, that is
$u \equiv (2^i+1)^{-1} \pmod{2^n-1}$. Recall that $2u\equiv
\frac{2^{im}+1}{2^i+1} \pmod{2^n-1}$ (See Lemma
\ref{2i+1lemma}$(iv)$).
% Therefore we have the following $c$-differential uniformity of
%$\{g^{-1}(x)\}^2=x^{\frac{2^{im}+1}{2^i+1}}$.
\begin{corollary}\cite{WZ21}\label{pcnxu}
The monomials $x^u$ and $x^{2u}=x^{\frac{2^{im}+1}{2^i+1}}$ are
PcN permutations on $\Fbn$ for every $c\in \F_{2^d}\setminus
\F_2$, where $n=dm, i=d\ell$ with $n>d$ and $\gcd(m,2\ell)=1$.
\end{corollary}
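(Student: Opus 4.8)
The plan is to deduce both statements from the known fact \cite{MRS+21} that $g(x)=x^{2^i+1}$ is PcN for every $c\in\F_{2^d}\setminus\F_2$, combined with the inversion identity of Lemma \ref{power inverse lemma}. The essential arithmetic input is the behaviour of the exponent $u\equiv(2^i+1)^{-1}\pmod{2^n-1}$ modulo $2^d-1$. Since $d\mid n$ we have $2^d-1\mid 2^n-1$, and since $d\mid i$ we have $2^i\equiv 1\pmod{2^d-1}$, whence $2^i+1\equiv 2\pmod{2^d-1}$. Reducing the congruence $u(2^i+1)\equiv 1\pmod{2^n-1}$ modulo $2^d-1$ therefore gives $2u\equiv 1\pmod{2^d-1}$, i.e. $u\equiv 2^{d-1}\pmod{2^d-1}$. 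Consequently, for every $c\in\F_{2^d}^\times$ one has $c^u=c^{2^{d-1}}$, so $c\mapsto c^u$ is exactly the square-root (Frobenius) automorphism of $\F_{2^d}$; in particular it is a bijection of $\F_{2^d}^\times$ fixing $\F_2$, and hence it permutes $\F_{2^d}\setminus\F_2$.

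First I would treat $x^u$. Applying Lemma \ref{power inverse lemma} to $F=g$ and $F^{-1}(x)=x^u$ gives ${}_c\Delta_g={}_{c^u}\Delta_{x^u}$ for every nonzero $c$. Since $g$ is PcN for all $c\in\F_{2^d}\setminus\F_2$, this yields ${}_{c^u}\Delta_{x^u}=1$ for all such $c$. By the previous paragraph, as $c$ runs over $\F_{2^d}\setminus\F_2$ the value $c^u$ runs bijectively over $\F_{2^d}\setminus\F_2$; relabelling, ${}_{c'}\Delta_{x^u}=1$ for every $c'\in\F_{2^d}\setminus\F_2$. As $\gcd(2^i+1,2^n-1)=1$ by Lemma \ref{2i+1lemma}$(i)$, the exponent $u$ is invertible modulo $2^n-1$ and $x^u$ is a permutation, so $x^u$ is a PcN permutation.

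Next I would handle $x^{2u}$ by reducing it to the case just settled, using the characteristic-two factorization rather than a second appeal to Lemma \ref{power inverse lemma} (whose inverse exponent for $2u$ is less convenient). Writing $c=\gamma^2$ with $\gamma=c^{2^{n-1}}=\sqrt c$, the relevant difference becomes
\begin{equation*}
(x+a)^{2u}+c\,x^{2u}=\bigl((x+a)^u+\gamma\,x^u\bigr)^2,
\end{equation*}
so the equation $(x+a)^{2u}+c\,x^{2u}=b$ is equivalent, after taking the bijective square root, to $(x+a)^u+\gamma\,x^u=b^{2^{n-1}}$. Hence ${}_c\Delta_{x^{2u}}(a,b)={}_\gamma\Delta_{x^u}(a,b^{2^{n-1}})$, and maximizing over $a,b$ (using that $b\mapsto b^{2^{n-1}}$ is a bijection of $\Fbn$) gives ${}_c\Delta_{x^{2u}}={}_\gamma\Delta_{x^u}$. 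Since $\gamma=\sqrt c\in\F_{2^d}\setminus\F_2$ whenever $c\in\F_{2^d}\setminus\F_2$, the first part gives ${}_\gamma\Delta_{x^u}=1$; and $x^{2u}=x^{(2^{im}+1)/(2^i+1)}$ by Lemma \ref{2i+1lemma}$(iv)$ is a permutation, so it is PcN.

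I expect the main obstacle to be exactly the bookkeeping of the first paragraph: Lemma \ref{power inverse lemma} only controls ${}_{c^u}\Delta_{x^u}$, so the crux is to show that $c\mapsto c^u$ both stabilizes and surjects onto $\F_{2^d}\setminus\F_2$. This is precisely what the congruence $2u\equiv 1\pmod{2^d-1}$ delivers, by identifying $c\mapsto c^u$ on $\F_{2^d}^\times$ with the square-root automorphism; once this is in place, both claims follow formally. The only point requiring care is keeping the two moduli $2^n-1$ and $2^d-1$ straight, together with the observation that $c^u$ for $c\in\F_{2^d}^\times$ depends only on $u\bmod(2^d-1)$.
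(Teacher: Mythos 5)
Your proof is correct and follows essentially the same route as the paper: apply Lemma \ref{power inverse lemma} to the PcN monomial $x^{2^i+1}$, then observe that $c\mapsto c^u$ (and $c\mapsto c^{2u}$) permutes $\F_{2^d}\setminus\F_2$. The only difference is that you explicitly justify the relabelling step via the congruence $2u\equiv 1\pmod{2^d-1}$ (and make the squaring identity for $x^{2u}$ explicit), whereas the paper simply asserts the set equality $\{c\}=\{c^u\}=\{c^{2u}\}$ over $\F_{2^d}\setminus\F_2$.
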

\begin{proof}
By Lemma \ref{power inverse lemma},  PcNness of $g(x) = x^{2^i+1}$ implies
that  $g^{-1}(x) = x^{u}$ is P$c^{u}$N and $\{g^{-1}(x)\}^2 = x^{2u}$ is P$c^{2u}$N.
Since $\{ c : c \in \F_{2^{d}} \setminus \F_2\} =\{c^{u} : c\in \F_{2^{d}} \setminus \F_2\}
=\{c^{2u} : c\in \F_{2^{d}} \setminus \F_2\}$, we are done.
\end{proof}

\noindent Even though one can also use Corollary $11$ in \cite{WZ21} to derive above
result, we state as above since the above explicit form is necessary for our proof of
APcNness of $G^{-1}(x)$.

\smallskip

Now we are ready to explain the $c$-differential uniformity of the
compositional inverse of $G$.

\begin{prop}\label{inverseprop}
 Let $c\in \F_{2^d} \setminus \F_2$ with $d=\gcd(2i,n)=\gcd(i,n)<n$ and let $n$ be even.
  Let $u \equiv (2^i+1)^{-1} \pmod{2^n-1}$. Then the compositional inverse of
$G(x)$ is written as $G^{-1}(x) = x^u + \tr(x)$. Furthermore $_c
\Delta_{G^{-1}} \leq 2$ for $c \in \F_{2^{d}} \setminus \F_2$.
\end{prop}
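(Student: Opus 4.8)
The plan is to first pin down the explicit form of $G^{-1}$ and then reduce the count of solutions to the already-established PcN-ness of the monomial $x^u$.

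First I would establish the formula $G^{-1}(x) = x^u + \tr(x)$. Recalling from the corollary following Lemma \ref{2i+1lemma} that $G = g \circ h$ with $g(x) = x^{2^i+1}$ and $h(x) = x + \tr(x^{2^i+1})$, and that $h$ is an involution by Lemma \ref{2i+1lemma}$(ii)$, one has $G^{-1} = h^{-1} \circ g^{-1} = h \circ g^{-1}$. Since $g^{-1}(x) = x^u$ with $u(2^i+1) \equiv 1 \pmod{2^n-1}$, I would compute $G^{-1}(x) = h(x^u) = x^u + \tr\big((x^u)^{2^i+1}\big)$; because $(x^u)^{2^i+1} = x^{u(2^i+1)} = x$ for every $x \in \Fbn$, this collapses to $G^{-1}(x) = x^u + \tr(x)$, proving the first assertion.

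Next, writing $F = G^{-1}$, I would expand the $c$-derivative using linearity of the trace together with $\tr(x) \in \F_2$:
\[
{}_cD_aF(x) = F(x+a) + cF(x) = (x+a)^u + cx^u + (1+c)\tr(x) + \tr(a).
\]
The crucial observation is that $(1+c)\tr(x)$ takes only the two values $0$ and $1+c$, according to whether $\tr(x) = 0$ or $\tr(x) = 1$. Hence I would split the solution set of ${}_cD_aF(x) = b$ into the two trace classes: on $\{x : \tr(x) = 0\}$ the equation becomes $(x+a)^u + cx^u = b + \tr(a)$, while on $\{x : \tr(x) = 1\}$ it becomes $(x+a)^u + cx^u = b + \tr(a) + 1 + c$.

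Finally, I would invoke that $x^u$ is PcN for every $c \in \F_{2^d} \setminus \F_2$ (Corollary \ref{pcnxu}). Since $c \neq 1$, this says precisely that $x \mapsto (x+a)^u + cx^u$ is a bijection of $\Fbn$ for each fixed $a$, so each of the two displayed equations has exactly one solution in $\Fbn$ (which need not lie in its prescribed trace class). Therefore the two trace classes contribute at most one solution apiece, giving ${}_c\Delta_F(a,b) \leq 2$ for all $a,b \in \Fbn$, and hence ${}_c\Delta_{G^{-1}} \leq 2$. The substantive input—turning a non-monomial count into a monomial one—is delivered cleanly by this bijectivity, so there is no serious obstacle; the only genuinely delicate point is bookkeeping the additive trace constants $\tr(a)$ and $1+c$ correctly across the two cases.
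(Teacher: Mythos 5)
Your proposal is correct and follows essentially the same route as the paper: factor $G^{-1}=h\circ g^{-1}$ to get $x^u+\tr(x)$, split the difference equation by the value of $\tr(x)$, and bound each trace class by one solution via the PcN-ness of $x^u$ from Corollary \ref{pcnxu}. The only (harmless) difference is that you spell out why $h(x^u)=x^u+\tr(x)$ via $(x^u)^{2^i+1}=x$, which the paper leaves implicit.
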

\begin{proof}

Recall that $G =g \circ h$ where $h$ is the involution in  Lemma
\ref{2i+1lemma}. Then $G^{-1}(x) = h^{-1} \circ g^{-1}(x) = h
\circ g^{-1}(x) = h(x^u) = x^u + \tr(x).$ We now find the number
of solutions of $_cD_aG^{-1}(x) = b$. For $a,b \in \Fbn$, we have
\begin{equation*}
    \begin{split}
        G^{-1}(x+a) + c G^{-1}(x) = b \quad &\Leftrightarrow \quad (x+a)^u +\tr(x+a) + cx^u + c\tr(x) = b \\
        \quad  &\Leftrightarrow \quad (x+a)^u +  cx^u  = (c+1)\tr(x)+ b + \tr(a) \\
    \end{split}
\end{equation*}
Therefore we obtain
\begin{equation*}
    \begin{split}
        _c \Delta_{G^{-1}}(a,b) &= \# \{x : (x+a)^u +  cx^u  = (c+1)\tr(x)+ b + \tr(a)\} \\
        &= \# \{x : (x+a)^u +  cx^u  = b + \tr(a), \ \tr(x) = 0\} \\
        &\quad  + \# \{x : (x+a)^u +  cx^u  = b + \tr(a) + c+ 1,\ \tr(x) = 1\} \leq 1+ 1 = 2,
    \end{split}
\end{equation*}
where the last inequality comes from Corollary \ref{pcnxu}.
\end{proof}

\noindent  The above proposition says that $G^{-1}(x)$ is either
PcN or APcN for $c\in \F_{2^d}\setminus \F_2$. To show $_c
\Delta_{G^{-1}} = 2$, we need some preparations.

\begin{remark}
Note that in \cite{HPR+21},
 Hasan et al. gave a necessary condition for $F(x) = x^{2^k+1} + \gm\tr(x)$ to be PcN, but the complete
 conditions
  for which  $F(x)$ is PcN or APcN still remain open.
\end{remark}

\begin{lemma}\label{ulemma}
 Let $c\in \F_{2^d} \setminus \F_2$ with $d=\gcd(2i,n)=\gcd(i,n)<n$ and let $n$ be even.
Let  $u \equiv (2^i+1)^{-1} \pmod{2^n-1}$. Then,  one has the
followings:
\begin{enumerate}[(i)]
    \item $c^{2u} = c.$
    \item Letting $y= x^{2^i+1}$ (or $x=y^u$) with $x \in \Fbn$ and letting $a, b \in \Fbn$,  it holds that
    $$(x+a)^{2^i+1}+ cx^{2^i+1} = b \,\, \textit{ if and only if }\,\,
    \left(y+\frac{b}{c}\right)^u + c^{-u}y^u = c^{-u}a. $$

    %$(x+a)^{2^i+1}+ cx^{2^i+1} = b$ if and only if
    %$\left(y+\frac{b}{c}\right)^u + c^{-u}y^u = c^{-u}a$.
\end{enumerate}

\end{lemma}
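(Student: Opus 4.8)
The plan is to handle the two claims independently. Part $(i)$ is a congruence computation and part $(ii)$ is a chain of reversible algebraic manipulations; neither is deep, so the work lies in getting the bookkeeping right.

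For part $(i)$ I would reduce the defining congruence $u(2^i+1)\equiv 1\pmod{2^n-1}$ modulo $2^d-1$. Since $d\mid n$ we have $2^d-1\mid 2^n-1$, so the congruence persists modulo $2^d-1$. Because $i=d\ell$, it follows that $2^i=2^{d\ell}\equiv 1\pmod{2^d-1}$, hence $2^i+1\equiv 2\pmod{2^d-1}$, and substituting gives $2u\equiv 1\pmod{2^d-1}$. As $c\in\F_{2^d}^\times$ we have $c^{2^d-1}=1$, so combining $2u\equiv 1\pmod{2^d-1}$ with this yields $c^{2u}=c$.

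For part $(ii)$ I would build a string of equivalences, the crucial input being Lemma \ref{2i+1lemma}$(i)$: the map $z\mapsto z^{2^i+1}$ is a permutation of $\Fbn$ whose inverse is $z\mapsto z^u$. Setting $y=x^{2^i+1}$ (so $x=y^u$), the equation $(x+a)^{2^i+1}+cx^{2^i+1}=b$ becomes $(x+a)^{2^i+1}=b+cy$; applying the $u$-th power map gives $x+a=(b+cy)^u$, i.e. $y^u+a=(b+cy)^u$. I would then factor $b+cy=c\bigl(y+\tfrac{b}{c}\bigr)$, so that $(b+cy)^u=c^u\bigl(y+\tfrac{b}{c}\bigr)^u$, and multiply through by $c^{-u}$ (valid since $c\neq 0$) to reach exactly $\bigl(y+\tfrac{b}{c}\bigr)^u+c^{-u}y^u=c^{-u}a$. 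Each step is either an equality or the application of a bijection, so every arrow reverses and the ``if and only if'' is automatic.

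I do not anticipate a serious obstacle. The two points demanding care are: in $(i)$, reducing the exponent congruence modulo $2^d-1$ rather than $2^n-1$ and using $c^{2^d-1}=1$; and in $(ii)$, the legitimacy of inverting $z\mapsto z^{2^i+1}$ by $z\mapsto z^u$, which is precisely Lemma \ref{2i+1lemma}$(i)$ together with the definition of $u$. It is worth noting that the normalization in $(ii)$ is engineered so that the inverse power map converts the factor $c$ on $x^{2^i+1}$ into the factor $c^{-u}$ on $y^u$; the companion identity $c^{2u}=c$ from $(i)$ is what will later keep such powers of $c$ under control when this lemma is applied to determine ${}_c\Delta_{G^{-1}}$.
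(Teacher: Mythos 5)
Your proposal is correct and matches the paper's argument in all essentials: part $(ii)$ is the identical chain of reversible steps (apply the $u$-th power map, factor out $c$, multiply by $c^{-u}$), and part $(i)$ differs only cosmetically, since your reduction $2u\equiv 1\pmod{2^d-1}$ together with $c^{2^d-1}=1$ is just the exponent-level restatement of the paper's one-line computation $c^{2u}=(c^2)^u=(c^{2^i+1})^u=c$, both resting on $d\mid i$ so that $c^{2^i}=c$.
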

\begin{proof}
For $(i)$, we have $c^{2u} = (c^2)^u = (c^{2^i+1})^u = c$
because
$c^{2^i} = c$. For $(ii)$, we obtain
\begin{equation*}
    \begin{split}
        (x+a)^{2^i+1}+ cx^{2^i+1} = b &\,\,\Leftrightarrow\,\, (x+a)^{2^i+1} + cy = b
        \,\,\Leftrightarrow\,\, x+a =  (cy + b)^u  \\
        &\,\,\Leftrightarrow\,\, (cy + b)^u + y^u  = a
        \,\,\Leftrightarrow\,\, (y + b/c)^u + c^{-u}y^u  =
        c^{-u}a,
    \end{split}
\end{equation*}
which completes the proof.
\end{proof}

To find $a,b \in \Fbn$ satisfying $_c \Delta_{G^{-1}}(a,b) = 2$,
we give the following proposition which corresponds to Proposition
\ref{Jprop}.
\begin{prop}\label{Jinvprop}
Let $c\in \F_{2^d} \setminus \F_2$ with $d=\gcd(2i,n)=\gcd(i,n)<n$
and let $n$ be even. Let  $u \equiv (2^i+1)^{-1} \pmod{2^n-1}$.
%Let  $u \equiv (2^i+1)^{-1} \pmod{2^n-1}$. Then for a given $c \in
%\F_{2^{d}} \setminus \F_2$,
Then, there exist  $x,y \in \Fbn$ and $a \in \Fbn$ such that
$$(x+a)^u+cx^u = (y+a)^u + cy^u  +c+1  \,\,\textit{ and }\,\, \tr(x+y) = 1.
$$
\end{prop}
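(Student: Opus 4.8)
The plan is to transport the solution furnished by Proposition~\ref{Jprop} to the inverse setting through the correspondence in Lemma~\ref{ulemma}$(ii)$, after an appropriate rescaling. Writing $\tilde f_a(x) = (x+a)^u + cx^u$, the assertion amounts to finding $x,y,a$ with $\tilde f_a(x)+\tilde f_a(y)=c+1$ and $\tr(x+y)=1$. Set $c' = c^{-2}$ and $\rho = \frac{c+1}{c}$; both lie in $\F_{2^d}^{\times}$, and $c'\notin\F_2$ since $c\notin\F_2$. Two elementary identities drive the argument: first, because $\rho\in\F_{2^d}\subseteq\F_{2^i}$ we have $\rho^{2^i}=\rho$, hence $\rho^{2^i+1}=\rho^2=c^{-2}+1=c'+1$; second, by Lemma~\ref{ulemma}$(i)$ the coefficient produced by the transfer is $(c')^{-u}=(c^{-2})^{-u}=c^{2u}=c$, which is precisely the constant we need on the $x^u$ term.

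First I would apply Proposition~\ref{Jprop} with the constant $c'=c^{-2}\in\F_{2^d}\setminus\F_2$ to obtain $\xi,\eta,\alpha,\beta\in\Fbn$ satisfying $(\xi+\alpha)^{2^i+1}+c'\xi^{2^i+1}=\beta=(\eta+\alpha+1)^{2^i+1}+c'\eta^{2^i+1}$ together with $\tr\bigl((c'+1)(\xi^{2^i+1}+\eta^{2^i+1})\bigr)=1$. Multiplying the first equation by $\rho^2$ and using $\rho^{2^i+1}=\rho^2$ to absorb the factor inside the $(2^i+1)$-th powers gives $(\rho\xi+\rho\alpha)^{2^i+1}+c'(\rho\xi)^{2^i+1}=\rho^2\beta$, and similarly for the $\eta$-equation with shift $\rho(\alpha+1)$. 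I then set $x=(\rho\xi)^{2^i+1}$, $y=(\rho\eta)^{2^i+1}$ and $a=c^2\rho^2\beta=(c+1)^2\beta$, and apply Lemma~\ref{ulemma}$(ii)$ with the constant $c'$ to each rescaled equation. Since $(c')^{-u}=c$ and $\rho^2\beta/c'=c^2\rho^2\beta=a$, the transfer yields $\tilde f_a(x)=c\rho\alpha$ and $\tilde f_a(y)=c\rho\alpha+c\rho$.

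It then remains to read off the two required conditions. Adding the transferred equations gives $\tilde f_a(x)+\tilde f_a(y)=c\rho=c\cdot\frac{c+1}{c}=c+1$, as wanted. For the trace, $x+y=\rho^{2^i+1}(\xi^{2^i+1}+\eta^{2^i+1})=(c'+1)(\xi^{2^i+1}+\eta^{2^i+1})$ by the identity $\rho^{2^i+1}=c'+1$, so $\tr(x+y)=\tr\bigl((c'+1)(\xi^{2^i+1}+\eta^{2^i+1})\bigr)=1$ by the trace condition delivered by Proposition~\ref{Jprop}. I expect the main obstacle to be pinning down the correct rescaling constant: a naive transfer (taking $\rho=1$) produces $\tilde f_a(x)+\tilde f_a(y)=c$ and only the weaker relation $\tr\bigl((c'+1)(x+y)\bigr)=1$, and it is exactly the choice $\rho=(c+1)/c$—for which $\rho^{2^i+1}=\rho^2=c'+1$ holds simultaneously—that repairs both the constant (turning the shift difference $1$ into the value difference $c\rho=c+1$) and the trace condition at once. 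The only other point requiring care is checking that $c'=c^{-2}$ still satisfies the hypotheses of Proposition~\ref{Jprop}, so that it may legitimately be invoked.
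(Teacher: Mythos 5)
Your proposal is correct and follows essentially the same route as the paper's proof: both invoke Proposition \ref{Jprop} with the auxiliary constant $c^{-2}=c^{-(2^i+1)}$, transfer the resulting Gold-type equations to the $u$-th power setting via Lemma \ref{ulemma}$(ii)$ using $c^{2u}=c$, and rescale by $(c+1)^2/c^2=\tilde c+1$ so that the shift difference $1$ becomes the value difference $c+1$ and the trace condition carries over verbatim. The only difference is cosmetic — you absorb the scaling factor $\rho$ inside the $(2^i+1)$-th powers before transferring, whereas the paper transfers first and then multiplies the $u$-th power equations by $(\tilde c+1)^u$.
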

\begin{proof}
%for $ \tilde{c} \in \F_{2^{d}} \setminus \F_2$,
Let $ \tilde{c}=c^{-(2^i+1)}.$ Then, by Proposition \ref{Jprop},
we can find $\tilde{x},\tilde{y} \in \Fbn$ and $\tilde{a},\tilde{b} \in \Fbn$ such that
\begin{equation}\label{eqn9}
    (\tilde{x}+\tilde{a})^{2^i+1} + \tilde{c}\tilde{x}^{2^i+1} = \tilde{b} = (\tilde{y}+\tilde{a}+1)^{2^i+1} +  \tilde{c}\tilde{y}^{2^i+1}
\end{equation}
with $\tr(( \tilde{c}+1)(\tilde{x}^{2^i+1}+\tilde{y}^{2^i+1})) = 1$.
Letting $x_0=\tilde{x}^{2^i+1}$ and $y_0=\tilde{y}^{2^i+1}$ and using Lemma \ref{ulemma}-$(ii)$, the equation \eqref{eqn9}
can be rewritten as
\begin{equation}\label{eqn10}
    \left (x_0+\frac{\tilde{b}}{\tilde{c}}\right )^{u} + \tilde{c}^{-u}x_0^u = \tilde{c}^{-u}\tilde{a},
    \quad
    \left (y_0+\frac{\tilde{b}}{\tilde{c}}\right )^{u} + \tilde{c}^{-u}y_0^u = \tilde{c}^{-u}(\tilde{a}+1)
\end{equation}
with $\tr(( \tilde{c}+1)(x_0+y_0)) = 1$.
Also letting  $x=(\tilde{c}+1)x_0$ and $y=(\tilde{c}+1)y_0$, the equation \eqref{eqn10}
is written as
\begin{equation}\label{eqn11}
    \left (\frac{x}{\tilde{c}+1}+\frac{\tilde{b}}{\tilde{c}}\right )^{u} +
    \tilde{c}^{-u}\left(\frac{x}{\tilde{c}+1}\right)^u = \tilde{c}^{-u}\tilde{a},
    \quad
    \left (\frac{y}{\tilde{c}+1}+\frac{\tilde{b}}{\tilde{c}}\right )^{u} +
    \tilde{c}^{-u}\left(\frac{y}{\tilde{c}+1}\right) ^u = \tilde{c}^{-u}(\tilde{a}+1)
\end{equation}
with $\tr(x+y)=1$. Therefore, by multiplying $(\tilde{c}+1)^u$
to above equations, one has
\begin{equation}\label{eqn12}
    \left (x+\frac{\tilde{b}(\tilde{c}+1)}{\tilde{c}}\right )^{u} + \tilde{c}^{-u}x^u
    = \left( \frac{\tilde{c}+1}{\tilde{c}}\right )^u \tilde{a},
    \quad
    \left (y+\frac{\tilde{b}(\tilde{c}+1)}{\tilde{c}}\right )^{u} + \tilde{c}^{-u}y^u
    = \left( \frac{\tilde{c}+1}{\tilde{c}}\right )^u(\tilde{a}+1)
\end{equation}
with $\tr(x+y) = 1$. Note that $\tilde{c}=c^{-(2^i+1)}$
(i.e., $c = \tilde{c}^{-u}$)
implies that
$$ \left( \frac{\tilde{c}+1}{\tilde{c}}\right )^u  =
\left(1+ \tilde{c}^{-1}\right)^u = \left(1+ c^{2^i+1}\right)^u =
\left(1+ c^2\right)^u =  \left(1+ c\right)^{2u} =1 + c,$$
where the last equality comes from Lemma \ref{ulemma}-$(i)$.
Finally, letting $a= \frac{\tilde{b}(\tilde{c}+1)}{\tilde{c}}$ and
$b=(c+1)\tilde{a}$, the equation \eqref{eqn12} can be written as
$$ (x+ a)^u + c x^u  = b, \quad  (y+a)^u + cy^u = b + c+ 1 \quad \text{ with } \tr(x+y)= 1,$$
which becomes
$$
(x+ a)^u + c x^u = (y+a)^u + cy^u + c+ 1 \text{ and } \tr(x+y)=  1.
$$
% where $b' = b(\tilde{c}+1) /\tilde{c}$ and $a' = (1+c)a$.
%Hence we find $x,y \in \Fbn$ and $a = b', b= a'$ satisfying the claim.
\end{proof}

Now we are ready to give a proof of $_c \Delta_{G^{-1}} = 2$.

\begin{theorem}\label{inversemaintheorem}
Let $c\in \F_{2^d} \setminus \F_2$ with $d=\gcd(2i,n)=\gcd(i,n)<n$
and let $n=dm$ be even. Let  $u \equiv (2^i+1)^{-1} \pmod{2^n-1}$.
Then,  $G^{-1}(x)=x^u+\tr(x)$ is an APcN permutation on $\Fbn$. Or
equivalently, $x^\frac{2^{im}+1}{2^i+1}+\tr(x)$ is an APcN
permutation on $\Fbn$.
\end{theorem}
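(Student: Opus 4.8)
The plan is to combine the upper bound ${}_c\Delta_{G^{-1}}\le 2$ already established in Proposition \ref{inverseprop} with a single pair of witnesses realizing the value $2$. Recall from the proof of Proposition \ref{inverseprop} that, with $u\equiv(2^i+1)^{-1}\pmod{2^n-1}$, the count splits along the absolute trace as
\begin{equation*}
{}_c\Delta_{G^{-1}}(a,b)=\#\{x:\tr(x)=0,\ (x+a)^u+cx^u=b+\tr(a)\}+\#\{x:\tr(x)=1,\ (x+a)^u+cx^u=b+\tr(a)+c+1\}.
\end{equation*}
Hence, to prove ${}_c\Delta_{G^{-1}}=2$, it suffices to exhibit $a,b\in\Fbn$ for which both summands are simultaneously nonzero.

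To this end I would invoke Proposition \ref{Jinvprop}, which supplies $x,y,a\in\Fbn$ with $(x+a)^u+cx^u=(y+a)^u+cy^u+c+1$ and $\tr(x+y)=1$. The trace condition forces exactly one of $x,y$ to have trace $0$ and the other trace $1$; in particular $x\neq y$. Writing $P:=(x+a)^u+cx^u$, so that $(y+a)^u+cy^u=P+c+1$, I would argue by cases. If $\tr(x)=0$ (hence $\tr(y)=1$), take $b:=P+\tr(a)$; then $x$ solves the trace-$0$ equation, while $y$ solves the trace-$1$ equation since $(y+a)^u+cy^u=P+c+1=b+\tr(a)+c+1$. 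If instead $\tr(x)=1$, the symmetric choice $b:=P+\tr(a)+c+1$ works. In either case $x$ and $y$ are two distinct solutions of ${}_cD_aG^{-1}=b$, so ${}_c\Delta_{G^{-1}}(a,b)\ge 2$. Combined with Proposition \ref{inverseprop} this gives ${}_c\Delta_{G^{-1}}=2$, i.e. $G^{-1}(x)=x^u+\tr(x)$ is an APcN permutation (it is a permutation since $G$ is).

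For the equivalent monomial-exponent form, note that $F(x):=x^u+\tr(x)$ satisfies $F(x)^2=x^{2u}+\tr(x)^2=x^{2u}+\tr(x)$ because $\tr(x)\in\F_2$, and $x^{2u}=x^{\frac{2^{im}+1}{2^i+1}}$ by Lemma \ref{2i+1lemma}$(iv)$. Since $c\in\F_{2^d}\setminus\F_2$ and squaring is a field automorphism of $\F_{2^d}$ fixing $\F_2$, I may write $c=\tilde{c}^{2}$ with $\tilde{c}\in\F_{2^d}\setminus\F_2$; then ${}_cD_a(F^2)(x)=F(x+a)^2+cF(x)^2=\big(F(x+a)+\tilde{c}F(x)\big)^2=\big({}_{\tilde{c}}D_aF(x)\big)^2$. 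As $t\mapsto t^2$ is a bijection on $\Fbn$, this yields ${}_c\Delta_{F^2}={}_{\tilde{c}}\Delta_F$, and letting $c$ (equivalently $\tilde{c}$) range over $\F_{2^d}\setminus\F_2$ shows that $x^{\frac{2^{im}+1}{2^i+1}}+\tr(x)=F^2$ is likewise an APcN permutation.

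The substantive difficulty has already been absorbed into Propositions \ref{inverseprop} and \ref{Jinvprop}; within the theorem itself the only delicate point is the bookkeeping above, namely matching each witness to its correct trace class and reading off the single value of $b$ that places $x$ and $y$ in the two different summands. I expect no further obstacle, but one should verify that the chosen $b$ indeed lies in $\Fbn$ (it does, since $c+1,\tr(a)\in\Fbn$) and that the distinctness $x\neq y$ genuinely comes from $\tr(x+y)=1$ rather than being assumed.
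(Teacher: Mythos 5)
Your proposal is correct and follows essentially the same route as the paper: the upper bound from Proposition \ref{inverseprop}, the witnesses $x,y,a$ from Proposition \ref{Jinvprop}, and the same two-case choice of $b$ according to which of $x,y$ has absolute trace $0$. The only addition is your explicit squaring argument for the form $x^{\frac{2^{im}+1}{2^i+1}}+\tr(x)$, which the paper leaves implicit and which you carry out correctly.
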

\begin{proof}
%By Proposition \ref{inverseprop},  $_c \Delta_{G^{-1}} \leq 2$. It remains to show that there exists $a,b \in \Fbn$
%such that $_c \Delta_{G^{-1}}(a,b) = 2$. Recall that
Let $c\in \F_{2^d} \setminus \F_2$ be given.  Then the proof of Proposition \ref{inverseprop} says that, for any $a$ and $b$ in $\Fbn$, one has
\begin{equation}\label{eqn13}
    \begin{split}
        _c \Delta_{G^{-1}}(a,b) =\,  &\# \{x\in \Fbn : (x+a)^u +  cx^u  = b + \tr(a), \ \tr(x) = 0 \} \\
        &+ \# \{ x \in \Fbn: (x+a)^u +  cx^u  = b + \tr(a) + c+ 1,  \ \tr(x) =
        1\}\leq 1+1=2.
    \end{split}
\end{equation}
Also, by Proposition \ref{Jinvprop}, there exist $x,y \in \Fbn$ and $a \in \Fbn$ satisfying
$$(x+a)^u+cx^u = (y+a)^u + cy^u  +c+1  \,\,\textit{ and }\,\, \tr(x+y) = 1.$$
Since $\tr(x+y) = 1$, one has either $\tr(x)=0$ and $\tr(y)=1$
or $\tr(x)=1$ and $\tr(y)=0$. For the first case, by defining $b\in \Fbn$ as
$$(x+a)^u+cx^u =b+\tr(a)= (y+a)^u + cy^u  +c+1, $$
the inequality in  \eqref{eqn13} becomes an equality. Similarly
for the second case, by defining $b\in \Fbn$ as
$$(x+a)^u+cx^u =b+\tr(a)+c+1= (y+a)^u + cy^u  +c+1, $$
the inequality in  \eqref{eqn13} becomes an equality.
\end{proof}

\subsection{Generalization of $h$ and $G$ and their $c$-differential uniformities}
Recall that, when $n$ is even, $h(x)=x+\tr(x^{2^i+1})$ defined in
Lemma \ref{2i+1lemma} is an involution on $\Fbn$. Moreover, it
will turn out that $h$ is PcN when $c\in \F_{2^d}\setminus \F_2$
and is APcN when $c\in \F_{2^n}\setminus \F_{2^d}$, where
$d=\gcd(n,2i)$ is even. However, we will prove the above statement
in a more general setting where both even and odd $n$ are
considered. So, from now on, we assume that $n$ is any positive
integer $>1$.

Let $\alpha \in \Fbn$ and let $i$ be a positive integer. Define a
polynomial
\begin{equation}\label{halpha}
    h_{\alpha}(x) = x + \tr(\alpha x + x^{2^i+1})
\end{equation}
 over $\Fbn$. Note that, if $\alpha=0$, then $h_\alpha(x) =h(x)$.
 %??? We also show that if $c \in \F_{2^d} \setminus \F_2$ then $h_{\alpha}$ is PcN; and if  $c \in \F_{2^n} \setminus \F_{2^d}$
 %then  $h_{\alpha}$ is APcN.
The following lemma gives a condition for which $h_\alpha$ becomes
an involution.
\begin{lemma}\label{halp involution lemma}
 If $\tr(\alpha+1)=0$ (i.e., if $n$ is even and $\tr(\alpha)=0$ or if $n$ is odd and $\tr(\alpha)=1$),
 then $h_\alpha(x)$ is an involution on $\Fbn$.
 %Let $h_\alpha$ be the function defined in \eqref{halpha}.
 %$h_\alpha$ is an involution if $n$ is even and $\tr(\alpha) = 0$; or if $n$ is odd and $\tr(\alpha) =1$.
\end{lemma}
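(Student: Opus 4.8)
The plan is to follow the same strategy as the proof of Lemma~\ref{2i+1lemma}$(ii)$: expand $h_\alpha(h_\alpha(x))$ directly and exploit the fact that the absolute trace takes values in $\F_2$. The key observation is that $t := \tr(\alpha x + x^{2^i+1}) \in \F_2$, so that $h_\alpha(x) = x+t$ with $t^2=t$ and $t^{2^i}=t$, and hence
$$
h_\alpha(h_\alpha(x)) = h_\alpha(x+t) = (x+t) + \tr\!\left(\alpha(x+t) + (x+t)^{2^i+1}\right).
$$
It then suffices to show that the trace term on the right equals $t$, since this yields $h_\alpha(h_\alpha(x)) = x + t + t = x$.

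First I would expand the inner expression. Because $t\in\F_2$, one has $(x+t)^{2^i+1} = x^{2^i+1} + t(x^{2^i}+x) + t$, using $t^{2^i}=t$ and $t^2=t$. Since the absolute trace is invariant under the Frobenius map, $\tr(x^{2^i}) = \tr(x)$, so that $\tr(x^{2^i}+x)=0$; combining this with $\tr(\alpha(x+t)) = \tr(\alpha x) + t\tr(\alpha)$ gives
\begin{equation*}
\tr\!\left(\alpha(x+t) + (x+t)^{2^i+1}\right) = \tr(\alpha x + x^{2^i+1}) + t\big(\tr(\alpha)+\tr(1)\big) = t + t\,\tr(\alpha+1),
\end{equation*}
where I have recognised the first summand as $t$ by the definition of $t$ and written $\tr(\alpha)+\tr(1)=\tr(\alpha+1)$.

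The final step is to invoke the hypothesis $\tr(\alpha+1)=0$, which forces $t\,\tr(\alpha+1)=0$ and hence makes the trace term equal $t$, completing the argument. I do not expect a genuine obstacle here: the computation is elementary bookkeeping, and the only points requiring care are that $t$ is a single bit (so that $t^2=t$ and $t^{2^i}=t$, which is what lets the cross terms collapse) and that the leftover term collected at the end is precisely $t\,\tr(\alpha+1)$. I would also note that this recovers Lemma~\ref{2i+1lemma}$(ii)$ upon setting $\alpha=0$ with $n$ even, since then $\tr(\alpha+1)=\tr(1)=0$.
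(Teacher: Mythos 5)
Your proof is correct and follows essentially the same route as the paper's: both expand $h_\alpha(h_\alpha(x))$ using the fact that $t=\tr(\alpha x+x^{2^i+1})$ is a single bit (so $(x+t)^{2^i+1}=x^{2^i+1}+(x^{2^i}+x+1)t$), and both reduce the residual term to $t\,\tr(\alpha+1)$, which vanishes by hypothesis. No gaps.
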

\begin{proof}
    Using $(x+ \tr(\alpha x + x^{2^i+1}))^{2^i+1} = x^{2^i+1}+ (x^{2^i}+x+1) \tr(\alpha x + x^{2^i+1}) $, one has
    \begin{equation*}
        \begin{split}
            &h_\alpha(h_\alpha(x)) \\
            &= x+ \tr(\alpha x + x^{2^i+1}) +\tr\left (\alpha\left(x+ \tr(\alpha x + x^{2^i+1})\right) + \left(x+ \tr(\alpha x + x^{2^i+1})\right)^{2^i+1} \right )\\
            &= x + \tr(\alpha x + x^{2^i+1})+ \tr\left(\alpha x + \alpha\tr(\alpha x + x^{2^i+1})\right) + \tr\left(x^{2^i+1} + (x^{2^i} + x+ 1)\tr(\alpha x + x^{2^i+1})\right)\\
            &= x + \tr(\alpha x + x^{2^i+1})+ \tr(\alpha x) + \tr(\alpha)\tr(\alpha x + x^{2^i+1}) + \tr(x^{2^i+1}) + \tr(x^{2^i} + x+ 1)\tr(\alpha x + x^{2^i+1})\\
            &= x + \tr(\alpha+x^{2^i}+x+1)\tr(\alpha x + x^{2^i+1})= x + \tr(\alpha+1)\tr(\alpha x + x^{2^i+1})=x.
        \end{split}
    \end{equation*}
%It follows that  $ \tr(\alpha)+ \tr(1) = 0 $ if $n$ is even and $\tr(\alpha) = 0$; or if $n$ is odd and $\tr(\alpha) =1$. Hence the proof is completed.
\end{proof}

%\begin{remark}
%??? Lemma \ref{halp involution lemma} needs only the information
%of $n$, that is the information of $m$ where $n=dm$ is not
%necessary.
%\end{remark}
%We are ready to find the cDU of $h_{\alpha}$. ???

\begin{theorem}\label{thmfinal}
 Let $\alpha \in \Fbn $ with $\tr(\alpha+1)=0$ and  let $d = \gcd(n,2i)$. Then, for the
 involution  $ h_{\alpha}(x) = x + \tr(\alpha x + x^{2^i+1})$,   one has
    the followings.
    \begin{enumerate}[$(i)$]
        \item If $c \in \F_{2^{d}} \setminus \F_2$, then $h_{\alpha}(x)$ is
        PcN on $\Fbn$.
        % one has $_c \Delta _{h_{\alpha}} = 1$.
        \item If $c \in \F_{2^{n}} \setminus \F_{2^{d}}$, then $h_{\alpha}(x)$ is
        APcN on $\Fbn$.
        %one has $_{c} \Delta _{h_{\alpha}} = 2$.
    \end{enumerate}
    In particular, for extreme cases of $d$,
          \begin{enumerate}
        \item[$\ast$]  if $d=n$ {\rm (}e.g., $n=2i${\rm )}, $h_\alpha(x)= x + \tr(\alpha x +
    x^{2^{n/2}+1})$ is PcN for all $c \in \F_{2^{n}} \setminus
    \F_2$.
          \item[$\ast\ast$] if $d=1$ {\rm (}i.e., $\gcd(n, 2i)=1${\rm )}, $h_\alpha(x)= x + \tr(\alpha x +
    x^{2^{i}+1})$ is APcN for all $c \in \F_{2^{n}} \setminus
    \F_2$.
          \end{enumerate}
\end{theorem}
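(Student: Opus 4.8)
The plan is to write the $c$-derivative ${}_cD_a h_\alpha(x) = h_\alpha(x+a) + c\,h_\alpha(x)$ in closed form and count the solutions of ${}_cD_a h_\alpha(x) = b$ by hand. First I note that $h_\alpha$ is a permutation by Lemma~\ref{halp involution lemma}, since it is an involution, so only the value of ${}_c\Delta_{h_\alpha}$ is at issue. Setting $t(x) = \tr(\alpha x + x^{2^i+1}) \in \F_2$, so that $h_\alpha(x) = x + t(x)$, and expanding $(x+a)^{2^i+1}$ together with the identity $\tr(a^{2^i}x + ax^{2^i}) = \tr(\cl(a)x)$ from Lemma~\ref{Tclosinglemma}, I get $t(x+a) = t(x) + t(a) + \tr(\cl(a)x)$. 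Substituting yields
\[
{}_cD_a h_\alpha(x) = (c+1)x + (c+1)t(x) + \tr(\cl(a)x) + a + t(a),
\]
so that ${}_cD_a h_\alpha(x) = b$ is equivalent to $(c+1)x + \tr(\cl(a)x) + (c+1)t(x) = b'$ with $b' = b + a + t(a)$.

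Next I exploit that both nonlinear ingredients $t(x)$ and $\tr(\cl(a)x)$ lie in $\F_2$. For each fixed pair $(\bit_1,\bit_2) \in \F_2^2$ representing $(t(x),\tr(\cl(a)x))$, the equation is linear in $x$ with $c+1 \neq 0$, so it has the single candidate $x_{\bit_1\bit_2} = (b' + (c+1)\bit_1 + \bit_2)/(c+1)$, which solves the problem exactly when the two consistency conditions $t(x_{\bit_1\bit_2}) = \bit_1$ and $\tr(\cl(a)x_{\bit_1\bit_2}) = \bit_2$ hold. The decisive simplification comes from two invariances under $x \mapsto x+1$: using $\tr(\alpha+1) = 0$ one verifies $t(x+1) = t(x)$, and since $\tr(\cl(a)) = \tr(a^{2^i}+a^{2^{-i}}) = 0$ one has $\tr(\cl(a)(x+1)) = \tr(\cl(a)x)$. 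As $x_{10} = x_{00}+1$ and $x_{11} = x_{01}+1$, these invariances force the candidates $(0,0),(1,0)$ to share the value $r := \tr(\cl(a)x_{00})$ and a common $t$-value, so at most one of them is valid and only when $r = 0$; likewise $(0,1),(1,1)$ are governed by $s := \tr(\cl(a)x_{01})$ and require $s = 1$. Hence ${}_c\Delta_{h_\alpha}(a,b) = [\,r=0\,] + [\,s=1\,] \le 2$ for all $a,b$, with $[\,\cdot\,]$ the indicator.

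The problem then collapses onto the single scalar $\delta := \tr(\cl(a)/(c+1))$, since $s = r + \delta$: when $\delta = 0$ the count is $[\,r=0\,]+[\,r=1\,] = 1$ for every $b$, while when $\delta = 1$ it is $2\,[\,r=0\,]$, which attains $2$ for a suitable $b$. Thus $h_\alpha$ is PcN iff $\delta$ vanishes for all $a$, and is APcN iff some $a$ gives $\delta = 1$. Rewriting $\delta = \tr\bigl(a\,((c+1)^{-2^{-i}} + (c+1)^{-2^i})\bigr)$ by Frobenius-invariance of the trace and using nondegeneracy of the trace pairing, $\delta$ vanishes for all $a$ iff $(c+1)^{2^{2i}} = c+1$, that is iff $c+1 \in \F_{2^{\gcd(2i,n)}} = \F_{2^d}$, i.e. iff $c \in \F_{2^d}$. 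This dichotomy is exactly assertions $(i)$ and $(ii)$, and the two extreme cases follow at once because $\F_{2^n}\setminus\F_{2^d} = \emptyset$ when $d=n$ and $\F_{2^d}\setminus\F_2 = \emptyset$ when $d=1$.

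I expect the main obstacle to be the middle reduction rather than either endpoint. The presence of $t(x)$ makes the derivative equation genuinely nonlinear, and the whole argument hinges on case-splitting over its $\F_2$-value and then using the $x\mapsto x+1$ invariance --- precisely the place where the hypothesis $\tr(\alpha+1)=0$ enters --- to collapse the four potential preimages into the single governing trace $\delta$. Once that bookkeeping is organized, identifying the condition ``$\delta \equiv 0$'' with the field-membership criterion $c \in \F_{2^d}$ is a routine trace-duality and Frobenius-fixed-field computation.
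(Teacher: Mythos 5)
Your proof is correct and follows essentially the same route as the paper: both start from the closed form ${}_cD_ah_\alpha(x)=(c+1)h_\alpha(x)+\tr(\cl(a)x)+\tr(\alpha a+a^{2^i+1})+a$, reduce the solution count to the vanishing of a single trace linear form in $a$, and finish with $\ker\cl=\F_{2^d}$. The only cosmetic difference is that you enumerate four linear candidates indexed by $(t(x),\tr(\cl(a)x))$ and pair them via the $x\mapsto x+1$ invariance, whereas the paper partitions $\Fbn$ into the two sets $V_{a,\alpha}$ and $W_{a,\alpha}$ and uses injectivity of $h_\alpha$ on each; your criterion $\tr\bigl(a\,\cl(1/(c+1))\bigr)$ coincides with the paper's $\tr\bigl(a\,\cl(h_\alpha(1/(c+1)))\bigr)$ because $\cl$ annihilates $\F_2$.
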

\begin{proof}
    For $c \notin \F_2$, one has
    \begin{equation}\label{eqn20}
        \begin{split}
            {}_c D_ah_{\alpha}(x) &= (x+a) + \tr(\alpha(x+a) + (x+a)^{2^i+1}) +c( x +\tr(\alpha x + x^{2^i+1}))\\
            &= (1+c)(x+\tr(\alpha x + x^{2^i+1})) + \tr((x+a)^{2^i+1} + x^{2^i+1})+a +\tr(\alpha a)\\
            &= (1+c)h_{\alpha}(x) +\left ( \tr(\cl(a)x) + \tr(\alpha a +a^{2^i+1})\right
            )+a,
        \end{split}
    \end{equation}
    where $\cl(a)=a^{2^i}+a^{2^{-i}}$ was introduced in Lemma
    $\ref{2i+1lemma}$.  In a similar way as in the equation \eqref{eqn3},
    we have
        $\Fbn = V_{a,\alpha} \cup W_{a,\alpha}$
    where
      \begin{equation}\label{eqnVWalp}
        \begin{split}
      V_{a,\alpha} & = \{x \in \Fbn : \tr(\cl(a)x) = \tr(\alpha a+ a^{2^i+1})\}, \\
       W_{a,\alpha} & = \{x \in \Fbn : \tr(\cl(a)x) =
    \tr(\alpha a+a^{2^i+1})+1\}.
       \end{split}
    \end{equation}
     From \eqref{eqn20}, one has
    \begin{equation}\label{main function3}
        {}_c D_ah_{\alpha}(x) = \begin{cases}  (c+1)h_{\alpha}(x) +a &\text{ if }  x \in V_{a,\alpha}\\
            (c+1)h_{\alpha}(x) +1+a &\text{ if }  x \in W_{a,\alpha}
        \end{cases}
    \end{equation}
    and thus ${}_c D_ah_{\alpha}$ is one to one on each subset $V_{a,\alpha}$ and
    $W_{a,\alpha}$ since $h_{\alpha}$ is one to one. From this observation, one easily gets
    $_{c} \Delta _{h_{\alpha}}\leq 2$ because
      \begin{align*}
      _{c} \Delta _{h_{\alpha}}(a,b) &= \#\{x\in \Fbn : {}_c D_a{h_{\alpha}}(x)=b\} \\
             &= \#\{x\in V_{a,\alpha} : {}_c D_ah_{\alpha}(x)=b \}+\#\{x\in W_{a,\alpha} : {}_c
D_ah_{\alpha}(x)=b \}\leq
    2.
       \end{align*}
   % $$_{c} \Delta _{h_{\alpha}}(a,b) = \#\{x\in \Fbn : {}_c D_a{h_{\alpha}}(x)=b\}=
   % \#\{x\in V_a : {}_c D_ah_{\alpha}(x)=b \}+\#\{x\in W_a : {}_c D_ah_{\alpha}(x)=b \}\leq
   % 2.$$
    Therefore, if there exist $x\neq y \in \Fbn$ satisfying
    \begin{equation}\label{eqn19-2}
        {}_c D_a{h_{\alpha}}(x) =  {}_c D_ah_{\alpha}(y),
    \end{equation}
    then one has either  $x\in V_{a,\alpha}, y \in W_{a,\alpha}$ or $x\in W_{a,\alpha}, y \in
    V_{a,\alpha}$.
    % If $x,y \in V_a$ or $x,y \in W_a$ then \eqref{eqn19} becomes
    %$ (c+1)h(x)  =  (c+1)h(y)$ and it also induces $x = y$. Therefore
    %\eqref{eqn19} has trivial solution when $x \in V_a, y \in W_a$ or
    %$x \in W_a, y \in V_a$.
    We now suppose that $x \in V_{a,\alpha}$ and  $y \in W_{a,\alpha}$ without loss of
    generality. Then, letting $J = (\cl(a)x + \alpha a +a^{2^i+1}) + (\cl(a)y
    +\alpha a +a^{2^i+1})$, one gets $\tr(J)=0+1=1$. Also, the equation
    $\eqref{eqn19-2}$ becomes
    \begin{equation*}%\label{eqn19-2}
        (c+1)h_{\alpha}(x) = (c+1)h_{\alpha}(y)+ 1.
    \end{equation*}
    Since $h_{\alpha}$ is an involution,  letting $c'=\frac{1}{c+1}$, the above
    equation is rewritten as
    \begin{equation*}
        \begin{split}
            h_{\alpha}(y) &= h_{\alpha}(x) + c' \,\,\   \,\,\,\\
            \Leftrightarrow      y &= h_{\alpha}\left(h_{\alpha}(x) + c'\right) \\
            &= h_{\alpha}(x) + c'+ \tr\left(\alpha\left(h_{\alpha}(x) + c'\right) +\left(h_{\alpha}(x) + c'\right)^{2^i+1} \right)\\
            &= h_{\alpha}(x)  + \tr\left( \alpha h_{\alpha}(x) + h_{\alpha}(x)^{2^i+1} \right) + \tr\left(h_{\alpha}(x)^{2^i}c' + h_{\alpha}(x){c'^{2^i}} + c'^{2^i+1} \right) + c' + \tr(\alpha c')\\
            &=  h_{\alpha}(h_{\alpha}(x))  + \epsilon + c' + \tr\left( \alpha c'  + c'^{2^i+1} \right)  = x + \epsilon +
            h_{\alpha}\left(c'\right),
        \end{split}
    \end{equation*}
    where $\epsilon = \tr\left(h_\alpha(x)^{2^i}c' + h_\alpha(x)c'^{2^i} \right)$.
    Therefore one has  $x + y = \epsilon + h_{\alpha}\left(c'\right)$, where $x
    \in V_{a,\alpha}$ and  $y \in W_{a,\alpha}$ satisfy \eqref{eqn19-2}. Then, using
    $\tr(J)=1$ and $x+y=\epsilon + h_{\alpha}\left(c'\right)$, one has
    \begin{equation*}\begin{split}
            1=\tr(J) = \tr(\cl(a) (x+y)) &= \tr\left (a^{2^i} (x+y)+ a (x+y)^{2^i}\right ) \\
            &= \tr\left (a^{2^i}( \epsilon + h_{\alpha}(c')) + a (\epsilon + h_{\alpha}(c'))^{2^i}\right )\\
            &= \tr\left ( a^{2^i} h_{\alpha}(c') + a  h_{\alpha}(c')^{2^i} + a^{2^i}\epsilon+ a \epsilon^{2^i}\right )\\
            &=   \tr\left (
            \cl(h_{\alpha}(c'))a \right ) \,\,\, (\because \epsilon \in \F_2).
        \end{split}
    \end{equation*}
    Now, for the claim $(i)$,  suppose that $c \in \F_{2^{d}}
    \setminus \F_2$ where $d=\gcd(n,2i)$.
     Then it holds that $h_{\alpha}(c') = c' + \tr(\alpha c' + c'^{2^i+1})
    \in \F_{2^{d}}$. Since $\ker \cl =\F_{2^d}$ by Lemma \ref{2i+1lemma} $(iii)$, one
    has $\cl(h_{\alpha}(c')) = 0$ and consequently $\tr(J) = 0$, which is a
    contradiction. Thus, there exist no $x\neq y$ satisfying
    $\eqref{eqn19-2}$ if $c\in \F_{2^d}\setminus \F_2$,  and one
    concludes that ${}_cD_ah_{\alpha}$ is a permutation for every $a\in \Fbn$.
    For the claim $(ii)$, suppose that $c \in \F_{2^{n}} \setminus
    \F_{2^{d}}$. (So $c' \in \Fbn \setminus \F_{2^d}$ and $h_\alpha(c')\in \Fbn \setminus \F_{2^d}$.)
     Since $\cl(h_{\alpha}(c'))\neq 0$ again by Lemma
    \ref{2i+1lemma} $(iii)$, there exists $a \in \Fbn$ satisfying
    $\tr(J) = \tr\left ( \cl(h_{\alpha}(c'))a \right ) = 1$. Hence for such $a$
    and given $x \in V_{a,\alpha}$, the equation \eqref{eqn19-2} has two
    solutions $x \neq y$ with $y \overset{\rm{def}}{=} x + \epsilon +
    h_{\alpha}\left(c'\right) \in W_{a,\alpha}$. (Note that $\epsilon + h_{\alpha}\left(c'\right) \neq 0$
    because $ h_{\alpha}\left(c'\right) \notin \F_{2^{d}}$.)
\end{proof}

%We first show that $h$ is PcN (for $c\in \F_{2^d}\setminus \F_2$)
%and APcN (for $c\in \Fbn \setminus \F_{2^d}$) on $\Fbn$.

\begin{remark}
Theorem \ref{thmfinal} generalizes Theorem 4 in \cite{HPS+22},
where Hasan, Pal and St\v{a}nic\v{a} showed the followings: If
$n>1$ is odd and  $\tr(\alpha)=1$ with $\gcd(n,i)=1$, then one has
$_c \Delta _{h_\alpha} \leq 2$ and $c$-DDT entry at $(a,b)$ is
given by
\begin{equation*}
_c \Delta _{h_\alpha}(a,b)  = \begin{cases}
0  &\text{ if }  A=1 \text{ and } B=1 \\
1 &\text{ if }   B=0  \\
 2 &\text{ if }  A=0 \text{ and } B=1,
\end{cases}
\end{equation*}
where $ A=\tr\left( \frac{(a+b)(a^{2^i}+a^{2^{-i}})}{1+c}+\alpha
a+a^{2^i+1}\right)$ and $B=\tr\left(
\frac{a^{2^i}+a^{2^{-i}}}{1+c} \right). $ The assumption that $n$
is odd and $\gcd(n,i)=1$  in \cite{HPS+22} is a special case of
our Theorem \ref{thmfinal} where $d=\gcd(n,2i)=1$. According to
Theorem \ref{thmfinal},  one always has $_c \Delta _{h_\alpha}=2$
for $c\in \Fbn\setminus \F_2$ when $d=1$, i.e., When $d=1$,
$h_\alpha$ cannot be PcN. In other words, there must exist $a,
b\in \Fbn$ satisfying $A=0$ and $B=1$.

\end{remark}

\noindent Letting $\alpha = 0$ and $n=\textrm{even}$ (i.e.,
$\tr(\alpha+1)=\tr(1)=0)$, one also has the following corollary.
\begin{corollary}
    Let $d = \gcd(n,2i)$ with $n$ even and $h(x) = x + \tr(x^{2^i+1})$. Then
    \begin{enumerate}[$(i)$]
        \item If $c \in \F_{2^{d}} \setminus \F_2$ then one has $_c \Delta _{h} = 1$.
        \item If $c \in \F_{2^{n}} \setminus \F_{2^{d}}$ then one has $_{c} \Delta _{h} = 2$.
    \end{enumerate}
\end{corollary}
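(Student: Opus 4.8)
The plan is to recognize that this corollary is precisely the specialization $\alpha = 0$ of Theorem \ref{thmfinal}, so no genuinely new argument is required; the entire content has already been established, and all that remains is to confirm that the hypotheses of Theorem \ref{thmfinal} are met in this setting. Setting $\alpha = 0$ in $h_\alpha(x) = x + \tr(\alpha x + x^{2^i+1})$ recovers exactly $h(x) = x + \tr(x^{2^i+1})$, so $h = h_0$, and the parameter $d = \gcd(n,2i)$ is the same in both statements.

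First I would verify the standing trace hypothesis $\tr(\alpha+1) = 0$ required by Theorem \ref{thmfinal}. With $\alpha = 0$ this becomes $\tr(1) = 0$. Since in characteristic two $\tr(1) = \tr^n_1(1)$ equals $n$ reduced modulo $2$, the assumption that $n$ is even forces $\tr(1) = 0$, as needed. This also re-confirms, via Lemma \ref{halp involution lemma}, that $h$ is an involution. With the hypothesis in hand, both parts of Theorem \ref{thmfinal} apply verbatim to $h_0 = h$.

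The two claims then follow immediately. For $(i)$, part $(i)$ of Theorem \ref{thmfinal} gives that $h$ is PcN for every $c \in \F_{2^{d}} \setminus \F_2$, which is exactly $_c \Delta _h = 1$. For $(ii)$, part $(ii)$ of Theorem \ref{thmfinal} gives that $h$ is APcN for every $c \in \F_{2^{n}} \setminus \F_{2^{d}}$, which is exactly $_{c} \Delta _h = 2$. There is no real obstacle here: the substantive work---the involution structure, the reduction of $_cD_a h$ to a two-piece affine map on $V_{a,\alpha} \cup W_{a,\alpha}$, and the kernel computation $\ker \cl = \F_{2^d}$ that separates the PcN and APcN regimes---has all been carried out in the proof of Theorem \ref{thmfinal}. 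The only point one must not overlook is the parity check $\tr(1) = 0$, without which the $\alpha = 0$ case would fail to inherit the involution property and the theorem could not be invoked.
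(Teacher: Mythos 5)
Your proposal is correct and is exactly the paper's argument: the paper derives this corollary by setting $\alpha=0$ and noting that $n$ even gives $\tr(\alpha+1)=\tr(1)=0$, so Theorem \ref{thmfinal} applies directly. Your extra check that $\tr(1)=0$ underpins the involution property via Lemma \ref{halp involution lemma} matches the paper's parenthetical remark.
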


\noindent Using $h_\alpha$, one may also define $G_\alpha$ as
$$
G_\alpha(x)=g\circ h_\alpha(x)=\left(x+\tr(\alpha x
+x^{2^i+1})\right)^{2^i+1}.
$$
Then, under the same conditions of Theorem \ref{thmfinal} and
Lemma \ref{2i+1lemma} $(i)$, one can prove in a similar way as in
Section \ref{sec3} that $G_\alpha$ is also an APcN permutation on
$\Fbn$. That is, using the following modifications,

\begin{enumerate}[(a)]
\item $T \rightarrow T_\alpha \overset{\rm{def}}{=}\{x\in \Fbn :
\tr(\alpha x+x^{2^i+1})=0\}$ in Lemma \ref{Tclosinglemma},
 \item $\tr(x^{2^i+1})\rightarrow
\tr(\alpha x+x^{2^i+1})$ in Lemma \ref{Tclosinglemma} and in the
definitions of $S_{\bit_1\bit_2}, S'_{\bit_1\bit_2}$ in eq.
\eqref{Sabdefine},
 \item  $ V_a \rightarrow V_{a,\alpha}, \,\,\,
   W_a \rightarrow W_{a,\alpha}$ in eq. \eqref{VaWadef},
\end{enumerate}

\noindent one has the following theorem.

\begin{theorem}\label{maintheorem22}
   % Let $d=\gcd(2i,n)=\gcd(i,n)<n$ and let $\alpha \in \Fbn$ with $\tr(\alpha+1)=0$.
   % Then, $G_\alpha(x)=\left(x+\tr(\alpha x +x^{2^i+1})\right)^{2^i+1}$
   % is an \text{APcN} permutation on $\Fbn$ for all $c\in \F_{2^{d}}\setminus\F_2$.
   %

%     $G_\alpha(x) = \left(x+
%\tr(\alpha x + x^{2^i+1})\right)^{2^i+1}$ is an APcN permutation
%on $\Fbn$ for $c\in \F_{2^d}\setminus \F_2$, where
%$d=\gcd(2i,n)=\gcd(i,n)<n$ and $\tr(\alpha+1)=0$.

{\rm (}Generalization of Theorem \ref{maintheorem}{\rm )} Let
$\alpha \in \Fbn$ with $\tr(\alpha+1)=0$. Suppose that
$d=\gcd(2i,n)=\gcd(i,n)<n$. {\rm (}i.e., $\frac{n}{d}>1$ is odd
with $d=\gcd(i,n)$.{\rm )}  Then,  for $c\in \F_{2^d}\setminus
\F_2$, $G_\alpha(x) = \left(x+ \tr(\alpha x +
x^{2^i+1})\right)^{2^i+1}$ is an APcN permutation on $\Fbn$.
\end{theorem}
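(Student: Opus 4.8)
The plan is to mirror the three steps of Section~\ref{sec3} for the perturbation $G_\alpha=g\circ h_\alpha$, performing the substitutions (a)--(c) above. First I would record that $G_\alpha$ is a permutation: by Lemma~\ref{halp involution lemma} the map $h_\alpha$ is an involution (this is exactly where $\tr(\alpha+1)=0$ is used), and by Lemma~\ref{2i+1lemma}$(i)$ the monomial $g(x)=x^{2^i+1}$ is a permutation since $\gcd(2i,n)=\gcd(i,n)$; hence $G_\alpha=g\circ h_\alpha$ is a permutation.

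For ${}_c\Delta_{G_\alpha}\le 2$ I would replay Section~\ref{sec31} with $T_\alpha=\{x:\tr(\alpha x+x^{2^i+1})=0\}$ in place of $T$. Almost nothing changes: expanding $\tr(\alpha(x+a)+(x+a)^{2^i+1})$ gives $\tr(\alpha x+x^{2^i+1})+\tr(\alpha a+a^{2^i+1})+\tr(\cl(a)x)$, so the criterion for $x+a\in T_\alpha$ has the same shape as Lemma~\ref{Tclosinglemma}, with the extra constant $\tr(\alpha a)$ absorbed into the dichotomy $a\in T_\alpha$ versus $a\notin T_\alpha$. Proposition~\ref{onesolutionprop} is untouched because it concerns only the monomial $g$, so each modified set $S_{\bit_1\bit_2}$, $S'_{\bit_1\bit_2}$ still has at most one element. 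The single genuinely new point is the analogue of Lemma~\ref{threeimpossiblelemma}, whose proof needs $x_0\in T_\alpha\Leftrightarrow x_0+1\in T_\alpha$; a direct computation gives $\tr(\alpha(x+1)+(x+1)^{2^i+1})=\tr(\alpha x+x^{2^i+1})+\tr(\alpha+1)$, which collapses to $\tr(\alpha x+x^{2^i+1})$ precisely because $\tr(\alpha+1)=0$. This yields ${}_c\Delta_{G_\alpha}\le 2$.

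For the lower bound I would exploit the involution $h_\alpha$ as in Proposition~\ref{geq2prop}. Since $G_\alpha=g\circ h_\alpha$, the map $({}_cD_aG_\alpha)\circ h_\alpha$ equals $f_a$ on $V_{a,\alpha}$ and $f_{a+1}$ on $W_{a,\alpha}$, with $V_{a,\alpha},W_{a,\alpha}$ as in \eqref{eqnVWalp}; hence ${}_cD_aG_\alpha$ fails to be injective whenever there exist $x\in V_{a,\alpha}$ and $y\in W_{a,\alpha}$ with $f_a(x)=f_{a+1}(y)$. Taking absolute traces in $f_a(x)=f_{a+1}(y)$ produces the $\alpha$-free identity $\tr(\cl(a)(x+y))=\tr((c+1)(x^{2^i+1}+y^{2^i+1}))+\tr(1)$, so placing $x,y$ in opposite parts amounts to finding $a,x,y$ with $f_a(x)=f_{a+1}(y)$ and $\tr((c+1)(x^{2^i+1}+y^{2^i+1}))=1+\tr(1)$.

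The main obstacle is producing such a triple, and here the parity of $n$ intervenes. When $n$ is even one has $\tr(1)=0$, the required value is $1$, and this is exactly the conclusion of Proposition~\ref{Jprop}; the even case is therefore immediate and recovers Theorem~\ref{maintheorem} at $\alpha=0$. When $n$ is odd (so $\tr(\alpha)=1$ and $d=\gcd(i,n)$ is odd) the target flips to $0$, and Proposition~\ref{Jprop}---whose proof genuinely uses $d$ even through Lemmas~\ref{trnrlemma}--\ref{3trlemma}---does not apply verbatim. I would therefore establish a companion to Proposition~\ref{Jprop} valid for odd $d$, re-running the dual-basis construction of Lemma~\ref{3trlemma} but now arranging $\tr((c+1)(x^{2^i+1}+y^{2^i+1}))=0$; the delicate points are the linear-independence analysis of $\{1/c,\,1/w^{2^i+1},\,w+w^{2^i}\}$ over $\F_{2^d}$ and the trace/norm realizability of Lemma~\ref{trnrlemma} when $d$ is odd. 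Granting this companion, the argument of Proposition~\ref{geq2prop} carries over unchanged to force ${}_c\Delta_{G_\alpha}\ge 2$, completing the proof that $G_\alpha$ is APcN.
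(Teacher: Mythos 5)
Your proposal takes the same route as the paper: the paper's entire ``proof'' of Theorem \ref{maintheorem22} is the remark preceding it, which lists the substitutions (a)--(c) and asserts that the argument of Section \ref{sec3} carries over. Your execution of the upper bound ${}_c\Delta_{G_\alpha}\le 2$ is correct and is exactly what the paper intends: the identity $\tr(\alpha(x+1)+(x+1)^{2^i+1})=\tr(\alpha x+x^{2^i+1})+\tr(\alpha+1)$ is indeed the only point where the hypothesis $\tr(\alpha+1)=0$ enters the analogue of Lemma \ref{threeimpossiblelemma}, Proposition \ref{onesolutionprop} is untouched, and your $\alpha$-free criterion $\tr\left((c+1)(x^{2^i+1}+y^{2^i+1})\right)=1+\tr(1)$ for the lower bound is also correct.

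Where you go beyond the paper is in observing that the theorem as stated admits odd $n$ (hence odd $d$ and $\tr(\alpha)=1$), whereas Proposition \ref{Jprop} and Lemmas \ref{trnrlemma}--\ref{3trlemma} are proved only for $n$ and $d$ even; this is a real issue with the paper's sketch, not with your reading of it. Your proposal leaves the odd-$d$ companion of Proposition \ref{Jprop} unproven, so strictly speaking your argument (like the paper's) is complete only for $n$ even. Two observations would let you close most of it. First, for $m>3$ Case 1 of Lemma \ref{3trlemma} nowhere uses the parity of $d$, and in the final trace computation of Proposition \ref{Jprop} the contribution $\tr^d_1\left(\tfrac{1}{c+1}\tr^n_d(z+1)\right)=\tr^d_1(1)$ equals $1$ when $d$ is odd rather than $0$, so the computation outputs $\tr\left(\tfrac{z(w+w^{2^i})}{c(c+1)}\right)+1=0$, which is precisely your target value $1+\tr(1)$ for $n$ odd; thus for $m>3$ the existing machinery already produces the companion with no new ideas. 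The genuinely open spot is $m=3$ with $d$ odd: there Lemma \ref{trnrlemma} degenerates, since $3\nmid 2^d-1$ makes every element of $\F_{2^d}^{\times}$ a cube, so one needs a different argument to produce $w\in\Fbn\setminus\F_{2^d}$ with $\tr^n_d(w)=0$ and $\nr^n_d(w)=c'$ (equivalently, an irreducible cubic $x^3+\widetilde{c}x+c'$ over $\F_{2^d}$ with $d$ odd, for which Proposition \ref{qubicequationprop} as quoted does not apply). Neither you nor the paper supplies this step, so you should either prove it or restrict the theorem to $n$ even.
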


\begin{remark} Unlike Theorem \ref{inversemaintheorem} where we
showed that $G^{-1}$ is also APcN, we only find that $_c
\Delta_{G_\alpha^{-1}}\leq 4$  and $G_\alpha^{-1}$ is not APcN in
general when $\alpha\neq 0$. We briefly explain why the
compositional inverse of $G_{\alpha}$ may not be APcN. Since
$G_{\alpha}^{-1}(x) =h_\alpha^{-1}\circ g^{-1}(x)=h_\alpha\circ
g^{-1}(x) =h_{\alpha}(x^{u})$ with $u \equiv (2^i+1)^{-1}
\pmod{2^n-1}$, one has
$$G_{\alpha}^{-1}(x) = x^{u} + \tr(x + \alpha x^{u} ).$$
Then we have
\begin{equation*}
    \begin{split}
        G_{\alpha}^{-1}(x+a) + c G_{\alpha}^{-1}(x) = b \,\, &\Leftrightarrow \,\, (x+a)^u +\tr(x+a+\alpha(x+a)^u) + cx^u + c\tr(x+\alpha x^u) = b \\
        \,\,  &\Leftrightarrow \,\, (x+a)^u +  cx^u  = \tr(x+a+\alpha(x+a)^u) + c\tr(x+\alpha x^u) + b  \\
        %  \,\, &\Leftrightarrow \,\, (x+a)^u +  cx^u  = (c+1)\tr(x)+ \tr\left(\alpha\left((x+a)^u+cx^u\right)\right) + \tr(a)+ b  \\
    \end{split}
\end{equation*}
and thus, by letting $f_1(x)= \tr(x+a+\alpha(x+a)^u)$ and
$f_2(x)=\tr(x+\alpha x^u)$, one has
\begin{equation*}
    \begin{split}
        _c \Delta_{G_\alpha^{-1}}(a,b) =\,  &\# \{x\in \Fbn : (x+a)^u +  cx^u  = b, \ f_1(x) = 0, f_2(x) = 0 \} \\
        &+ \# \{ x \in \Fbn: (x+a)^u +  cx^u  = c+b,  \ f_1(x) = 0, f_2(x) = 1\} \\
         &+ \# \{ x \in \Fbn: (x+a)^u +  cx^u  = 1+b,  \ f_1(x) = 1, f_2(x) = 0\}\\
          &+ \# \{ x \in \Fbn: (x+a)^u +  cx^u  =1+ c+b,  \ f_1(x) = 1, f_2(x) = 1\}\leq
          4,
    \end{split}
\end{equation*}
where  $_c \Delta_{x^u}=1$ is used in the last inequality.
Therefore unlike the equation \eqref{eqn13} where $\alpha=0$, we
cannot guarantee that $G_\alpha^{-1}$ is APcN. Sage experiment
shows that $3\leq {}_c \Delta_{G_\alpha^{-1}}\leq 4$ when
$\alpha\neq 0$.
\end{remark}

\begin{remark}\label{pante}
    Very recently, in \cite{LRS22}, Li, Riera and St\v{a}nic\v{a} showed that  $_{c} \Delta _{H_k}\leq 2$  for $c
    \notin \F_2$,
    where $H_k(x) = L(x) + \tr(x^{2^k+1})$
    and $L(x)$ is a $2$-linearized permutation with $L(1)=1$.
    %  The function $h(x) = x + \tr(x^{2^i+1})$ is sub-case of this form, but The cDU of $h$ is completely classified.
\end{remark}

\section{Conclusion}
Permutations are very useful in the design of invertible S-boxes.
However, not many classes of APcN permutations are reported so
far, and most of known APcN permutations are monomials. In this
paper, we presented new  APcN permutations on $\Fbn$,
$$G(x) = \left(x+ \tr(x^{2^i+1})\right)^{2^i+1}, \quad G^{-1}(x)
= x^u+ \tr(x) \,\,\left(G^{-1}(x)^2 = x^\frac{2^{im}+1}{2^i+1}+
\tr(x)\right)$$ for all  $c\in \F_{2^d}\setminus \F_2$, where
$n=dm, i=d\ell, \gcd(n,i)=d,$ $d$ is even and $m>1$ is odd. Though
APcN property is not preserved under inversion in general, we
proved that both $G$ and $G^{-1}$ are APcN permutations for $c\in
\F_{2^d}\setminus \F_2$. We also proved that, if $\tr(\alpha
+1)=0$ with  $d=\gcd(n,2i)$, the involution
$h_\alpha(x)=x+\tr(\alpha x+ x^{2^i+1})$ is PcN for $c\in
\F_{2^d}\setminus \F_2$, and is APcN for $c\in \F_{2^n}\setminus
\F_{2^d}$. Moreover, we showed that $G_\alpha(x) = \left(x+
\tr(\alpha x + x^{2^i+1})\right)^{2^i+1}$ is APcN  for  $c\in
\F_{2^d}\setminus \F_2$, where $d=\gcd(2i,n)=\gcd(i,n)<n$ and
$\tr(\alpha+1)=0$.

\begin{appendix}

    \section{Proof of Lemma \ref{trnrlemma} $(i)$}\label{appendix}
    Let $d = 2^st$ where $s \geq1$ and $t$ is odd.  We use induction on $s$. When
    $s=1$ (i.e., $d = 2t$), any element of $\F_{2^{d}}^{\times}$ is uniquely written as
    $ux$, where $u \in U \overset{\textrm{def}}{=} \{u \in \F_{2^{2t}} : u^{2^t+1}=1\}$
    and $x \in  \F_{2^{t}}^{\times}$ because $\gcd(\# U, \#
    \F_{2^d}^\times)=\gcd(2^t+1,2^t-1)=1$.
     Since $3 \nmid 2^t - 1$, it holds that $\F_{2^{t}}^{\times} \subset H=\{x^3 : x\in \F_{2^d}^\times\}$,$\text{ so that }$
     \begin{equation*}
        \zeta \F_{2^{t}}^\times + \zeta^2 \F_{2^{t}}^\times \subset \zeta H + \zeta^2 H.
    \end{equation*} By the assumption  $\zeta \notin H$,
    the fact $\F_{2^{t}}^{\times} \subset H$ implies that $\zeta \notin \F_{2^{t}}$.
     Then $\{\zeta,\zeta^2\}$ is a basis for the vector space $\F_{2^{d}}$ over $\F_{2^{t}}$.
     Since $\#\left(    \zeta \F_{2^{t}}^\times + \zeta^2 \F_{2^{t}}^\times\right) = (2^t-1)^2 = (2^{2t}-1)-2(2^t-1)
      = \#\F_{2^{2t}}^\times - 2(2^t-1) $ and
      since $\F_{2^{2t}}^\times= \left( \zeta \F_{2^{t}}^\times + \zeta^2 \F_{2^{t}}^\times\right) \cup
       \zeta \F_{2^{t}}^\times \cup \zeta^2 \F_{2^{t}}^\times$ is
       a disjoint union,
       to show $\F_{2^{d}}^\times = \zeta H + \zeta^2 H$,
      we need to show that $ \zeta \F_{2^{t}}^\times \subset \zeta H + \zeta^2 H$ and
       $ \zeta^2 \F_{2^{t}}^\times \subset \zeta H + \zeta^2 H$. We choose any $u^3 \in U_3
       \overset{\textrm{def}}{=} \{u^3 : u \in U\} \subset H.$
       Then $\{\zeta u^3 , \zeta^2 u^3\}$ is also a basis for $\F_{2^{2t}}$ over $\F_{2^{t}}$. Also it holds that
         $\zeta \F_{2^{t}}^\times \cap  \zeta u^3 \F_{2^{t}}^\times =\emptyset$ and
         $\zeta^2 \F_{2^{t}}^\times \cap  \zeta u^3 \F_{2^{t}}^\times=\emptyset$ because
         $u\notin \F_{2^{t}}^\times$ and $\zeta \notin H$. Therefore we can obtain
\begin{equation*}
        \zeta\F_{2^{t}}^\times \subset \zeta u^3 \F_{2^{t}}^\times + \zeta^2 u^3 \F_{2^{t}}^\times \subset   \zeta u^3 H + \zeta^2 u^3 H
        = \zeta  H + \zeta^2 H,  \,\,\, \textrm{ and similarly }
        \,\,  \zeta^2\F_{2^{t}}^\times \subset \zeta  H + \zeta^2 H
        %\zeta^2\F_{2^{t}}^\times &\subset \zeta u^3 H + \zeta^2 u^3 H = \zeta  H + \zeta^2 H,  \\
\end{equation*}
which implies $\F_{2^{2t}}^{\times} = \zeta  H + \zeta^2 H$.

Now suppose that the lemma is true for all $j$ with $1\leq j \leq
s$ where $d = 2^st$ with $t$ odd. We now show that the lemma is
also true for $\F_{2^{2^{s+1}t}} = \F_{2^{2d}}$. Since
$\gcd(2^d-1,2^d+1) = 1$, one also has
 $\F_{2^{2d}}^{\times} = \F_{2^{d}}^{\times} \cdot U  \overset{\textrm{def}}{=} \{xu : x\in \F_{2^d}^\times, u \in U \}$
 where $U =
\{ u \in \F_{2^{2d}} : u^{2^d+1} = 1\}.$ Let $H = \{x^3 : x \in
\F_{2^{2d}}^\times \}$ and $H' = \{x^3 : x \in \F_{2^{d}}^\times
\}$.  Since $d$ is even (i.e., $3 \nmid 2^d+1$), one has $U =U_3=
\{u^3 : u \in U\}$ and thus $H = H'\cdot U$. Therefore any $\zeta
\notin H = H'\cdot U$ is uniquely written as $ \zeta = \zeta'u$
where $\zeta' \in \F_{2^{d}}^\times \setminus H'$ and $u \in U$.
By induction hypothesis, one has $\F_{2^{d}}^\times = \zeta' H' +
\zeta'^2 H' $, and
\begin{equation*}
    \begin{split}
        \F_{2^{2d}}^\times = \F_{2^{d}}^\times \cdot U &= ( \zeta' H' + \zeta'^2 H') \cdot U \\
        &=\zeta' H'U + \zeta'^2 H'U \\
        &= \zeta'u H'U + \zeta'^2u^2 H'U = \zeta H + \zeta^2H,
    \end{split}
\end{equation*}
hence the proof is complete. \qed
\end{appendix}

\end{document}